\definecolor{newColor}{RGB}{152,63,16}
\newcommand{\neutralize}[1]{\expandafter\let\csname c@#1\endcsname\count@}
\newcommand{\appendixcontents}{%
  \section*{Appendix Contents}           % change to \chapter* if you use report class
  \@starttoc{apc}%
}
\newcommand{\appsection}[1]{%
  \section{#1}%                           normal section (writes to main .toc)
  \addcontentsline{apc}{section}{%        also write to appendix .apc
    \protect\numberline{\thesection}#1}%
}
\newcommand{\appsubsection}[1]{%
  \subsection{#1}%
  \addcontentsline{apc}{subsection}{%
    \protect\numberline{\thesubsection}#1}%
}
\theoremstyle{definition}
\newtheorem{thm}{Theorem}
\newtheorem{thmapp}{Theorem}
\newtheorem{pro}{Proposition}
\numberwithin{pro}{section}
\newtheorem{Cor}{Corollary}
\numberwithin{cor}{section}
\newtheorem{lem}{Lemma}
\numberwithin{lem}{section}
\numberwithin{conj}{section}
\newtheorem{defi}{Definition}
\newtheorem{cas}{Case}
\newenvironment{casbis}[1]
  {%
   \addtocounter{cas}{-1}%
   \begin{cas}}
  {\end{cas}}
\begin{document}
\begin{CJK*}{UTF8}{gbsn}
\preprint{APS/123-QEd}

\title{Universal energy-space localization and stable quantum phases against time-dependent perturbations}% Force line breaks with \\

\author{Hongye Yu (余泓烨)}
\affiliation{C. N. Yang Institute for Theoretical Physics,
State University of New York at
Stony Brook, Stony Brook, NY 11794-3840, USA}
\affiliation{Department of Physics and Astronomy, State University of New York at
Stony Brook, Stony Brook, NY 11794-3800, USA}
\author{Tzu-Chieh Wei (\CJKfamily{bsmi}魏子傑)}
%\email{tzu-chieh.wei@stonybrook.edu}
\affiliation{C. N. Yang Institute for Theoretical Physics,
State University of New York at
Stony Brook, Stony Brook, NY 11794-3840, USA}
\affiliation{Department of Physics and Astronomy, State University of New York at
Stony Brook, Stony Brook, NY 11794-3800, USA}

%\date{\today}% It is always \today, today,
             %  but any date may be explicitly specified

\begin{abstract} 
Stability against perturbations is a highly nontrivial property of quantum systems and is often a requirement to define new phases. In most systems where stability can be rigorously established, only static perturbations are considered; whether any system can remain stable against generic time-dependent perturbations is largely elusive. In this work, we identify a universal phenomenon in driving $q$-local Hamiltonians called energy-space localization and prove that it can survive under generic time-dependent perturbations, where the evolving state is exponentially localized in an energy window of the instantaneous spectrum. For spin glass models where the configuration spaces are separated by large energy barriers, the localization in energy spaces can induce a true localization in configuration spaces and robustly break ergodicity. We then demonstrate its applications in several systems with such barriers. For certain LDPC codes, we show that the system remains localized near the original codeword for an exponentially long time even under generic time-dependent perturbations. For classical optimization problems with clustered solution space, the stability becomes an obstacle for quantum Hamiltonian-based algorithms to escape local minima. Our work provides a new lens for analyzing non-equilibrium dynamics of generic quantum systems, and versatile mathematical tools for establishing stability and for designing quantum algorithms.
\end{abstract}

\maketitle
\end{CJK*}
%\tableofcontents

\section{Introduction}

Over the past few decades, many new phases~\cite{anderson1958absence,RMP-ZooTopoOrder,RevModPhys.91.021001-MBL,nandkishore2019fractons,PhysRevLett.116.120401} in condensed matter physics have been identified and discovered. Stability against perturbations is often a requirement and a signal to define a new phase of matter.
Many of these phases, such as topologically ordered phases~\cite{PhysRevB.72.045141,michalakis2013stability,RMP-ZooTopoOrder}, many-body localization~\cite{imbrie2016many,RevModPhys.91.021001-MBL}, prethermalization~\cite{PhysRevLett.116.120401,abanin2017rigorous,else2017prethermal,PhysRevX.10.011043}, and some open quantum systems~\cite{PhysRevLett.120.040404,hong2025quantum,rakovszky2024bottlenecks},  have been proven to be stable against certain types of perturbations. Some of these stable phases have also been experimentally realized~\cite{wray2010observation,gring2012relaxation,choi2017observation,peng2021floquet,kongkhambut2022observation,guo2025observation}, but most theoretical proofs of their stability are limited to static perturbations, whereas most perturbations in reality are time-dependent. However, whether these phases can be stable against time-dependent perturbations is not known, and rigorous bounds and mathematical tools for analyzing generic time-dependent perturbations are lacking in the literature. For time-dependent perturbations, most existing methods~\cite{dyson1949s,lieb1972finite,kuwahara2016floquet} give bounds in power series of evolution time $t$, which will eventually diverge with increasing $t$ and thus can only control errors within short-time evolutions. These obstacles correspond to uncontrollable effects of generic time-dependent perturbations, such as high excitations and large heat absorption, which can drive the states globally away from their initial configurations. Thus, in general, we expect that many of those stable phases will no longer be robust against generic time-dependent perturbations under long-time driving.

Therefore, it is fundamentally important to ask: \textit{are there any systems that do possess provable stability against generic time-dependent perturbations}? 
%We tackle this problem from a different perspective from many previous attempts: Instead of trying to prove stability for specific models, we first identify which property can survive under generic time-dependent perturbations and then use it as a guide to search for corresponding models. 
In this work, we identify one property called \textit{energy-space localization} that survives under generic time-dependent perturbations, and can be used as a guide to search for dynamically stable models. For an initial eigenstate that evolves under a generic time-dependent Hamiltonian $H(t)$, it will become a superposition of instantaneous eigenstates and spread in the instantaneous energy space. If $H(t)$ is $q$-local (not necessarily geometrically local) and has a bounded local norm, we can rigorously prove that the spreading is exponentially localized in an energy window near the initial energy, with a width roughly being the total variation of $H(t)$, which is consistent with intuitions from classical mechanics. In addition, all the bounds we obtain are invariant under monotonic rescaling of $t$. This indicates that the far region of the spectrum remains inaccessible for all times to quantum evolution $H(t)$ with small total variation, regardless of how rapidly or slowly the $H(t)$ varies.

Since energy-space localization will be shown to hold ubiquitously for physical $q$-local Hamiltonians $H(t)$, itself alone may not necessarily yield additional non-trivial properties of the system. Indeed, according to the eigenstate thermalization hypothesis (ETH)~\cite{deutsch1991quantum,srednicki1994chaos,rigol2008thermalization,nandkishore2015many,deutsch2018eigenstate}, a state localized at one single energy level of a generic Hamiltonian can behave like a thermal ensemble. Thus, localization in an energy window alone does not prevent states from thermalization or ergodicity in generic systems. On the other hand, systems that can break ergodicity are highly nontrivial, especially when such breaking persists under generic perturbations. This stability is indeed the property of some exotic phenomena such as MBL~\cite{RevModPhys.91.021001-MBL} and pretheramlization~\cite{abanin2017effective,else2017prethermal}. However, these existing models and theoretical proofs are not applicable to generic time-dependent perturbations, and it is likely that they are no longer stable in this scenario. As we will prove that energy-space localization can survive under generic time-dependent perturbations for an arbitrarily long time, could it uncover new stability? How could stability emerge beyond energy space?

To search for a stable phase via energy-space localization, we need to relate the closeness in the energy space to that in the configuration space. This property indeed exists in certain spin glass models, which can be constructed from error-correcting codes~\cite{DinurGoodQuantum,PanteleevAsymptoticallygood,PRL-EnergyBarrierHyperCode,yin2024eigenstate} and hard classical optimization problems~\cite{gamarnik2021OGP-pnas}, where the low-energy eigenspace is well separated into distant clusters, each of which contains eigenstates that are close. In these cases, eigenstates require overcoming a significant energy barrier to reach other clusters. Using the energy-space localization that we shall derive, we can prove that the localization time in each cluster is exponentially long for linear barriers and superpolynomially long for sublinearly increasing barriers, provided the total variation of the time-dependent perturbation is smaller than half of the energy barrier. Such localization is a hallmark of the ergodicity breaking, and is stronger than the usual context of ergodicity defined in static Hamiltonians.

Such localization is useful for quantum error-correcting codes, especially when used as fault-tolerant quantum memory~\cite{knill1998resilient,dennis2002topological,wang2003confinement,bravyi2024high}. A central question in this context is whether the error-correcting codes can protect quantum information under generic perturbations for a long time. Although realistic noises are generally time-dependent, rigorously proving stability under generic time-dependent perturbations was considered theoretically intractable, as there are rarely theoretical tools that can bound unitary dynamics of $H(t)$ for long-time evolutions. Even for static perturbations, proofs for dynamical stability of quantum codes mostly reply on ground-state (or zero-temperature) stability~\cite{bravyi2010topological,yin2025low,de2025low}, which no longer works if the initial state already contains small correctable errors. Can erroneous initial states of certain qLDPC codes remain correctable after a long-time noisy evolution? Our results give an affirmative answer in the most general time-dependent setting: If the code exhibits linear energy barriers, errors induced by weak time-dependent perturbations remain correctable if the total variation imparted by the latter is small.

Our results can also be directly reduced to static cases by taking the time duration of time-dependent perturbations to the infinitesimal limit in a quench dynamics. This allows us to prove that the perturbed eigenstates are universally localized inside the energy-space of the unperturbed spectrum. One of its special cases is when the unperturbed Hamiltonian is mutually commuting. For such Hamiltonians, their energy-space localization was originally obtained in~\cite{yin2024eigenstate}, and was used as a stepstone to prove the infinite-time dynamical localization of cLDPC codes via proving the localization of perturbed eigenstates. However, it remains unknown whether qLDPC codes can have eigenstate localization; thus, proving dynamical localization of qLDPC codes from static eigenstate localization is still not feasible. In this work, we bypass this obstacle by directly proving the dynamical localization of qLDPC codes in time-dependent cases, which naturally include static perturbations and yield an exponentially long localization time regardless of whether the systems exhibit eigenstate localization. The static reduction can also reproduce many existing stability results~\cite{yin2024eigenstate,rakovszky2024bottlenecks} of spin glasses against static $q$-local perturbations and generalize them to quasi-$q$-local perturbations. 

On the other hand, localization can also pose an obstacle to Hamiltonian-based quantum algorithms aiming to solve difficult classical optimization problems~\cite{gamarnik2021OGP-pnas} whose near-optimal solution space is clustered. The hardness from their clustered solutions was previously leveraged to rule out logarithmic-time quantum algorithms~\cite{Moosavian2022limitsofshorttime}. Although it is believed that such NP-hard problems remain hard within polynomial-time quantum algorithms, no rigorous results are known to rule out algorithms longer than linear time. In this work, we will show that for Hamiltonian-based quantum algorithms, the system will become trapped in local minima for an exponentially long time due to the energy barriers if the algorithmic Hamiltonian's total variation is insufficient.

Finally, we will discuss how to experimentally detect the results obtained in this work. For energy-space localization itself, its exponentially small tail bounds are hard to directly measure. Instead, we can probe the localization behavior of systems with extensive energy barriers to confirm our theory in potential experimental platforms.

\section{Energy-space localization}
\label{sec:Energy-spaceLoc}

\begin{figure}[ht]
    \centering
    \includegraphics[width=0.49\textwidth]{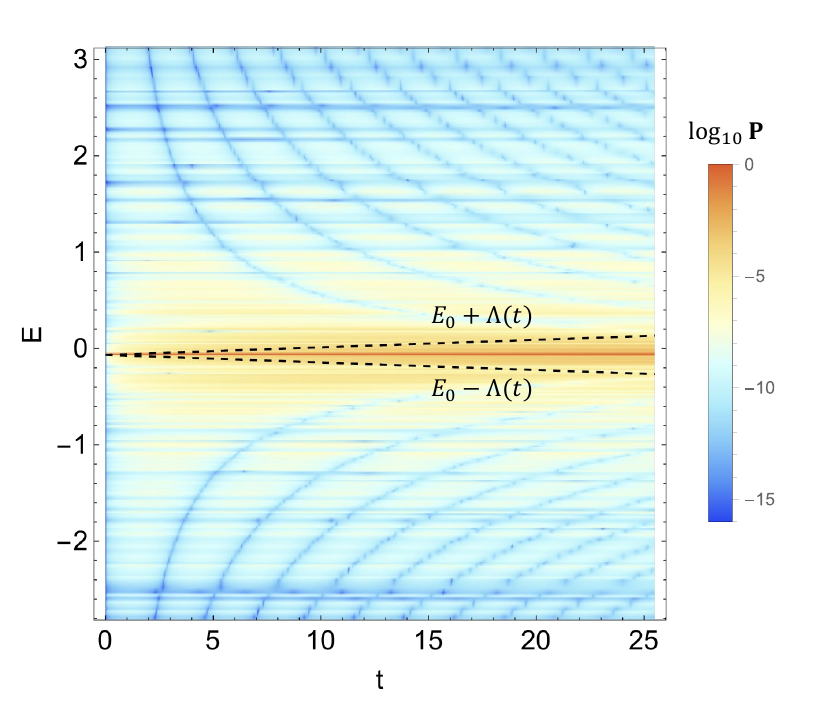}
    \caption{An illustrative example of the energy-space localization. $\Lambda(t)$ denotes the total variation of $H(t)$ from $0$ to $t$. The results are from simulating an 8-qubit $H(t)$ consisting of random $2$-local Pauli operators with all-to-all couplings, where the initial state is chosen to be one eigenstate of $H(0)$. The quantity $\mathbf{P}$ for $j$-th eigenstates of the instantaneous Hamiltonian $H(t)$ is its probability in expanding the evolving state $\ket{\psi(t)}$ in the instantaneous spectrum.}
    \label{fig:Energy-space-loc}
\end{figure}

In this section, we mainly work with a generic $q$-local time-dependent Hamiltonian $H(t)$ defined on an $n$-site lattice. To define a $q$-local Hamiltonian here, we require that $H(t)$ admits a partition $H(t)=\sum_A h_A(t)$ into $q$-local terms, where each $h_A(t)$ is supported in a region $A$ that contains at most $q$ sites. Here, we do not have any geometric constraints on the interacting region $A$, which can be spatially nonlocal. For example, the Ising model is 2-local and also geometrically local. All-to-all 2-body couplings are 2-local but not geometrically local. The Kitaev toric code~\cite{kitaev2003fault} is 4-local and geometrically local.

Suppose a state $\ket{\psi(t)}$ is initialized at one eigenstate of $H(0)$ at $t=0$, which can also be viewed as a $\delta$-function distribution of eigenstates in energy space. Let the state evolve under $H(t)$, then $\ket{\psi(t)}$ will become a superposition of instantaneous eigenstates as long as $H(t)$ at different times does not commute with one another. Thus the evolving state's distribution in the energy space spreads. Our main results are on the localization of such spreading: The distribution is exponentially localized in a classically allowed energy window (see Fig.~\ref{fig:Energy-space-loc}) for a generic physical $H(t)$, and the width of the energy window is determined by the total variation of $H(t)$ during the evolution. We will give formal statements on this argument shortly.

Before we proceed, we first introduce the requirements of a physical $q$-local $H(t)$ in our context. Firstly, we require that the local norm $\|H(t)\|_{\textnormal{loc}}$ is bounded. Here, the local norm defined by  $\|H\|_{\text{loc}}\equiv\sup_{i}\sum_{A,A\ni i}\|h_A\|$ in the local decomposition, which roughly measures the largest local strength of the Hamiltonian acting on a site. We also define the global termwise norm $\|H(t)\|_{\textnormal{X}}=\sum_A\|h_A(t)\|$ for convenience, which measures an absolute strength of the total Hamiltonian. We summarize the required conditions for $H(t)$ as follows. 
\begin{cas}[General]
\label{case:1}
$H(t)$ is $q$-local and $\|H(t)\|_{\textnormal{loc}}\leq M$, with $\int_0^T \|\mathrm{d}H\|_{\textnormal{X}}\equiv\int_0^T \|H'(t)\|_{\textnormal{X}}\mathrm{d}t\leq\lambda n$,
\end{cas}
\noindent where $\lambda$ and $M$ are constants independent of $n$ and we call $\int_0^T \|\mathrm{d}H\|_{\textnormal{X}}$ {\it total variation\/} of $H(t)$ from $t=0$ to $T$ (see the formal definitions in Appendix~\ref{app:normsTVs}). 

We list below other conditions that are not included in the Case~\ref{case:1}, but will be useful as they have correspondence in the static scenarios.
\begin{cas}
\label{case:2}
$H(t)$ is $q$-local and can be written as $H(t)=H_C(t)+V(t)$ with $[V(t),H'(t)]=0$ and $\|H_C(t)\|_{\textnormal{loc}}\leq M$~\footnote{Here, we do not require $H_C(t)$ to be mutually commuting, which is different from Cases~\ref{case:3} and \ref{case:4}.}, with $\int_0^T \|\mathrm{d}H\|_\mathrm{X}\leq\lambda n$.
\end{cas}
\begin{cas}
\label{case:3}
$H'(t)$ is $q$-local. $H(t)$ can be partitioned into $H(t)=H_C(t)+V(t)$ with $H_C(t)$ mutually commuting and $\|H_C(t)\|_{\textnormal{loc}}\leq M$, and $[V(t),H'(t)]=0$, with $\int_0^T \|\mathrm{d}H\|_\mathrm{X}\leq\lambda n$.
\end{cas}
\begin{cas}
\label{case:4}
$H'(t)$ is quasi-$q$-local. $H(t)$ can be partitioned into $H(t)=H_C(t)+V(t)$ with $H_C(t)$ mutually commuting and $\|H_C(t)\|_{\textnormal{loc}}\leq M$, and $[V(t),H'(t)]=0$, with $\int_0^T \|\mathrm{d}H\|_{q\textnormal{-X}}\leq\lambda n$,
\end{cas}
\noindent where we define a Hamiltonian to be quasi-$q$-local if the strength of its $k$-local terms decays with $e^{-k/q}$, and the termwise $q$-norm $\|\cdot\|_{q\textnormal{-X}}$ for quasi-$q$-local Hamiltonian as the total norm with weights $e^{k/q}$ for each $k$-local term. Although intrinsic quasi-$q$-local interactions have not yet been found in reality, they arise naturally in many perturbative analyses and effective models ~\cite{abanin2017effective,abanin2017rigorous,else2017prethermal}. 

\begin{table}[t]
{\renewcommand{\arraystretch}{1.5}
\begin{tabular}{|c|c|}
        \hline
        Norms & Definitions\\[1pt]
        \hline 
        $\|H\|$  & $\sup_{\ket{\psi}\neq 0} \frac{\|H \ket{\psi}\|}{\|\ket{\psi}\|}$ \\ 
        $\|H\|_{\textnormal{loc}}$ & $\sup_{i}\sum_{A,A\ni i}\|h_A\|$ \\
        $\|H\|_{\mathrm{X}}$ & $\sum_A \|h_A\|$ \\
        $\|H\|_{q\textnormal{-X}}$ & $(q+1)\sum_{A}e^{|A|/q}\|h_A\|$\\
        \hline 
\end{tabular}}
\caption{Summary on definitions of different norms used in this work for a given decomposition of a Hamiltonian $H=\sum_A h_A$. In the definition of $\|H\|_{q\textnormal{-X}}$, $|A|$ denotes the size of the region $A$, and $q$ can be chosen from any positive real numbers. We refer the reader to Appendix~\ref{app:normsTVs} for detailed definitions and properties of different norms. We define $H$ as quasi-$q_{\star}$-local if one can find a $q_{\star}$ (usually the minimum of all possible $q$) such that $\frac{1}{n}\|H\|_{q_{\star}\textnormal{-X}}$ can be bounded by an $O(1)$ constant.}
\label{tab:norms}
\end{table}

We refer the reader to Table~\ref{tab:norms} and Appendix~\ref{app:normsTVs} for the explicit definitions and properties of different norms and total variations. With these preparations, we can present our main results, which we call \textit{energy-space localization}.

\begin{thm}[Energy-space localization, informal]
    \label{thm:Energy-local}
     We set an initial state $\ket{\psi(0)}$ as an eigenstate of $H(0)$ with energy $E_0$. If we let the state evolve according to $H(t)$ from $t=0$ to $T$, then the state at any $t$ is exponentially localized in the energy window $\mathcal{E}_0(d)\equiv[E_0-dn,E_0+dn]$ in the instantaneous spectrum of $H(t)$, where $d$ can be any $\Theta(1)$ constant satisfying $d>\lambda$. 
     
     \smallskip \noindent (I) For $H(t)$ defined in \textnormal{Case~\ref{case:1}},~\ref{case:2} and \ref{case:4}, the leakage $\epsilon(d)$ (defined as the norm of outside eigenstates) to instantaneous eigenstates with energy differences $|E(t)-E_0|\geq dn$ is bounded by
\begin{equation}
\label{eq:epsilond1f}
    \epsilon^{(1)}_{\lambda,\Delta}(d)\equiv \exp\left(-n\frac{\lambda}{\Delta}\left(\frac{d}{\lambda}-1-\ln\frac{d}{\lambda}+o(1)\right)\right),
\end{equation}
where $\Delta=\Delta_q\equiv 2qM$.

\smallskip
\noindent (II) If $H(t)$ satisfies Case~\ref{case:3}, the bound can be improved to
\begin{equation}
\label{eq:epsilond2f}
    \epsilon^{(2)}_{\lambda,\Delta}(d) \equiv \exp\left(-n\frac{\lambda}{\Delta}\left(\frac{d}{\lambda}\ln\frac{d}{\lambda}-(\frac{d}{\lambda}-1)\right)\right),
\end{equation}
where $\Delta=\Delta_q$.
\end{thm}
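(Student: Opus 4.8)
The plan is to control the spreading of $\ket{\psi(t)}$ in the instantaneous spectrum by bounding the central moments $m_{2k}(t)\equiv\Braket{(H(t)-E_0)^{2k}}_t$ and then converting a moment bound into a tail bound via Markov's inequality applied to $(H(t)-E_0)^{2k}$, i.e. $\epsilon(d)=\mathbf P(|E(t)-E_0|\ge dn)\le m_{2k}(t)/(dn)^{2k}$, optimized over $k$. The heart of the argument is a differential inequality for $m_{2k}(t)$. First I would compute $\frac{\mathrm d}{\mathrm dt}m_{2k}(t)$: one contribution comes from $\partial_t H(t)$ acting inside the moment (this is the term controlled by the total variation $\lambda n$), and the other comes from the Heisenberg commutator $i[H(t),(H(t)-E_0)^{2k}]$, which vanishes when everything commutes but in general is nonzero because $H(t)$ at different times need not commute — more precisely, the relevant obstruction is $[H(t),(H(t)-E_0)]$-type terms generated because $E_0$ is fixed while $H(t)$ moves, together with $[H(t),\partial_t H(t)]$ cross terms. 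The $q$-locality and the local-norm bound $M$ are exactly what let me bound these commutator contributions: expanding $(H-E_0)^{2k}$ and using that each local term $h_A$ fails to commute with only $O(q)$ other local terms, one gets a recursion roughly of the form $\frac{\mathrm d}{\mathrm dt}m_{2k}\lesssim 2k\,\Vert\mathrm dH\Vert_{\mathrm X}\, m_{2k-1}+2k\,\Delta_q\, m_{2k-1}$ (schematically), so that along the evolution $m_{2k}(t)$ is dominated by $\big(\int_0^t\Vert\mathrm dH\Vert_{\mathrm X}+C k\,\Delta_q\big)$-type growth per power — this is where the combination $\lambda n$ (from the variation) and $\Delta_q=2qM$ (from the locality-induced commutators) both enter, and where the factor $k$ multiplying $\Delta_q$ ultimately produces the $k^k$ (Stirling) behavior responsible for the $-\ln(d/\lambda)$ correction in $\epsilon^{(1)}$.

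Once I have a bound of the shape $m_{2k}(T)\le (2k)!\,\big(an+bk\big)^{2k}$ or $m_{2k}(T)\le (an)^{2k}\,k^{k}\cdots$ (with $a\sim\lambda$, $b\sim\Delta_q$), I plug into Markov and minimize over integer $k$. Choosing $k\approx \frac{n\,d}{2e\,b}$ (or the analogous optimum) and applying Stirling turns the $\log$ of the bound into $-n\,\frac{\lambda}{\Delta}\big(\frac d\lambda-1-\ln\frac d\lambda\big)$ up to $o(1)$, giving $\epsilon^{(1)}_{\lambda,\Delta}(d)$; the constraint $d>\lambda$ is precisely what makes $\frac d\lambda-1-\ln\frac d\lambda>0$ so the exponent is genuinely negative. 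For \textbf{Case~\ref{case:3}}, where $H_C(t)$ is mutually commuting and $V(t)$ commutes with $H'(t)$, the Heisenberg commutator term is much more benign — the moment recursion loses the $k\,\Delta_q$ piece in favor of a cleaner geometric growth — so the optimization over $k$ instead yields the sharper large-deviation rate $\frac d\lambda\ln\frac d\lambda-(\frac d\lambda-1)$, i.e. $\epsilon^{(2)}_{\lambda,\Delta}(d)$; this is the standard Poisson/Cramér rate function and reflects that in the commuting case the process behaves like a sum of independent bounded increments.

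The main obstacle, I expect, is establishing the moment recursion rigorously rather than schematically: one must carefully expand $(H(t)-E_0)^{2k}$, track which of the $\sim(Mn)$-scale sums of local terms actually contribute after commutators are taken (using that $h_A$ overlaps only $O(q)$ other terms), and bound the resulting nested sums by $m_{2k-1}$ (or $m_{2k-2}$) uniformly in $t$ — the bookkeeping for the cross terms $[h_A,\partial_t h_B]$ and for the $E_0$-shift, and showing the per-power cost is $O(\Delta_q k)$ and not worse, is the delicate part. Secondary technical points are: (i) handling the quasi-$q$-local Case~\ref{case:4}, where the $e^{k/q}$-weighted norm $\Vert\cdot\Vert_{q\text{-X}}$ is designed so the same recursion closes with $\Delta_q$ replaced appropriately; (ii) justifying that differentiating $m_{2k}(t)$ and Gr\"onwall-integrating is legitimate (continuity/boundedness of $H(t)$ and its derivative), which is routine; and (iii) verifying the $o(1)$ term genuinely vanishes as $n\to\infty$ after the Stirling approximation and the discrete optimization over $k$. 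Reparametrization-invariance of the final bound is immediate since everything depends on $H(t)$ only through $\int_0^T\Vert\mathrm dH\Vert$ and the static data $M,q$, none of which sees a monotone rescaling of $t$.
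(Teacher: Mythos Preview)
Your overall architecture---bound the central moments $m_{2k}(t)=\langle(H(t)-E_0)^{2k}\rangle_t$, derive a differential inequality, integrate it, and then optimize a Markov/Chernoff tail bound over $k$---is exactly the paper's strategy, and your endpoint intuitions (rising-factorial growth giving $\epsilon^{(1)}$, Poisson moments giving $\epsilon^{(2)}$) are correct. But the mechanism you identify for the $\Delta_q$-term is wrong in a way that would block the actual derivation.

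You write that $\tfrac{\mathrm d}{\mathrm dt}m_{2k}$ has two contributions, one from $\partial_t H$ and ``the other from the Heisenberg commutator $i[H(t),(H(t)-E_0)^{2k}]$, which vanishes when everything commutes but in general is nonzero.'' This commutator is \emph{identically} zero: $(H(t)-E_0)^{2k}$ is a function of $H(t)$ and commutes with it for every $t$, regardless of locality or whether $H(t_1)$ and $H(t_2)$ commute. So there is no second contribution; the entire RHS is $\langle\partial_t(\tilde H^{2k})\rangle_t$ with $\tilde H=H-E_0$. The nontrivial structure, and the place where $\Delta_q$ enters, is in reordering
\[
\partial_t\tilde H^{2k}=\sum_{m=0}^{2k-1}\tilde H^{2k-1-m}\,(\partial_t H)\,\tilde H^{\,m}
\]
so that all the $\tilde H$'s sit on one side. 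Using $V\tilde H^m=(\tilde H-\mathrm{ad}_{\tilde H})^m\circ V$, this becomes a sum of $\binom{2k}{m}\tilde H^{2k-m}\,\mathrm{ad}_{\tilde H}^{\,m-1}(\partial_t H)$, and the bound one needs is on the \emph{nested commutators} $\|\mathrm{ad}_{H(t)}^{m}(\partial_t H)\|$. For $q$-local $H$ with local norm $\le M$ these grow at most factorially, $\le m!\,\Delta_q^m\,\|\partial_t H\|_{\mathrm X}$; this is where the locality and $M$ are used, not in any Heisenberg term. The resulting recursion is not your two-term form but the full
\[
g_k'(t)\le \sum_{m=1}^k\binom{k}{m}(m-1)!\,\Delta_q^{m-1}\,\lambda'(t)n\;g_{k-m}(t),\qquad g_k\equiv\sqrt{m_{2k}},
\]
whose comparison polynomial is the rising factorial $\prod_{j=0}^{k-1}(\lambda_t n+j\Delta_q)$; optimizing $k$ then gives $\epsilon^{(1)}$ via Stirling.

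Correspondingly, your explanation of why Case~\ref{case:3} is sharper is misplaced. It is not that a Heisenberg term becomes benign; it is that when $H_C$ is mutually commuting, the support in $\mathrm{ad}_{H_C}^m(V)$ does not grow from step to step (each new $h_A$ must still meet the \emph{original} support of $v_B$), so the nested commutators obey $\|\mathrm{ad}_{H}^m(\partial_t H)\|\le\Delta_q^m\|\partial_t H\|_{\mathrm X}$ without the $m!$. The recursion then loses the $(m-1)!$, the comparison polynomials become the Touchard/Bell polynomials $\sum_j\{{k\atop j}\}x^j$ (i.e.\ the $k$th moments of a Poisson$(x)$), and the Chernoff bound on a Poisson gives $\epsilon^{(2)}$. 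Once you replace the ``Heisenberg commutator'' story with the nested-commutator one, the rest of your outline (integration, optimization over $k$, reparametrization-invariance) goes through as you describe.
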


We present the formal statements and rigorous proofs in Appendix~\ref{app:pEnergyLoc} and provide more extensions and corollaries in Methods~A (\ref{methods:extensionForESL}). Given the exponential decay of the bound for any $d>\lambda$, we can interpret from the theorem that eigenstates outside $\mathcal{E}_0(\lambda)$ have almost no contribution to the final state $\ket{\psi(T)}$ in the thermodynamic limit (see Fig.~\ref{fig:Energy-space-loc}). On the other hand, it is unlikely to give universal bounds inside $\mathcal{E}_0(\lambda)$, as one can always choose $V(t)=E'-E_0$ to precisely pinpoint the final state at any energy $E'\in \mathcal{E}_0(\lambda)$. 

The energy window $\mathcal{E}_0(\lambda)$ is consistent with classical intuitions: The energy change of a classical $\tilde{H}(t)$ is exactly $\int_0^T\tilde{H}'(t)\mathrm{d}t$ and can be bounded by $\int_0^T|\tilde{H}'(t)|\mathrm{d}t \leq \lambda n$ \footnote{As $\tilde H(t)$ is classical here, the norm is simply the absolute value.}. Thus, the energy window $\mathcal{E}_0(\lambda)$ corresponds to the classically reachable region from the driving $H(t)$. At the thermodynamic limit ($n\rightarrow \infty$), the energy-space localization of quantum $H(t)$ is naturally reduced to classical mechanics: The contributions from classically forbidden region (outside $\mathcal{E}_0(\lambda)$) in the final state vanish exponentially.

As the leakage bound is controlled by the total variation of $H(t)$, it is invariant under any monotonic rescaling of any part of the evolution time. Thus, it is straightforward to reduce Theorem~\ref{thm:Energy-local} to static perturbations $H=H_0+V_0$, which is sketched in Methods~B (\ref{methods:staticReduce}). In this scenario, a common task is to study the properties of the eigenstates of $H$ in the {\it unperturbed} eigenbasis of $H_0$, which is often easier to handle. We consider the following general static case:

\begin{casbis}{case:2}[General]
\label{case:2'}
    $H_0$ is $q$-local and $\|H_0\|_{\textnormal{loc}}\leq M$, with $q$-local $\|V_0\|_{\textnormal{X}}\leq\lambda n$.
\end{casbis}

The Case~\ref{case:2'} can be obtained by setting $H_C(t)=H_0$, $V(t)=(1-t/T)V_0$ and let $T\rightarrow0^+$ in Case~\ref{case:2}. Other time-dependent cases can also be similarly translated into static cases as follows:

\begin{casbis}{case:3}
\label{case:3'}
    $H_0$ is mutually commuting and $\|H_0\|_{\textnormal{loc}}\leq M$, with $q$-local $\|V_0\|_{\textnormal{X}}\leq\lambda n$.
\end{casbis}
\begin{casbis}{case:4}
\label{case:4'}
    $H_0$ is mutually commuting and $\|H_0\|_{\textnormal{loc}}\leq M$, with quasi-$q$-local $\|V_0\|_{q\textnormal{-X}}\leq\lambda n$.
\end{casbis}
These are the counterparts of Cases 3 and 4 in time-dependent cases. We summarize the energy-space localization for the static cases in the theorem below.

\begin{thm}[Energy-space localization, static cases]
\label{thm:static-Energy-loc}
Let $\ket{\psi}$ be an eigenstate of the combined Hamiltonian $H=H_0+V_0$ with energy $E'$. If $H_0$ and $V_0$ satisfy any of the above cases, then in the eigenbasis of $H_0$, $\ket{\psi}$ is exponentially localized in the energy window $[E'-dn,E'+dn]$, where $d$ can be any $\Theta(1)$ constant satisfying $d>\lambda$, with leakage bounded by $\epsilon^{(1)}_{\lambda,\Delta_q}(d)$ for Case~\ref{case:2'} and Case \ref{case:4'}, and $\epsilon^{(2)}_{\lambda,\Delta_q}(d)$ for mutually commuting $H_0$ (Case \ref{case:3'}).
\end{thm}

We will show a simple proof for this theorem in Methods~B (\ref{methods:staticReduce}) by taking the evolution time in Theorem~\ref{thm:Energy-local} to infinitesimal limit.

We remark that although the energy-space localization is consistent with classical intuitions, its universality and the explicit tail bounds have not been obtained before, even in generic static cases.  Proposition 2 in the Supplementary Material of Ref.~\cite{yin2024eigenstate} obtained a similar bound to $\epsilon^{(2)}_{\lambda,\Delta_q}$ for mutually commuting $H_0$ by rewriting the Hamiltonian in a tri-diagonal form, and the result can be regarded as a special case (Case~\ref{case:3'}) in our Theorem~\ref{thm:static-Energy-loc}. However, it is not clear how to generalize the tri-diagonal method to non-commuting $H_0$ or quasi-$q$-local perturbations in Case~\ref{case:2'} and Case~\ref{case:4'}, whereas our results bypass this obstacles via reducing from time-dependent cases in Theorem~\ref{thm:Energy-local}.

Given the universality of energy-space localization, one wonders whether there are specific scenarios in which it can be used to infer stronger stability. Surprisingly, when combined with a clustered energy landscape where the closeness in energy can infer closeness in configuration space, our energy-space localization can become a powerful tool to prove numerous stabilities for such systems. This clustering property has been proven in many spin glass models~\cite{mezard2009information,gamarnik2017limits,gamarnik2021overlap,placke2024topological}, which we define formally in our context as follows:

\begin{figure}[t]
    \centering
    \includegraphics[width=0.46\textwidth]{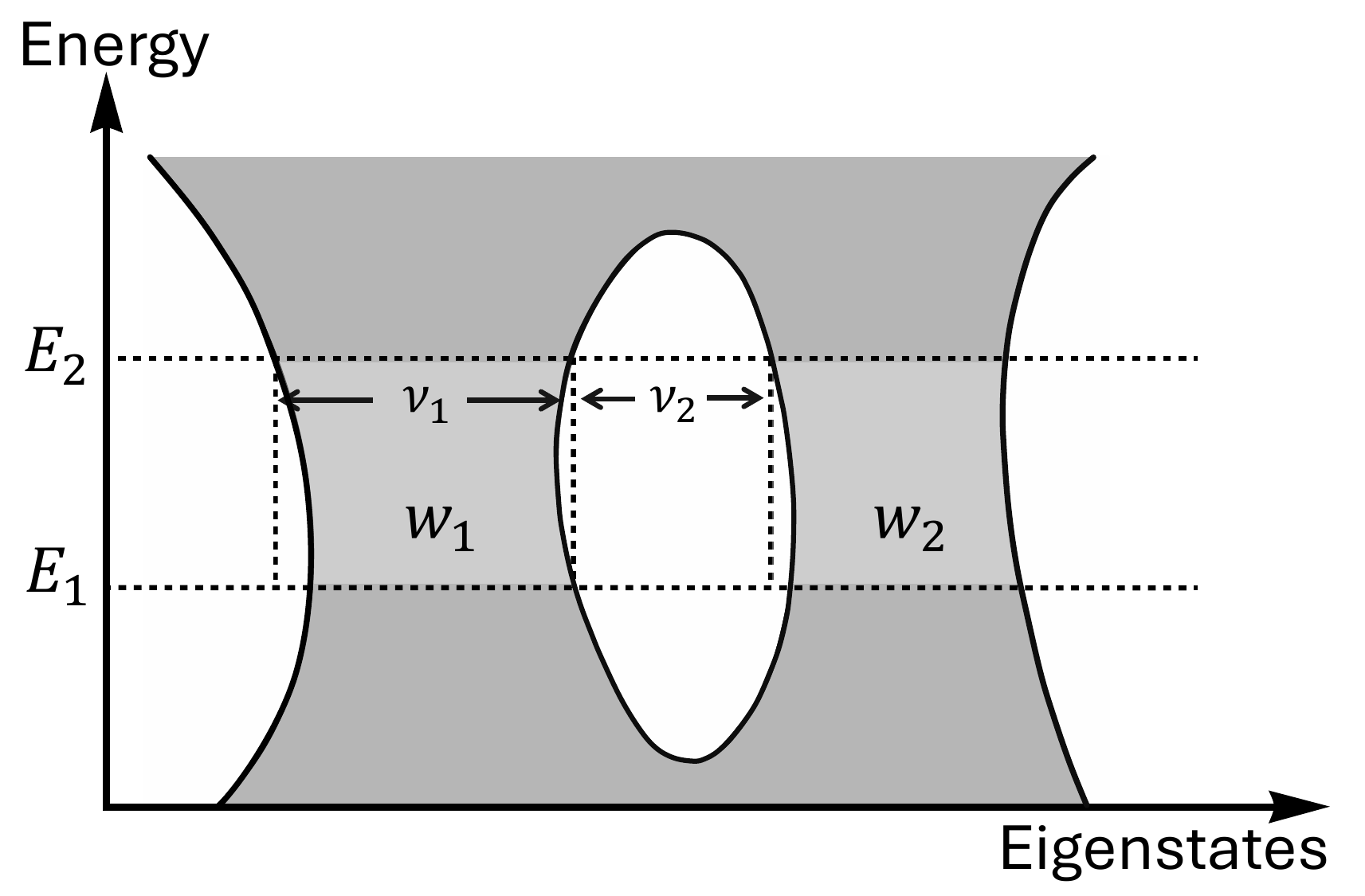}
    \caption{An illustrative example of the clustering property in the energy window $[E_1,E_2]$, where the gray area denotes all the eigenstates. We compress the high-dimensional eigenstates to one dimension for visual convenience, and the distance along the x-axis roughly represents the distance $\mathbf{D}$ between two eigenstates.}
    \label{fig:Cluster}
\end{figure}

\begin{defi}[Clustering property]
\label{def:cluster}
    A Hamiltonian $H$ has the clustering property in an energy window $[E_1,E_2]$ if all eigenstates in this energy window can be divided into clusters $\{w_j\}$ such that: For any two eigenstates $\ket{\phi}$ from $w_j$ and $\ket{\phi'}$ from $w_{j'}$, the distance satisfies $\mathbf{D}(\ket{\phi},\ket{\phi'})\leq\nu_1$ if they are within the same cluster $w_j=w_{j'}$ and $\mathbf{D}(\ket{\phi},\ket{\phi'})\geq\nu_2$ if they come from different clusters $w_j\neq w_{j'}$.
\end{defi}
\noindent Here, we define the distance $\mathbf{D}$ as follows: $\mathbf{D}(\ket{\phi},\ket{\phi'})=D$ if $k$-local operators $V$ cannot produce a nonzero $\braket{\phi|V|\phi'}$ with $k<D$ but can produce with $k=D$. For distances between $Z$-basis states in qubit systems, the above definition is equivalent to the Hamming distance of classical bits.

In this definition, the interval $[E_1,E_2]$ acts as an energy barrier for different clusters (see Fig.~\ref{fig:Cluster}). We require $\nu_2$ to be sufficiently large compared to the locality of the perturbations~\footnote{When considering $q$-local perturbations, $\nu_2>q$ is usually sufficient. For quasi-$q$-local perturbations, we require $\nu_2 \sim \Theta(n)$.  Although we mainly focus on models with $\nu_2\sim \Theta(n)$ in this work, the results can easily extend to other scalings of $\nu_2$.}. Although we have no restriction on $\nu_1$, for localization to be well-defined, typically $\nu_1$ needs to be small compared to $n$, e.g., $\nu_1<\delta n$ with $\delta\ll 1$. We will show in Sec.~\ref{sec:LDPC} that if the width $B\equiv E_2-E_1$ of the energy window is linear with $n$, the states within each cluster exhibit stability against various types of perturbations, ranging from static to time-dependent, strictly $q$-local to quasi-$q$-local, and from close to open systems (see Table~\ref{tab:LDPC-results}).

This property can appear in certain error-correcting codes and hard classical optimization problems. The stability is useful for quantum error-correcting codes, where the trapped states correspond to errors that can be easily detected and corrected to the original codeword. On the other hand, such stability can also pose an algorithmic barrier for optimization problems whose solution space exhibits the clustering property, where the evolution under weak driving cannot drive the trapped states out of local minima. We will discuss the two classes of models in Sec.~\ref{sec:LDPC} and Sec.~\ref{sec:LocClassicalOpt}, respectively.

\section{Localization as stability in LDPC codes}
\label{sec:LDPC}

\begin{table*}[t]
{\renewcommand{\arraystretch}{1.2}
\begin{tabular}{|c|c|c|c|}
        \hline 
        & Dynamical localization  & Eigenstate localization & Slow mixing of Gibbs sampler \\ 
        Perturbation type & (localization time) & (leakage)  & (mixing time) \\
        \hline 
        Static strictly $q$-local & $e^{\Omega(n)}$ [\ref{pro:DyLocalt}], c: $\infty$~\cite{yin2024eigenstate} &c: $e^{-\Omega(n)}$~\cite{yin2024eigenstate}  & $e^{\Omega(n)}$~\cite{rakovszky2024bottlenecks,gamarnik2024slow}\\
        Static quasi-$q_{\star}$-local&  $e^{\Omega(n)}$ [\ref{pro:DyLocalt}], c: $\infty$ [\ref{pro:DyLocalInf}]  &c: $e^{-\Omega(n)}$[\ref{pro:Eigen-local}] & $e^{\Omega(n)}$ [\ref{pro:SlowMixQuasi}] \\
        Time-dependent (quasi-)$q$-local &$e^{\Omega(n)}$~[\ref{pro:DyLocalt}] &- & -\\
        \hline 
\end{tabular}}
\caption{Summary on the stabilities of LDPC codes with linear soundness discussed in this work. The notation c denotes that the bound only works for cLDPC codes. Otherwise, the bound works for both cLDPC and qLDPC codes. More detailed bounds for parameters can be found in Table~\ref{tab:summary-results2} in the Appendix.}
\label{tab:LDPC-results}
\end{table*}

\begin{figure}[ht]
    \centering
    \includegraphics[width=0.46\textwidth]{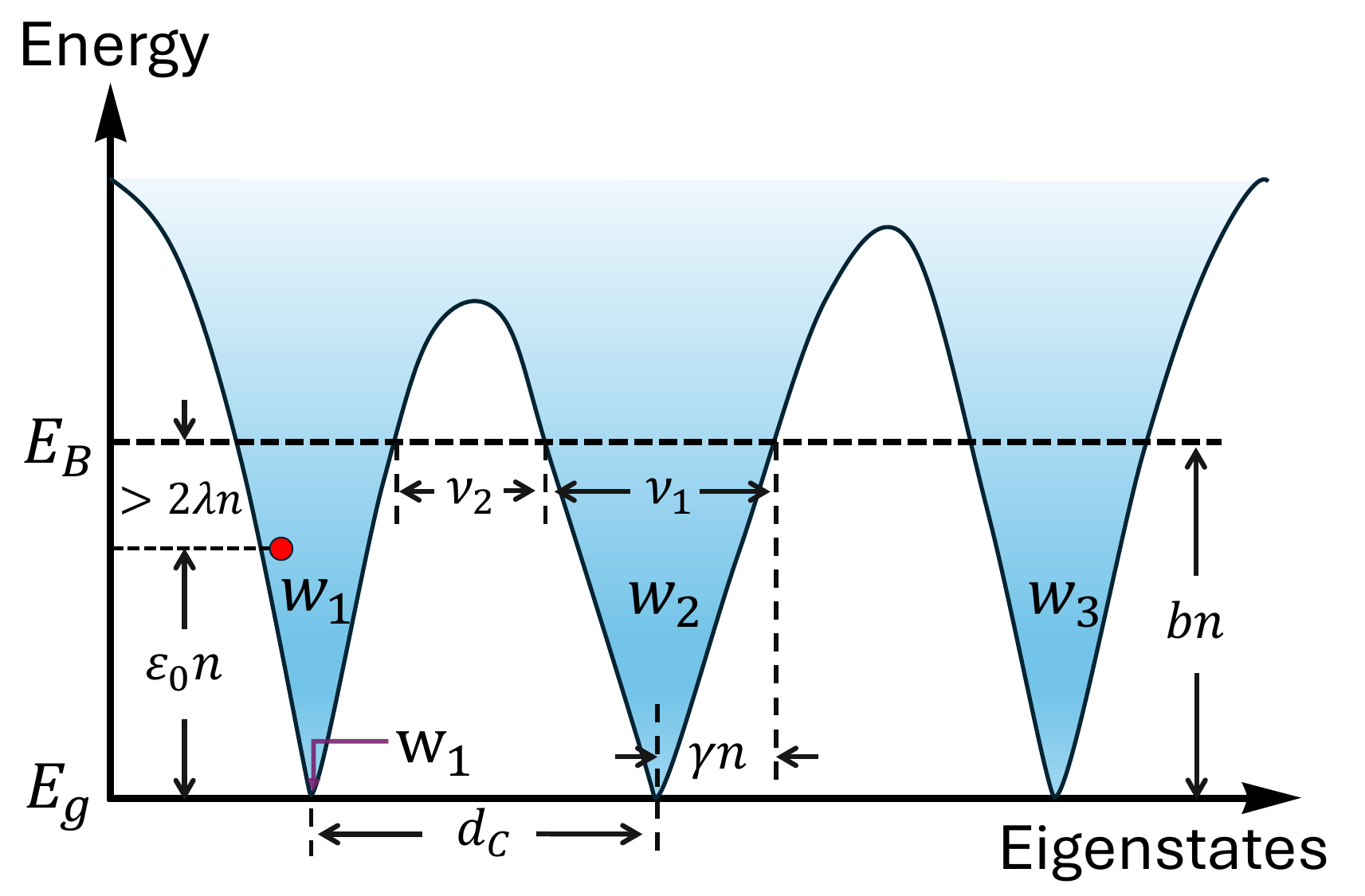}
    \caption{Illustration of the clustering property in classical/quantum LDPC codes with linear soundness, where the blue area denotes all the eigenstates. According to Proposition~\ref{pro:DyLocalt}, when time-dependent perturbations with total variation bounded by $\lambda n$ are added, any initial eigenstates (e.g., the red dot) below $E_B-2\lambda n$ are localized within their clusters for exponentially long time.}
    \label{fig:LDPC}
\end{figure}

LDPC codes are widely used error-correcting codes in both classical and quantum computing. The CSS-type quantum LDPC (qLDPC) codes can be written in the following Hamiltonian form
\begin{equation}
\label{eq:ldpcHC}
    H_C=-\left(\sum_{c^X}\prod_{i\in c^X} X_i+\sum_{c^Z}\prod_{i\in c^Z} Z_i\right),
\end{equation}
where $c^X$ and $c^Z$ are parity checks for $X$ and $Z$ operators, respectively, and they are mutually commuting. In addition, the low-density property requires that the number of checks involving each site is bounded by an $ O(1)$ constant $p_C$, so that we can set $M=p_C$ in $\|H_C\|_{\textnormal{loc}}\leq M$. The classical LDPC (cLDPC) codes can be regarded as a special case, whose $H_C$ contains only $Z$ checks.

The different ground states $\ket{\textnormal{w}}$ of the LDPC codes, usually called codewords, are degenerate and well separated in terms of distance $\mathbf{D}$, so they can robustly encode information. When errors occur, the energy of the erroneous state increases. All excited eigenstates $\ket{\mathrm{v}}_{\mathrm{w}}$ for qLDPC codes can be defined by excitations from a codeword $\ket{\mathrm{w}}$ as 
\begin{equation}    \ket{\mathrm{v}}_{\mathrm{w}}\equiv\prod_{i,v^X_i=1}X_i\prod_{j,v^{Z}_j=1} Z_j \ket{\mathrm{w}},
\end{equation}
where $\mathrm{v}\equiv\{v^X_1,..v^X_n,v^Z_1,..v^Z_n\}$ is a length-${2n}$ bit string denoting local Pauli errors. We can also define $\ket{\mathrm{v}}_{\mathrm{w}}\equiv\prod_{j,v^{Z}_j=1} Z_j \ket{\mathrm{w}}$ with $\mathrm{v}\equiv\{v^Z_1,..v^Z_n\}$ for cLDPC codes.

To efficiently recover the original state, a preferred property for error-correcting codes is the linear soundness 
\begin{equation}
    \mathbf{H} (\ket{\mathrm{v}}_{\mathrm{w}})\geq \alpha  \mathbf{D} (\ket{\mathrm{v}}_{\mathrm{w}},\ket{\mathrm{w}}), ~\text{if}~\mathbf{D} (\ket{\mathrm{v}}_{\mathrm{w}},\ket{\mathrm{w}})\leq \gamma n,
\end{equation}
where $\mathbf{H}$ counts the total number of check violations in $\ket{\mathrm{v}}_{\mathrm{w}}$.
This ensures that if the erroneous state violates only a few checks, it stays near the original codeword but still far from other codewords (see Fig.~\ref{fig:LDPC}), which is desirable for efficient decoding algorithms~\cite{DinurGoodQuantum}. In addition, the above property naturally gives rise to an energy window $[E_g,E_B]$ with $E_g$ being the ground energy and $E_B$ being the top for all clusters
\begin{equation}
    E_B\equiv E_g+bn,~b\equiv2\alpha\gamma,
\end{equation}
where we can define the energy barrier as $B\equiv E_B-E_g=bn$. In the energy window $[E_g,E_B]$, one can prove that the clustering property~\ref{def:cluster} is satisfied with
\begin{equation}
    \nu_1=2\gamma n,~\nu_2=d_C-2\gamma n,
\end{equation}
where $d_C>2\gamma n$ is a natural decodability constraint. Here, each cluster $w$ is defined by eigenstates below energy $E_B$ and within distance $\nu_1$ from a codeword $\ket{\textnormal{w}}$. This induces a linear energy barrier in the configuration space and provides stability against numerous kinds of perturbations (see Table~\ref{tab:LDPC-results}). Below, we mainly discuss two common scenarios: 1. dynamical stability under unitary evolution; 2. thermodynamic stability under quantum channels, or equivalently, static Lindbladian evolution.

\smallskip\noindent{\bf Dynamical stability.} Dynamical localization is highly nontrivial but can exist in spin glass~\cite{rademaker2020slow,PhysRevX-Metastable}. Most of the existing works discuss dynamics under static perturbations. In this section, we focus on {\it time-dependent\/} perturbations, which is a fundamentally different scenario but has easy reductions to static perturbations.

We consider a common and realistic scenario in a self-correcting quantum memory~\cite{dennis2002topological} or other quantum devices: The system is initially an eigenstate $\ket{\psi(0)}$ with energy $E_0$ of $H_C$, a good classical or quantum LDPC code with linear soundness. The state has an energy density $\varepsilon_0\equiv\frac{E_0-E_g}{n}$ above ground energy and is close to one codeword $\ket{\mathrm{w}_0}$ (see Fig.~\ref{fig:LDPC}), which represents an erroneous state with correctable errors. Then a time-dependent perturbation $V(t)$ is added and the state $\ket{\psi(t)}$ evolves under the Hamiltonian, 
\begin{equation}
    H(t)=H_C+V(t).
\end{equation}
We may ask: How long can the evolving state $\ket{\psi(t)}$ stay near the original codeword $\ket{\mathrm{w}_0}$ and remain correctable? Such stability under generic time-dependent perturbations is usually considered theoretically intractable, and we are not aware of any prior theoretical study on this question. Below, we show that by applying Theorem~\ref{thm:Energy-local}, we can rigorously prove that the erroneous state remains correctable for an exponentially long time.

\begin{pro}[Exponentially long dynamical localization under time-dependent perturbations]
\label{pro:DyLocalt}
Let $\ket{\psi(t)}$ be $|\psi(0)\rangle$ evolving under $H(t)=H_C+V(t)$, where $H_C$ is defined in Eq.~\eqref{eq:ldpcHC} and $V(t)$ is a (quasi-)$q$-local perturbation. Suppose $\ket{\psi(0)}$ is initialized at an eigenstate of $H_C$ inside a cluster $w_0$, whose energy density $\varepsilon_0$ above the ground state satisfies $\varepsilon_0<b-2\lambda$ and $\lambda$ is defined in Eq.~\eqref{eq:dyLambda-q}. Then for all Cases~\ref{case:1}-\ref{case:4}, $|\psi(t)\rangle$ almost resides within the cluster $w_0$ up to an exponentially long time $T\sim \frac{1}{\lambda} e^{\Omega(n)}$, with an exponentially small leakage $e^{-\Omega(n)}$ for all $t\leq T$.
\end{pro}

We present a sketch of the proof in Methods~ C (\ref{methods:dyLoc}) and the details of the proof in Appendix~\ref{app:pDynamicLoc}. The results demonstrate that quantum and classical error-correcting codes with a linear energy barrier can protect information against generic time-dependent perturbations for an exponentially long time, as long as the total variation induced by the perturbation is bounded by $\lambda n$. This includes two common scenarios: 1. perturbations are global but weak, such as unstable global pulses applied to every site. 2. perturbations are local but strong, such as cosmic rays~\cite{mcewen2022resolving,tan2026resilience,sriram2025rare}. This property defines a stable phase that is robust against the most general and realistic time-dependent perturbations.

The results can also be directly reduced to static perturbations by setting $V(t)$ to be fixed.
In addition, if the perturbation is weaker so that  eigenstate localization~\cite{yin2024eigenstate} can appear in a cLDPC system~\footnote{It remains open whether eigenstate localization can happen in qLDPC codes.}, the dynamical localization time can become infinite (at finite $n$). We will describe the details in Methods~D (\ref{methods:eigenLoc}) and provide rigorous proofs in Appendix~\ref{app:pEigenLoc}. In contrast to the original work~\cite{yin2024eigenstate} on eigenstate localization, our methods can straightforwardly generalize the $q$-local perturbations to quasi-$q$-local.

\smallskip\noindent{\bf Thermodynamic stability.} In reality, quantum devices inevitably interact with environments and become open systems, whose equilibrium state can be described by the Gibbs state $\rho_H=\frac{1}{\mathcal{Z}}e^{-\beta H}$. For systems weakly coupled to a large and stable environment, the evolution of the density matrix can be described by a static (i.e., time-independent) Lindbladian. This can also be equivalently described by repetitively applying a Markovian quantum channel $\mathcal{M}$, which is more common in the design of Gibbs samplers~\cite{chen2023efficient,PhysRevLett.134.140405-MixingTime,ding2025efficient}.

Suppose that the system is initialized at one codeword. If the evolving state quickly reaches thermal equilibrium, then all codewords will be mixed together and the encoded information will be lost. Thus, slow mixing indicates the thermodynamic stability of LDPC codes.

For LDPC codes with linear soundness, the mixing time is known to be exponentially long~\cite{hong2025quantum}. In addition, even if strictly $q$-local perturbations are added $H=H_C+V_0$, the slow mixing property of Gibbs samplers persists~\cite{rakovszky2024bottlenecks,gamarnik2024slow}. We now extend the results to quasi-$q$-local perturbations via Theorem~\ref{thm:static-Energy-loc}.

\begin{pro}[Robust slow mixing of Gibbs sampler with (quasi-)$q$-local perturbations]
\label{pro:SlowMixQuasi}
Let $H=H_C+V_0$, with $H_C$ defined in Eq.~\eqref{eq:ldpcHC} and $V_0$ being a (quasi-)$q$-local perturbation. If $\lambda$ (and $q_{\star}$) defined in Cases.~\ref{case:2'}-\ref{case:4'} is sufficiently small, any \mbox{(quasi-)}$q$-local Gibbs sampler $\mathcal{M}$ whose steady state is $e^{-\beta H}$ has an exponentially long mixing time $e^{\Omega(n)}$.
\end{pro}

We show the proof in Appendix~\ref{app:pGibbs} by invoking the quantum bottleneck theorem developed in~\cite{rakovszky2024bottlenecks}. A key step in this proof is to upper bound the low-energy leakage in high-energy eigenstates, which can be exponentially bounded by Theorem~\ref{thm:static-Energy-loc}.

\section{Localization as an algorithmic barrier in hard optimization problems}
\label{sec:LocClassicalOpt}

Classical optimization problems are widely studied in computer science, physics, and many other areas. Among them, many are classified as NP-hard in the worst cases. Although some NP-hard problems admit efficient approximate algorithms~\cite{goemans1995improved,PhysRevLett.35.1792-SKSpinGlass,montanari2025optimization}, some problems in certain parameter regions, however, are difficult to find approximate solutions even in average cases. These include $p$-spin glasses~\cite{chen2019suboptimality,gamarnik2021overlap,Anschuetz2024combinatorialnlts}, max $k$-SAT~\cite{doi:10.1126/science.1073287-parisi,achlioptas2011solution,ProofkSat,bresler2022algorithmic,Anschuetz2024combinatorialnlts} and maximum independent sets~\cite{coja2015independent,gamarnik2017limits}.

Such a hardness of approximation is closely related to the solution space geometry~\cite{prl-clusteringSolRSP,gamarnik2021OGP-pnas} of these problems (see Fig.~\ref{fig:OGP}), which is similar to that of error-correcting codes: States with energy below a certain threshold are well-separated into clusters. A way to describe such a feature is the overlap gap property (OGP)~\cite{gamarnik2021OGP-pnas}: Suppose that the optimal value for an optimization problem $\mathcal{L}(\mathbf{z})$ is $E_g=\min_{\mathbf{z}}\mathcal{L}(\mathbf{z})$, and let $\mathbf{z}_1$ and $\mathbf{z}_2$ be two near-optimal configurations such that $\mathcal{L}(\mathbf{z}_{1/2})\leq E_g +\tilde{B}$. The OGP requires that $\mathbf{D}(\mathbf{z}_{1},\mathbf{z}_{2})$ be either less than $\tilde{\nu}_1$ or more than $\tilde{\nu}_2$, where $0\leq \tilde{\nu}_1<\tilde{\nu}_2$. For most models~\cite{coja2015independent,chen2019suboptimality,Anschuetz2024combinatorialnlts}, $\tilde{\nu}_1$ will approach 0 near the ground states, so that one can achieve $\tilde{\nu}_2>2\tilde{\nu}_1$ by choosing a proper $B<\tilde{B}$. In this case, one can verify that the clustering property \ref{def:cluster} is always satisfied with $\nu_1=\tilde{\nu}_1$ and $\nu_2=\tilde{\nu}_2$ in the energy window $[E_g,E_B]$ with $E_B\equiv E_g+B$.

\begin{figure}[ht]
    \centering
    \includegraphics[width=0.46\textwidth]{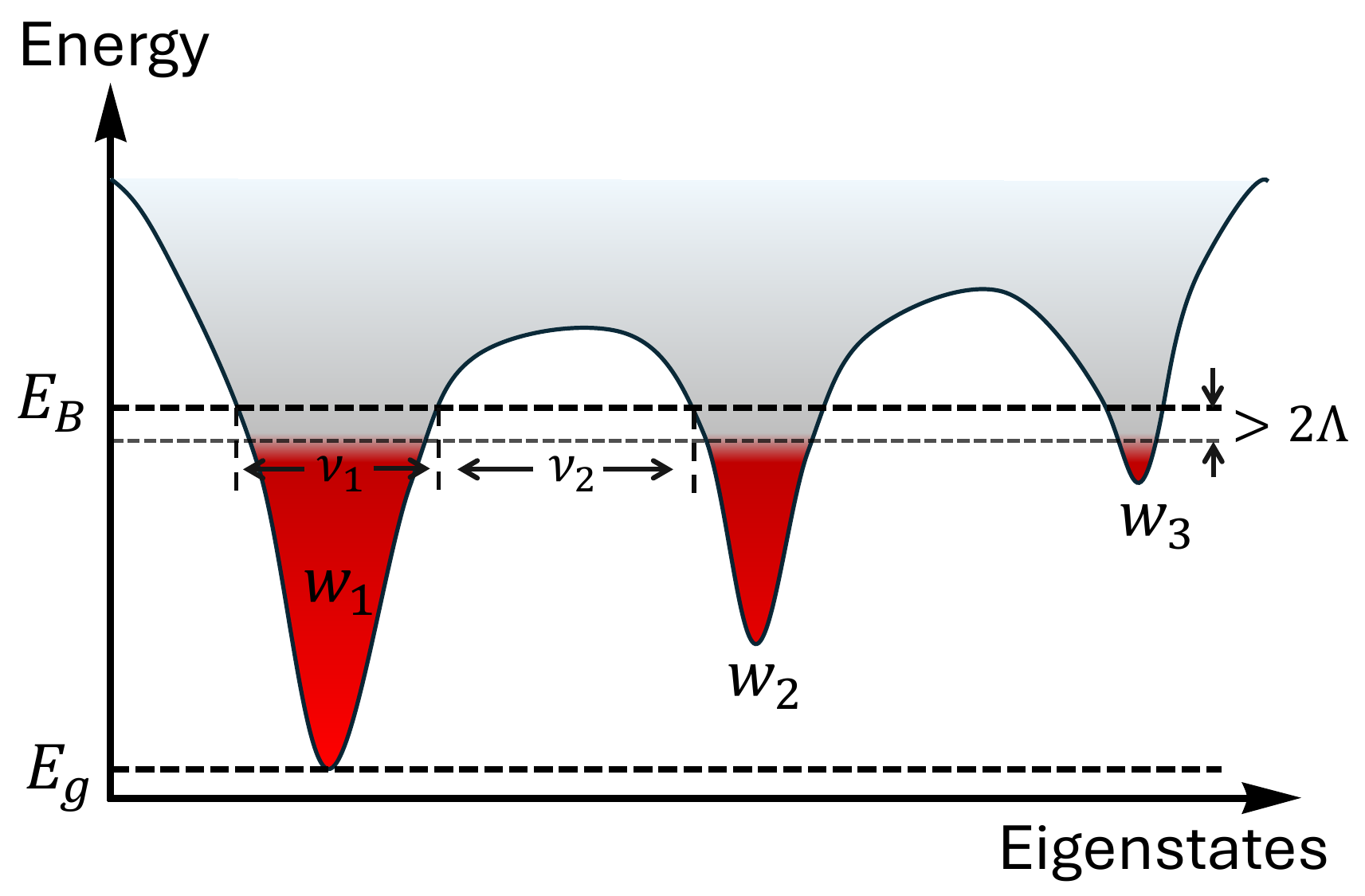}
    \caption{Illustration of the clustering property in hard optimization problems, where all eigenstates are simply $Z$-basis states. If the total variance of the algorithmic driving is $\Lambda$, then states localized below $E_B-2 \Lambda$ (red zones) cannot escape from their clusters. In most cases, the number of the shallower local minima, like $w_3$, is much larger than the deeper minima, like $w_1$ and $w_2$.}
    \label{fig:OGP}
\end{figure}

\smallskip\noindent{\bf Ineffectiveness of Hamiltonian-based quantum algorithms.} Quantum algorithms that try to solve these hard optimization problems have been extensively studied~\cite{doi:10.1126/science.1057726-AQC,farhi2014quantum,knysh2016zero,benjamin2017measurement,PRX-QOptSpinGlass,PRXQuantum.5.030348-SolBolSATQAOA,yu2021quantum,ebadi2022quantum,PRL-ScalingAdvantageQA,king2025beyond} in both theoretical and experimental works. In this section, we focus on a class of quantum algorithms that use Hamiltonian evolution for solution search and discuss the effects of a clustered solution space on these algorithms. For a classical (unconstrained) binary optimization problem $\min_\mathbf{z} \mathcal{L}(\mathbf{z})$, we can always replace the classical binary variables $\textnormal{z}$ with $\frac{1-Z}{2}$ and define the corresponding Pauli $Z$-basis qubit Hamiltonian $H_{\mathcal{L}}$, whose ground state encodes the optimal solution of $\mathcal{L}$. The algorithmic Hamiltonian we consider is defined by
\begin{equation}
\label{eq:HdefOGP}
    H(t)=H_{\mathcal{L}}+V(t),
\end{equation}
where $V(t)$ is a $q$-local Hamiltonian aiming to drive the evolving state to near-ground states of $H_{\mathcal{L}}$, and often vanishes at the end of the algorithm. Many generic quantum algorithms contain this type of evolution, such as quantum (diabatic) annealing~\cite{PhysRevE.58.5355-QAnnealing,crosson2021prospects}, quantum adiabatic algorithm~\cite{farhi2000quantum,doi:10.1126/science.1057726-AQC,RevModPhys.90.015002-AQCreview} and quantum Hamiltonian descent~\cite{leng2023quantum}.

The general performance analysis for these algorithms has not been established. Our results on dynamical localization partially fill the gap in the region where changes in $V(t)$ are sufficiently small. In this case, any near-optimal solution is stuck in each cluster and cannot be improved by any algorithmic evolution $H(t)$ with a total variation less than $\Lambda$ up to a superpolynomial time. This can be summarized as:

\begin{pro}[Freezing of solutions]
\label{pro:freezing}
If the total variation $\Lambda$ (defined in Eq.~\eqref{eq:dyLambda-q}) of the algorithmic Hamiltonian $H(t)$ (defined in Eq.~\eqref{eq:HdefOGP}) in a time interval is below $B/2$ and $\|H(t)\|_{\textnormal{loc}}\leq M$, then any $Z$-basis state initially inside a cluster $w_0$ with energy below $E_B-2\Lambda$ is localized inside $w_0$ up to $T\sim \frac{1}{\Lambda}e^{\Omega(\Lambda/M)}$.
\end{pro}

The results can be directly inferred from the dynamical localization~\ref{pro:DyLocalt} and rule out the success of those quantum algorithms without sufficiently large variations to find the near-optimal results. Here, we can choose the problem of finding the (near) ground state energy of a $q$-spin glass~\cite{Anschuetz2024combinatorialnlts} on a random regular $p$-degree $q$-uniform hypergraph $G$ via the adiabatic algorithm as an example, whose problem Hamiltonian can be written as  
\begin{equation}
    H_{\mathcal{L}}\equiv\sum_{\braket{i_1,i_2,...i_q}\in  G} J_{i_1,i_2,...i_q} Z_{i_1}Z_{i_2}...Z_{i_q},
\end{equation}
where $q\geq 4$ is the size of an hyperedge and each $J_{i_1,i_2,...i_q}$ can be chosen independently and randomly from $\{-1,1\}$. It was demonstrated in~\cite{Anschuetz2024combinatorialnlts} that the model has the clustering property in a linear energy window $[E_g, E_g+B]$ near the ground states. The corresponding adiabatic algorithm can be given via the following time-dependent Hamiltonian, 
\begin{equation}
\label{eq:adiaEvo}
    H(t)=s(t) H_{\mathcal{L}}+\left(1-s(t)\right)H_M,
\end{equation}
where $H_M$ is a coupling Hamiltonian whose ground state can be easily prepared, e.g., $H_M=\sum_j X_j$, and $s(t)$ is a monotonic function increasing from 0 to 1. As $\|H_{\mathcal{L}}\|_{\textnormal{loc}}= p$ (the degree of the hypergraph), the parameters $M$ and $\Lambda,B\sim \Theta(n)$ are still well defined in Theorem~\ref{thm:Energy-local} and Proposition~\ref{pro:DyLocalt}, where the total variation $\Lambda$ for the evolution from some finite $s>0$ to $1$ can be defined as $\Lambda=\frac{1-s}{s}\|H_M\|_{\textnormal{X}}$. On the other hand, when $s$ is close to 0, the corresponding total variation of the adiabatically driving is typically not small, whose effects may not fall into the scope of our theorems.

In~\cite{doi:10.1073/pnas.1002116107-LocMakeAdiaFail}, it was demonstrated from perturbative arguments that at some vanishing (when $n\rightarrow\infty$) $\tilde{\lambda}$ the $H(t)$ will have an exponentially small gap between ground states and excited states in the worst cases, which means adiabatic algorithms can take an exponentially long time to find the exact ground state of a worst-case $H_{\mathcal{L}}$. Since finding the exact minimum is unlikely, if we reduce our goal from exact ground states to low-energy states with energy much below $E_B$, can evolution with polynomial time produce good results? Unfortunately, our results give a negative answer: For $s(t)\geq s_{\star} $ and $\lambda_{\star}\equiv\frac{1-s_{\star}}{s_{\star}}\ll 1$ is small but still $\Theta(1)$, which is far before the vanishing $\tilde{\lambda}$ in~\cite{doi:10.1073/pnas.1002116107-LocMakeAdiaFail}, the system already fall into the region where any solutions below $E_B-2\lambda_\star n$ are frozen inside their clusters and cannot directly tunnel between each other (see Fig.~\ref{fig:OGP}). In this case, not just the exact optimum in a worst-case $H_{\mathcal{L}}$, even approximate optimal solutions in a typical-case $H_{\mathcal{L}}$ become hard to find.

Thus, if the state before $s_\star$ does not contain enough weights in near-optimal clusters, they cannot be improved by the evolution from $s_\star$ to 1 within polynomial time. On the other hand, if we do find some near-optimal solutions before $s_\star$, dynamical localization can protect good solutions from escaping. This means that even if we rapidly drive $s(t)$ from $s_\star$ to 1, those good solutions still remain close to their local minima and can be efficiently recovered by classical local search methods.

We leave open the possibilities whether those solutions in near-optimal (deep) clusters can have significant weights in the evolving states before $s_{\star}$, and add some intuitive remarks here. Firstly, due to the locality of the coupling $H_M$, the evolving state cannot fully explore the low-energy solution subspace of $H_{\mathcal{L}}$ without reaching high-energy solutions. Within those high-energy solutions slightly above $E_B$, typically there are exponentially more of them connected to shallow clusters than those connected to deep clusters~\cite{gamarnik2021overlap}. Thus, we expect that it is generally hard for both classical and quantum algorithms to significantly explore inside the deep clusters to find good solutions. On the other hand, given the flexibility in choosing $H_M$ and evolution paths, it is also hard to rigorously rule out all possible polynomial-time adiabatic algorithms to find good solutions. Even if we consider a weaker task to rule out the possibilities to find the exact optimum, this remains extremely hard. If one can prove no polynomial-time adiabatic algorithms are capable to find the exact optimum, given their equivalence to circuit-based quantum algorithms~\cite{aharonov2008adiabatic,yu2018exact}, it is equivalent~\footnote{Although finding the exact ground state energy (up to certain precision) is not directly in QMA by definition, it can be polynomially reduced to the decision version of the local Hamiltonian problems, which is a canonical QMA problem.} to show that a problem can be in QMA~\cite{kempe2006complexity} but not in BQP~\cite{bernstein1993quantum}, which would be a groundbreaking result in quantum complexity theory.

Nevertheless, regardless of its performance before $s_{\star}$, if our goal is to find classically inaccessible solutions much below $E_B$, the late-time evolution of $s(t)$ from $s_{\star}$ to 1 is ineffective, in the sense that it will neither improve nor worsen the output states in polynomial time. For such a goal, unless we are given an exponentially long runtime, we should accelerate the late-time evolution regardless of the possible exponentially small gap within this time interval, or we can simply stop the evolution at $s_{\star}$, which will not affect much the quality of the final output states. 

\section{Experimental verification}

In this section, we briefly discuss how to experimentally detect the energy-space localization and its consequences. Directly probing the exponentially small leakage outside the energy window $\mathcal{E}_0(d)$ is usually infeasible in practice. Thus, we need the unusual stability inferred from energy-space localization in systems with energy barriers to detect and confirm it. One approach is to test the stability of Hamiltonian built from LDPC codes or hard optimization problems under time-dependent perturbations.

For cLDPC codes specifically, the experimental procedure is simpler. We can prepare the initial state to be any codeword for simplicity, which is in computational basis, then add a generic time-dependent perturbation with bounded total variation. After a long time, we can measure in the $Z$ basis and check whether we obtain: 1. The number of violated checks is bounded; 2. The output bit string is close to the initial state in terms of average Hamming distance. Similar experiments can also be performed on models built from hard optimization problems, where one can prepare the initial state to be an approximate classical solution close to optimum.

For qLDPC codes, detecting the localization is more complicated, as the state distance here is hard to directly measure. We can initialize the system at a logical $Z$ state for simplicity, and evolve the perturbed system for a long time. We repeat this with identical perturbations for multiple runs and obtain many copies of the erroneous states after long-time perturbation. Then, (1) we  measure the stabilizers to see the syndromes, (2) we apply the decoder to correct them, and  (3) we measure the logical $Z$ operators to see if it remains the original codeword. Our energy-space localization predicts that the number of syndromes will saturate to $\lesssim\lambda n$ even after a very long time, and the corrected logical $Z$ values remain the same as initial states with high probability, which one can implement and verify with sufficient statistics.

To distinguish this new stability from other stable phases like MBL and prethermalization, the generic time-dependency of the perturbations is essential. To fully rule out the effects from prethermalization or other localization mechanisms, one can implement many time-dependent perturbations and evolve the system for at least $T\gg O(n)$ time and average over the results. We expect that the effects from energy-localization can still survive while other mechanisms cannot.

We remark that existing constructions of good LDPC codes and hard optimization problems all require long-range connectivity of qubits, which is not easy to realize via native local couplings of qubits. However, nowadays such long-range couplings can be experimentally implemented in quantum platform of trapped ions~\cite{cirac1995quantum,monroe1995demonstration,bruzewicz2019trapped} and neutral atom systems~\cite{bluvstein2024logical,bluvstein2026fault}. For models built from hard optimization problems, another way to circumvent long-range interactions is to encode them in equivalent 2D problems~\cite{nguyen2023quantum}. Therefore, the dynamics of the models discussed can in principle be simulated on these experimental platform, but the intrinsic noise inside current quantum computers limit the evolution time of those simulations.

\section{Conclusion and Discussion}
\label{sec:discussion}

In this work, we have introduced and proved the energy-space localization as a universal property of time-dependent $q$-local Hamiltonians (Theorem~\ref{thm:Energy-local}). The static energy-space localization stated in Theorem~\ref{thm:static-Energy-loc} can be regarded as a special limit of the dynamical case in Theorem~\ref{thm:Energy-local}. These theorems, serving as both fundamental theoretical results and versatile mathematical tools, have enabled us to demonstrate wide applications in proving the stability of systems with extensive energy barriers, such as in certain LDPC codes and hard optimization problems.

The energy-space localization also allows us to study generic quantum evolution in the lens of energy changes. In the thermodynamic limit, the exponentially decaying contributions from eigenstates with $D>\Lambda$ naturally reduce to the classical constraint due to energy conservation, demonstrating a provable quantum-classical correspondence. The property also holds ubiquitously in qubit and fermionic systems, as long as their local norms are bounded, except for bosonic systems, where the local norms are unbounded.

Using energy-space localization, we have shown that systems with extensive energy barriers exhibit extremely stable phases, even under time-dependent perturbations. Among the various stabilities demonstrated for LDPC codes (as summarized in Table~\ref{tab:LDPC-results}), the exponentially long dynamical localization against generic time-dependent perturbations stands out for its theoretical novelty and practical implications. This result not only establishes proof of stability in a previously considered intractable time-dependent setting but also provides a theoretical foundation on how error-correcting codes protect information against realistic time-varying noise. We have also extended other previously known stabilities of LDPC codes to quasi-$q$-local perturbations. On the other hand, for hard optimization problems, we proved that the evolving state under an algorithmic driving Hamiltonian becomes trapped inside local minima for an exponentially long time if the total variation of the driving is not sufficiently large. In summary, our theoretical framework rigorously connects the intrinsic property of quantum evolution with the stability of error-correcting codes and the algorithmic hardness of optimization problems from the perspective of the energy space.

We have also discussed how to experimentally probe the unusual localization. Key difficulties for these experiments are the requirements for long-range qubit connectivity and long-time unitary evolution. Although long-range connectivity can indeed be realized in many quantum platforms, the long-time evolution is hard to simulate due to the noise. 

Our results also suggest several promising future research directions. Firstly, since our leakage bounds apply to generic systems, they can be further improved in specific systems, such as mean-field models (e.g., SYK models~\cite{PhysRevLett.70.3339}) and in specific circumstances, such as short-time evolution and the low-energy subspace. Secondly, the bounds are closely related to the instantaneous higher moments of a Hamiltonian and its expectations of the nested commutators, which also appear in the study of Lieb-Robinson bounds~\cite{PhysRevX.10.031009}, operator growth~\cite{PRR-OpGrowDisorder,PhysRevX-UnivOpGrow}, and Krylov subspace methods~\cite{nandy2025quantum}, where many well-established tools in these areas might be combined with our energy-space localization to obtain better bounds. Our theoretical framework may also help establish deeper connections among them, e.g. in quench dynamics.

%\section{Acknowledgment}
\medskip \noindent {\bf Acknowledgments}.
The authors thank David Gamarnik for insightful discussions on the relation between OGP and the clustering property. This work was partly supported by the U.S. Department of Energy,
Office of Science, National Quantum Information Science
Research Centers, Co-design Center for Quantum Advantage
(C2QA) under Contract No. DE-SC0012704, in particular, on
the consequences of localization on the LDPC codes and quantum algorithms, and by the National Science Foundation under Grant No. PHY
2310614, in particular, on the part of the energy-space localization.

\bibliography{tMBL}

\newpage

\section{Methods}

In this section, we provide some brief but intuitive descriptions of the techniques used in this work and refer the readers to the Appendices for detailed and rigorous proofs. In Sec.~\ref{methods:extensionForESL}, we discuss some simple extensions and corollaries from energy-space localization in more generalized scenarios. In Sec.~\ref{methods:staticReduce}, we briefly describe how to reduce the time-dependent energy-space localization (Theorem~\ref{thm:Energy-local}) to static cases (Theorem~\ref{thm:static-Energy-loc}). In Sec.~\ref{methods:dyLoc}, we give an intuitive descriptions of the techniques used in proving dynamical localization of LDPC codes. In Sec.~\ref{methods:eigenLoc}, we briefly show that if the perturbed eigenstates remain localized under static perturbations, which only appears in cLDPC codes (and classical hard optimization problems), the dynamical localization time can be infinitely long. In Sec.~\ref{methods:sym}, we illustrate with a simple non-interacting model that can exhibit all the stabilities discussed in this work to show that they arise purely due to the energy barriers. This is made possible by restricting the dynamics in a clustered subspace via symmetry induced from a hard optimization problem.

\subsection{Extensions for energy-space localization}
\label{methods:extensionForESL}

The key step (see Appendix~\ref{app:pEnergyLoc}) to prove the energy-space localization (Theorem~\ref{thm:Energy-local}) is to bound the growth of arbitrary $2k$-th moments $\braket{\psi(t)|(H(t)-E_0)^{2k}|\psi(t)}$. The bounds are closely related to the growth of the nested commutators $\mathrm{ad}^m_{H(t)}(H'(t))$, where we denote $\mathrm{ad}_H (V)\equiv[H,V]$ and $\mathrm{ad}^{m}_H (V)\equiv[H,\mathrm{ad}^{m-1}_H(V)]$ for brevity. Thus, our results apply to any $H(t)$ whose $\mathrm{ad}^m_{H(t)}(H'(t))$ follows similar scalings, which can be summarized in the following corollary: 

\begin{Cor}[Bounds for energy-space localization from nested commutators, informal]
For any $H(t)$ with $\int\|\mathrm{d}H\|_X \leq \lambda n$, if the norm of the nested commutators $\left\|\mathrm{ad}_{H(t)}^m\left(\frac{H'(t)}{\|H'(t)\|_{\textnormal{X}}}\right)\right\|$ can be uniformly bounded by a series $F_m\sim O(m!\Delta^m)$ that grows at most factorially with $m$, the leakage outside the energy window $\mathcal{E}_0(d)$ can be bounded by $\epsilon^{(1)}_{\lambda,\Delta}(d)$. If $F_m\sim O(\Delta^m)$ grows at most exponentially with $m$, the bound can be improved to $\epsilon^{(2)}_{\lambda,\Delta}(d)$.
\end{Cor}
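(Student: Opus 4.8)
The plan is to convert the leakage into a large-deviation bound for the ``energy'' $H(t)-E_0$ in the evolving state, and to control that bound by propagating a single scalar along the evolution. I will introduce the tilted functional $G_\beta(t)\equiv\braket{\psi(t)|e^{\beta(H(t)-E_0)}|\psi(t)}$ for real $\beta$ (equivalently, one can track all even central moments $\braket{\psi(t)|(H(t)-E_0)^{2k}|\psi(t)}$ and use Markov's inequality; the two routes give the same rate). Since $\ket{\psi(0)}$ is an eigenstate of $H(0)$ at energy $E_0$ we have $G_\beta(0)=1$, and since $\ket{\psi(T)}$ is expanded in the instantaneous eigenbasis of $H(T)$, the operator inequality $P_{\ge dn}\le e^{-\beta dn}e^{\beta(H(T)-E_0)}$ (valid for $\beta>0$, with $P_{\ge dn}$ the projector onto eigenstates with $E-E_0\ge dn$), together with the mirror bound for the lower tail, gives $\epsilon(d)^2=\mathbf{P}(|E-E_0|\ge dn)\le 2\inf_{\beta>0}e^{-\beta dn}\,G_\beta(T)$. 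So everything reduces to an upper bound on $G_\beta(T)$.

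The engine is a differential inequality for $G_\beta$. Differentiating with $\ket{\dot\psi}=-iH\ket\psi$, the unitary contribution $i\braket{\psi|[H,e^{\beta(H-E_0)}]|\psi}$ vanishes because $e^{\beta(H-E_0)}$ commutes with $H$, leaving $\tfrac{d}{dt}G_\beta=\braket{\psi|\partial_t e^{\beta(H(t)-E_0)}|\psi}$. By Duhamel's formula followed by a \emph{symmetric} re-centering of the two exponential factors,
\begin{equation}
\partial_t e^{\beta(H-E_0)}=e^{\frac{\beta}{2}(H-E_0)}\,B(t,\beta)\,e^{\frac{\beta}{2}(H-E_0)},\qquad B(t,\beta)=\beta\int_0^1 e^{(s-\frac12)\beta\,\mathrm{ad}_{H(t)}}\!\big(H'(t)\big)\,\mathrm{d}s ,
\end{equation}
so that $\big|\tfrac{d}{dt}\ln G_\beta\big|\le\|B(t,\beta)\|$. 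Now the hypothesis enters: expanding $e^{c\,\mathrm{ad}_H}(H')=\sum_{m\ge0}\tfrac{c^m}{m!}\mathrm{ad}_H^m(H')$ with $|c|\le|\beta|/2$ gives $\|e^{c\,\mathrm{ad}_H}(H')\|\le\|H'\|_{\textnormal{X}}\sum_m\tfrac{|c|^m}{m!}F_m$; if $F_m=O(\Delta^m)$ this series is $O(e^{|c|\Delta})$ and $\|B(t,\beta)\|\lesssim\tfrac{\|H'(t)\|_{\textnormal{X}}}{\Delta}(e^{|\beta|\Delta/2}-1)$, whereas if $F_m=O(m!\Delta^m)$ the $m!$'s cancel, the series is geometric, $O((1-|c|\Delta)^{-1})$ for $|\beta|\Delta<2$, and $\|B(t,\beta)\|\lesssim\tfrac{\|H'(t)\|_{\textnormal{X}}}{\Delta}\ln\tfrac{1}{1-|\beta|\Delta/2}$.

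Integrating $\big|\tfrac{d}{dt}\ln G_\beta\big|\le\|B(t,\beta)\|$ from $0$ to $T$, using $G_\beta(0)=1$ and $\int_0^T\|H'(t)\|_{\textnormal{X}}\,\mathrm{d}t=\int_0^T\|\mathrm{d}H\|_{\textnormal{X}}\le\lambda n$ — the step that trades the elapsed time for the reparametrization-invariant total variation — yields $G_\beta(T)\le\exp\!\big(\tfrac{2\lambda n}{\Delta}(e^{|\beta|\Delta/2}-1)\big)$ in the exponential case and $G_\beta(T)\le(1-|\beta|\Delta/2)^{-2\lambda n/\Delta}$ in the factorial case, each up to an $O(1)$ prefactor immaterial against the $e^{-\Theta(n)}$ decay. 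Feeding this into the Chernoff bound and minimizing over $\beta$ is a Legendre transform: $-\beta dn+\tfrac{2\lambda n}{\Delta}(e^{\beta\Delta/2}-1)$ is minimized at $\beta^\ast=\tfrac2\Delta\ln\tfrac d\lambda$ with rate function $\tfrac d\lambda\ln\tfrac d\lambda-(\tfrac d\lambda-1)$, reproducing $\epsilon^{(2)}_{\lambda,\Delta}(d)$ once $\epsilon=\sqrt{\mathbf P}$; and $-\beta dn-\tfrac{2\lambda n}{\Delta}\ln(1-\beta\Delta/2)$ is minimized at $\beta^\ast=\tfrac2\Delta(1-\tfrac\lambda d)$ with rate function $\tfrac d\lambda-1-\ln\tfrac d\lambda$, reproducing $\epsilon^{(1)}_{\lambda,\Delta}(d)$. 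In both cases the minimizer is admissible exactly when $d>\lambda$, which is why the hypothesis requires it; the two rate functions are precisely the Legendre transforms of $e^x-1$ and of $-\ln(1-x)$.

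I expect the main obstacle to be the symmetric re-centering in the displayed identity, which is what makes the estimate close on itself — bounding $|\tfrac{d}{dt}G_\beta|$ by $\|B\|\,G_\beta$ with $B$ a conjugation-balanced integral of $\mathrm{ad}_H$-exponentials, rather than by $H'$ dressed unevenly by $e^{s\beta(H-E_0)}$ on one side and $e^{(1-s)\beta(H-E_0)}$ on the other, which would not collapse to a clean multiple of $G_\beta$. Closely related is the factorial case, where the commutator series converges only for $|\beta|<2/\Delta$, forcing a check that the Legendre optimizer lies inside that window for all $d>\lambda$. Finally, to recover Theorem~\ref{thm:Energy-local} from this Corollary one still has to supply the $F_m$: a Lieb--Robinson-type count using only $\|H\|_{\textnormal{loc}}\le M$ (hence valid even for all-to-all couplings) gives $\|\mathrm{ad}_H^m(H')\|\le m!\,(2qM)^m\|H'\|_{\textnormal{X}}$, i.e.\ $F_m=O(m!\Delta_q^m)$ and the bound $\epsilon^{(1)}$; and for Case~\ref{case:3} a finer argument — using $\mathrm{ad}_V(H')=0$ and the fact that for a mutually commuting $H_C$ the superoperators $\mathrm{ad}_{h_A}$ mutually commute, so that $H_C$ generates a finite light cone instead of factorially spreading support — removes the $m!$, giving $F_m=O(\Delta_q^m)$ and hence $\epsilon^{(2)}$.
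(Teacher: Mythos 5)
Your proposal is correct and delivers the same rate functions, but it is a genuinely different route from the paper's. The paper works directly with the family of moments: it sets $g_k(t)\equiv\|\tilde H(t)^k\ket{\psi(t)}\|$, uses the operator identity $\partial_t\tilde H^k=\sum_{m=1}^k(-1)^{m-1}\binom{k}{m}\tilde H^{k-m}\mathrm{ad}_{\tilde H}^{m-1}(H')$ (and its conjugate, so all remaining $\tilde H$ factors can be split evenly across the inner product), applies Cauchy--Schwarz to close the recursion $g_k'\le\sum_m\binom{k}{m}g_{k-m}\|\mathrm{ad}^{m-1}(H')\|$, then compares $g_k$ to the comparison polynomials $f_k^{(1)}(x)=\prod_{j<k}(x+j)$ (unsigned Stirling first kind) or $f_k^{(2)}(x)=\sum_j\left\{k\atop j\right\}x^j$ (Touchard/Poisson moments), and finally invokes Markov together with the standard ``best-moment $\le$ Chernoff'' fact. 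You instead sum all moments into the single scalar $G_\beta(t)=\braket{\psi|e^{\beta\tilde H(t)}|\psi}$, obtain the one-line Gronwall inequality $|\partial_t\ln G_\beta|\le\|B(t,\beta)\|$ via the symmetrically re-centered Duhamel representation, and go straight to Chernoff. Your ``symmetric re-centering'' is exactly the MGF-level analogue of the paper's balanced Cauchy--Schwarz split $\braket{\tilde H^k\psi|\mathrm{ad}^{m-1}(H')|\tilde H^{k-m}\psi}$ — in both arguments the trick is to dress both sides of the inner product with the same amount of $\tilde H$, which is what makes the estimate close on itself. What each buys: the paper's finite-moment version is more elementary (no operator exponentials, stays polynomial, and the Stirling structure is explicit, which is also reused in Appendices~\ref{app:proofF1k}--\ref{app:proofF2k}); yours is more compact, makes the convergence radius $|\beta|<2/\Delta$ in the factorial case transparent, and exposes the rate functions as Legendre transforms of $e^x-1$ and $-\ln(1-x)$ in a way the paper leaves implicit. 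One small blemish: your displayed bounds $\|B\|\lesssim\frac{\|H'\|_X}{\Delta}(e^{|\beta|\Delta/2}-1)$ and $\|B\|\lesssim\frac{\|H'\|_X}{\Delta}\ln\frac{1}{1-|\beta|\Delta/2}$ each drop a factor of $2$ (the exact bound is $\frac{2\|H'\|_X}{\Delta}(\cdots)$), but since your subsequent $G_\beta$ bounds and Chernoff minimization carry the correct factor $\frac{2\lambda n}{\Delta}$, the final rates are unaffected.
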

The Corollary can be inferred from the proofs in Appendix~\ref{app:pEnergyLoc}. Intuitively, since commuting modifications of $H(t)$ will not mix different eigenstates in the spectrum of the unmodified Hamiltonian, the contribution of the weights outside the energy window can only come from the non-commuting terms, which can be quantified by the nested commutators $\mathrm{ad}^m_{H(t)}(H'(t))$. As the nested commutators grow at most factorially~\cite{PhysRevX-UnivOpGrow,PRR-EulideanOpGrow,PRR-OpGrowDisorder} with $m$ for $q$-local Hamiltonians, our results can be applied to any $q$-local $H(t)$ with a bounded local norm.

We can then extend the energy-space localization to further cases (i.e., Case~\ref{case:2}-\ref{case:4} in the main text), where the scaling of the nested commutators remains bounded above by a factorial growth (see Appendix~\ref{app:proofCommute}).

Although our result requires the initial state to be an eigenstate of $H(0)$, it can be easily extended to general initial states $\ket{\Psi(0)}$, which can always be expanded $\ket{\Psi(0)}=\sum_j c_j \ket{\psi_j(0)}$ in the eigenbasis of $H(0)$. Taking Case~\ref{case:1} as an example, the final state is simply $\ket{\Psi(T)}=\sum_j c_j \ket{\psi_j(T)}$, where each $\ket{\psi_j(T)}$ is the final state evolving from $\ket{\psi_j(0)}$ via $H(t)$. Suppose that the energy window defined for each $\ket{\psi_j(0)}$ is $\mathcal{E}_j(d)\equiv[E_j-dn,E_j+dn]$, then we can bound the final weight $\epsilon_E$ for any energy-$E$ level by
\begin{equation}
    \epsilon_E\leq \inf_{d>\lambda} \left(\sum_{E \in \mathcal{E}_j(d)} |c_j| +\sum_{E \notin \mathcal{E}_j(d)}| c_j| \epsilon^{(1)}_{\lambda,\Delta_q}(d)\right).
\end{equation}

In addition, our method does not rely on specific scalings of the total variation or local norm, the results can easily be extended to general scalings.

\begin{Cor}[Bounds for general scalings, informal]
If $\Lambda$ and $M$ of $H(t)$ from $\int\|\mathrm{d}H\|_X \leq \Lambda$ and $\|H(t)\|_{\textnormal{loc}}\leq M$ both follow general scalings with $n$, then one can choose an energy window $[E_0-D,E_0+D]$ with $D\gtrsim \Lambda$, such that the leakage outside the window can be bounded at the order of $e^{- \Omega(\Lambda/M)}$.
\end{Cor}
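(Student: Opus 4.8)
\emph{Proof proposal.} The plan is to revisit the moment argument behind Theorem~\ref{thm:Energy-local} in Appendix~\ref{app:pEnergyLoc} and note that it never uses the specific scalings $\Lambda=\lambda n$ or $M=O(1)$: one simply carries $\Lambda(n)$ and $M(n)$ as free parameters throughout. Write $G_k(t)\equiv\langle(H(t)-E_0)^{2k}\rangle_t$ for the $2k$-th central moment of the instantaneous energy distribution. Since $(H(t)-E_0)^{2k}$ is a function of $H(t)$ it commutes with $H(t)$, so the unitary part of the evolution contributes nothing and $\dot G_k(t)=\langle\partial_t[(H(t)-E_0)^{2k}]\rangle_t$ exactly. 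Expanding this derivative and commuting each factor $H'(t)$ toward the center of the string $(H(t)-E_0)^{a}H'(t)(H(t)-E_0)^{b}$ via $A^pB=\sum_m\binom{p}{m}(\mathrm{ad}^m_AB)A^{p-m}$ produces nested commutators $\mathrm{ad}^m_{H(t)}(H'(t))$ flanked by balanced powers of $H(t)-E_0$; Cauchy--Schwarz together with log-convexity of the moments $l\mapsto\langle(H(t)-E_0)^{2l}\rangle_t$ then yields a differential inequality of the schematic form
\begin{equation}
\dot G_k(t)\ \le\ \|H'(t)\|_X\sum_{m\ge0}c_{k,m}\,F_m\,G_k(t)^{\,1-(m+1)/(2k)},\qquad c_{k,m}\le\frac{(2k)^{m+1}}{(m+1)!},
\end{equation}
where $F_m$ is any uniform bound on $\|\mathrm{ad}^m_{H(t)}(H'(t)/\|H'(t)\|_X)\|$.

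The remaining steps follow the appendix. By $q$-locality and $\|H(t)\|_{\mathrm{loc}}\le M$ one has $F_m\le m!\,\Delta^m$ with $\Delta\equiv\Delta_q=2qM$ --- valid for \emph{any} $M$, and replaced by the exponential bound $F_m\le\Delta^m$ in the mutually commuting Case~\ref{case:3}. Inserting this collapses the sum into a geometric series in $2k\Delta/u$ with $u\equiv G_k^{1/(2k)}$: for $u$ above a constant multiple of $k\Delta$ the series is $O(\|H'(t)\|_X)$, while in the complementary regime a direct bound shows $u$ can grow by at most an additive $O(k\Delta)$. Because $\ket{\psi(0)}$ has energy exactly $E_0$ we have $G_k(0)=0$, so integrating the resulting inequality for $u(t)$ against $\int_0^T\|\mathrm{d}H\|_X\le\Lambda$ gives $G_k(T)\le(\Lambda+c\,k\Delta)^{2k}$ for an absolute constant $c$. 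Markov's inequality then gives, for the window $[E_0-D,E_0+D]$, $\epsilon(D)\le G_k(T)/D^{2k}\le\big((\Lambda+ck\Delta)/D\big)^{2k}$; taking $D\gtrsim\Lambda$, say $D=x\Lambda$ with a fixed $x>1$, and optimizing over the integer moment order $k$ (of size $\Theta(\Lambda/\Delta)$) returns the same leakage bounds as in Theorem~\ref{thm:Energy-local} with $\lambda n$ replaced by $\Lambda$: an exponent of the form $\exp\!\big(-\tfrac{\Lambda}{\Delta}(\tfrac{D}{\Lambda}-1-\ln\tfrac{D}{\Lambda}+o(1))\big)$ in general, and the improved $\exp\!\big(-\tfrac{\Lambda}{\Delta}(\tfrac{D}{\Lambda}\ln\tfrac{D}{\Lambda}-(\tfrac{D}{\Lambda}-1))\big)$ in Case~\ref{case:3}. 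Both exponents are $\Theta(\Lambda/\Delta)=\Theta(\Lambda/M)$ for fixed locality $q$, which is exactly $\epsilon(D)\le e^{-\Omega(\Lambda/M)}$; and since no step used $n$ except through $\Lambda$ and $M$, the four cases of Theorem~\ref{thm:Energy-local} are recovered by setting $\Lambda=\lambda n$.

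The main obstacle is the delicate point already present in the Appendix, now carried with free $\Lambda$ and $M$: driving the constant multiplying $\int\|\mathrm{d}H\|_X$ down to exactly $1$ --- rather than the crude factor a naive geometric-series bound gives --- so that the threshold is the sharp $x>1$ and the $k$-optimization produces the clean rate functions $x-1-\ln x$ and $x\ln x-(x-1)$ instead of a lossy surrogate. This requires a careful saddle-point treatment of the crossover between the two regimes of $u$ and of the $o(1)$ corrections, which one must verify vanish precisely in the regime $\Lambda/M\to\infty$ where the statement has content (if $\Lambda/M=O(1)$ the bound is a trivial $O(1)<1$). A secondary point, specific to general scalings, is that the moment order $k$ must be allowed to grow with $\Lambda/M$; this causes no trouble because every estimate above is an operator-norm bound, uniform in the Hilbert-space dimension and hence in $n$.
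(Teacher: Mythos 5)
Your proposal is correct in its essential content and gets to the right conclusion, but it takes a meaningfully different technical route from the paper. The paper's Appendix~\ref{app:generalScalingP} proof of this corollary is extremely short: it observes, exactly as you note in your opening paragraph, that nothing in the Appendix~\ref{app:pEnergyLoc} proof uses $\Lambda=\lambda n$ or $M=O(1)$, and then simply substitutes $\Lambda_t$ for $\lambda_t n$ and $D$ for $dn$ to reuse the identical recursion $g_k'\leq \sum_m\binom{k}{m}g_{k-m}(m-1)!\Delta_q^{m-1}\|H'\|_X$ (tracked through the full family $\{g_{k-m}\}_m$ and solved exactly by the Stirling/factorial polynomials $f_k^{(1)},f_k^{(2)}$), yielding $\Xi^{(1)}_{\Lambda_t,\Delta_q}(D)$ and $\Xi^{(2)}_{\Lambda_t,\Delta_q}(D)$ verbatim and reading off $e^{-\Omega(\Lambda/M)}$ when $D/\Lambda$ is a fixed constant greater than one.

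Where you diverge is in how you close the recursion. Rather than carrying the whole sequence $g_{k-m}$, you appeal to log-convexity of moments to dominate $g_{k-m}$ by $G_k^{1-m/(2k)}$, converting the recursion into a self-contained differential inequality for $G_k$ (equivalently for $u=G_k^{1/(2k)}$) which you then integrate to $g_k\lesssim(\Lambda+ck\Delta)^k$. This is a genuinely different and arguably more streamlined route; the paper never invokes log-convexity, and your ODE-for-$u$ viewpoint is cleaner conceptually. What you lose is sharpness: $(\Lambda+ck\Delta)^k$ with some absolute $c$ is strictly weaker than the paper's $\prod_{j=0}^{k-1}(\Lambda+j\Delta)$, and you correctly flag in your last paragraph that recovering the exact rate functions $x-1-\ln x$ and $x\ln x-(x-1)$ would require driving $c$ to $1$ via a saddle-point analysis --- but this extra work is unnecessary here, since the corollary only asserts $e^{-\Omega(\Lambda/M)}$. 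Two minor technical slips: the coefficient estimate $c_{k,m}\le(2k)^{m+1}/(m+1)!$ is not quite right (the factor $2\binom{k}{m+1}m!$ coming from the paper's recursion is larger for $m\gtrsim4$), and the schematic differential inequality needs the flanking powers of $H(t)-E_0$ kept at or below $k$ on both sides for log-convexity to apply with base $g_k$, which you gesture at ("toward the center of the string") but do not verify. Neither affects the validity of the $e^{-\Omega(\Lambda/M)}$ conclusion. In short: you correctly identify the key observation (the proof is scale-free in $\Lambda$ and $M$) and your alternative derivation of the moment bound is sound in outline, but the paper's own proof of this corollary is the one-liner "reuse Appendix~B with $\lambda n\to\Lambda$, $dn\to D$," and you could have simply stopped there.
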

\noindent We present a brief proof of this Corollary in Appendix~\ref{app:generalScalingP}. We remark that Theorem~\ref{thm:Energy-local} is its special case where $\Lambda=\lambda n$, $D=dn$ and $M$ is $O(1)$.

\subsection{Static reduction for energy-space localization}
\label{methods:staticReduce}

The energy-space localization from Theorem~\ref{thm:Energy-local} can be easily reduced to static perturbations. To do the reduction, we can consider $|\psi(0)\rangle$ being an eigenstate of $H=H_0+V_0$ that follows the evolution from $t=0$ to $T$ under the time-dependent Hamiltonian
\begin{equation}
\label{eq:tTostatic}
    H(t)=H_0+(1-\frac{t}{T})V_0.
\end{equation}
Obviously, $H(T)=H_0$. We then take the $T\rightarrow 0^{+}$ limit. From the Schr\"odinger equation, we know that a finite quench of the Hamiltonian will not instantly change the wavefunction, so the state $\ket{\psi(0^+)}=\ket{\psi(0)}$ remains the eigenstate of $H_0+V_0$, but is now exponentially localized in the energy space of $H_0$ by Theorem~\ref{thm:Energy-local}. Then $\epsilon^{(1)}$ and $\epsilon^{(2)}$  in Theorem~\ref{thm:Energy-local} explicitly bound the leakage outside the energy window in the eigenbasis of $H_0$. Since here $[V(t),H'(t)]=0$ is always satisfied by construction (as $V(t)=(1-t/T)V_0$), we no longer need the general Case~\ref{case:1} but only need Cases~\ref{case:2},~\ref{case:3}, and~\ref{case:4}. In this way, the Case~\ref{case:2}-\ref{case:4} can be translated into Case~\ref{case:2'}-\ref{case:4'}, respectively.

\subsection{Sketch of the proof of dynamical localization of LDPC codes}
\label{methods:dyLoc}

Here we give an intuitive proof sketch for dynamical localization (Proposition~\ref{pro:DyLocalt}) of LDPC codes and leave the full rigorous one to Appendix~\ref{app:pDynamicLoc}. For strictly $q$-local $V(t)$, since its matrix elements for eigenstates of $H_C$ (the Hamiltonian of an LDPC code) from different clusters are exactly zero, the evolving state cannot tunnel to other clusters without reaching a higher-energy eigenstate of $H_C$. Thus, if we can show that the high-energy weights $\epsilon_{>}$ in the evolving state $\ket{\psi(t)}$ in the eigenbasis of $H_C$ are exponentially small at any time within $[0,T]$, then the whole leakage can be upper-bounded roughly by $\lambda n T \epsilon_{>}$. 

To bound the high-energy weights in the eigenbasis of $H_C$ at time $t_1$, we can invoke Theorem~\ref{thm:Energy-local} for an extended evolution for $\ket{\tilde{\psi}(t)}$ under
\begin{equation}
    H_{\textnormal{ext}}(t)=\begin{cases}
        H_C &~\textnormal{for} ~t=0^{-},\\
        H_C+V(t) &~\textnormal{for} ~0\leq t\leq t_1,\\
        H_C &~\textnormal{for} ~t=t_1^{+},
    \end{cases}
\end{equation}
where we set $\ket{\tilde{\psi}(0^-)}=\ket{\psi(0)}$. We also define that the $H_{\textnormal{ext}}(t)$ at $t=0^-,0$ and $t_1,t_1^+$ are linearly connected (similar to Eq.~\eqref{eq:tTostatic}) and we let the $\ket{\tilde{\psi}(t)}$ evolve from $0^{-}$ to $t_1^{+}$. Here, we should use the total variation of $H_{\textnormal{ext}}(t)$ instead of $H(t)$ and define $\lambda$ for $q$-local perturbation in Cases.~\ref{case:1}-\ref{case:3} as 
\begin{equation}
\label{eq:dyLambda-q}
    \|V(0)\|_{\textnormal{X}}+\int_0^T  \|V'(t)\|_{\textnormal{X}}\leq \lambda n\equiv\Lambda.
\end{equation}
For quasi-$q_{\star}$-local $V(t)$ in Case~\ref{case:4}, we define $\lambda$ as
\begin{equation}
    \|V(0)\|_{q\textnormal{-X}}+\int_0^T  \|V'(t)\|_{q\textnormal{-X}}\leq \lambda n\equiv \Lambda.
\end{equation}

From Schr\"odinger's equation, we know that $\ket{\tilde{\psi}(t_1^+)}$ from $H_{\textnormal{ext}}(t)$ is the same as $\ket{\psi(t_1)}$ from $H(t)$. Thus, we can bound the high-energy leakage of $\ket{\psi(t_1)}$ in the eigenbasis of $H_C$ at any time $t_1$, at the cost of adding a factor 2 to the perturbation strength.

Therefore, as long as $\lambda<({b-\varepsilon_0})/{2}$, the leakage to high-energy states can be bounded by $\epsilon^{(1)}_{2\lambda,\Delta_q}(b-\varepsilon_0)$ (and can be improved to $\epsilon^{(2)}_{2\lambda,\Delta_q}(b-\varepsilon_0)$ if $V(t)$ satisfies the condition in Case~\ref{case:3}), which is exponentially small and is denoted by $ e^{-\xi n}$. Thus, we can choose $T\lesssim \frac{1}{\lambda n} e^{\xi_1 n}$ with any $\xi_1<\xi$, and bound the whole leakage by  $e^{-(\xi-\xi_1)n}$.

\subsection{Eigenstate localization and infinite-time dynamical localization}
\label{methods:eigenLoc}

In the main text, we show that classical and quantum LDPC codes with linear energy barriers can remain localized for an exponentially long time against time-dependent perturbations, which can also be applied to static perturbations. Here, we show that the localization can be infinitely long if the static perturbation strength is weak enough, due to eigenstate localization~\cite{yin2024eigenstate}.

This property was recently discovered~\cite{yin2024eigenstate} in certain cLDPC codes. By invoking Theorem~\ref{thm:static-Energy-loc}, we reproduce the results and generalize them to including quasi-$q$-local perturbations. Specifically, we consider the following Hamiltonian
\begin{equation}
\label{eq:HdefEigenLoc}
    H=H_C+V_0+H_{\textnormal{d}},
\end{equation}
where $H_C$ is a cLDPC Hamiltonian with linear soundness defined above, $V_0$ is a (quasi-)$q$-local perturbation with $\|V\|_{(q\textnormal{-)X}}\leq \lambda n$, and $H_{\textnormal{d}}$ is a vanishingly small detuning term and we can choose $H_{\textnormal{d}}\equiv \frac{1}{n^2}\sum_{j}h_j^z Z_j$ with $O(1)$ randomly chosen $h_j^z$. We claim that

\begin{pro}[Eigenstate localization with (quasi-)$q$-local perturbations, informal]
\label{pro:Eigen-local}
For the $H$ defined above~(\ref{eq:HdefEigenLoc}), if $\lambda$ (also $q_\star$ for quasi-$q_{\star}$-local) is sufficiently small, each perturbed eigenstate $\ket{\psi}$ of $H$ with energy below $E_B$ is exponentially localized near one codeword $\ket{\mathrm{w}}$, with leakage roughly bounded by $\epsilon_{>}/\delta_W$ defined below.
\end{pro}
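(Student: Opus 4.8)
The plan is to obtain the statement by feeding the LDPC geometry into Theorem~\ref{thm:static-Energy-loc} and then collapsing the resulting energy-localized eigenstate onto a single cluster, with the detuning term $H_{\textnormal{d}}$ playing the role that makes this collapse rigorous. The locality bookkeeping will mirror that of~\cite{yin2024eigenstate}; the one genuine change is that their tri-diagonal argument is replaced by our energy-space localization, which is also what makes the quasi-$q_\star$-local case tractable.

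\emph{Step 1 (reduction to energy-space localization).} Write $H=H_0+V_0$ with $H_0\equiv H_C+H_{\textnormal{d}}$. Both $H_C$ and $H_{\textnormal{d}}=\tfrac1{n^2}\sum_j h_j^z Z_j$ are diagonal in the $Z$-basis, so $H_0$ is mutually commuting with $\|H_0\|_{\textnormal{loc}}\le p_C+o(1)=:M$, and $\|H_{\textnormal{d}}\|=O(1/n)$, so the clustering of Definition~\ref{def:cluster} on $[E_g,E_B]$ (with $\nu_1=2\gamma n$, $\nu_2=d_C-2\gamma n=\Theta(n)$) survives up to $O(1)$ shifts of the window edges. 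Hence $H_0$ fulfils Case~\ref{case:3'} when $V_0$ is strictly $q$-local and Case~\ref{case:4'} when $V_0$ is quasi-$q_\star$-local, and Theorem~\ref{thm:static-Energy-loc} gives: an eigenstate $\ket{\psi}$ of $H$ with energy $E'$ is exponentially localized, in the $H_0$-eigenbasis (the $Z$-configuration basis), inside $[E'-dn,E'+dn]$ for every $d>\lambda$, with leakage $\epsilon^{(2)}_{\lambda,\Delta_q}(d)$ in the strictly $q$-local case and $\epsilon^{(1)}_{\lambda,\Delta_q}(d)$ in the quasi-$q_\star$-local case, $\Delta_q=2qM$.

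\emph{Step 2 (isolating the clusters).} For $E'$ bounded below the barrier, $E_B-E'\gtrsim\lambda n$ (the precise version of ``below $E_B$''), choose $d=(E_B-E')/n-o(1)>\lambda$, so the upper edge of the window lies below $E_B$ while the lower edge only discards weight that does not exist (nothing below $E_g$). Thus the weight of $\ket{\psi}$ on $Z$-configurations of $H_0$-energy $\ge E_B$ is at most $\epsilon_{>}\equiv\epsilon^{(2)\text{ or }(1)}_{\lambda,\Delta_q}\big((E_B-E')/n-o(1)\big)=e^{-\Omega(n)}$. By linear soundness every $Z$-configuration with energy $<E_B$ lies within $\nu_1$ of exactly one codeword, i.e. in one cluster subspace $w$, so $\ket{\psi}=\sum_w P_w\ket{\psi}+\ket{\psi_Q}$ with $\|\ket{\psi_Q}\|\le\epsilon_{>}$.

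\emph{Step 3 (collapse to one cluster, and the main obstacle).} Distinct clusters are at distance $\ge\nu_2$, so $P_wV_0P_{w'}=0$ for $w\ne w'$ when $V_0$ is strictly $q$-local ($q<\nu_2$), while $\|P_wV_0P_{w'}\|\le\|V_0\|_{q_\star\text{-X}}\,e^{-\nu_2/q_\star}\le e^{-\Omega(n)}$ when $V_0$ is quasi-$q_\star$-local (using $\nu_2=\Theta(n)$ --- exactly the requirement flagged for the quasi-local case). Projecting $H\ket{\psi}=E'\ket{\psi}$ onto cluster $w$ and using $[H_0,P_w]=0$ gives $(H_w-E')P_w\ket{\psi}=-P_wV_0\ket{\psi_Q}+(\text{exp.\ small})$ with $H_w\equiv P_w(H_0+V_0)P_w$, hence $\|P_w\ket{\psi}\|\le\big(\lambda n\,\epsilon_{>}+e^{-\Omega(n)}\big)\big/\mathrm{dist}(E',\mathrm{spec}\,H_w)$. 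The random $H_{\textnormal{d}}$ enters precisely so that the spectra $\{\mathrm{spec}\,H_w\}_w$ are mutually separated by a quantity $\delta_W\gtrsim\mathrm{poly}(1/n)$ (defined precisely below): then at most one cluster $w^{\ast}$ has an eigenvalue within $\delta_W$ of $E'$, so $\|P_w\ket{\psi}\|\lesssim\epsilon_{>}/\delta_W$ for $w\ne w^{\ast}$, and summing over the $\le 2^{k_C}$ clusters with $\epsilon_{>}=e^{-\xi n}$ and $\xi$ made large by small $\lambda$ (and small $q_\star$) yields $\|P_{w^{\ast}}\ket{\psi}\|^2\ge 1-e^{-\Omega(n)}$, i.e. $\ket{\psi}$ is exponentially localized within distance $\nu_1$ of a single codeword. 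Everything up to this point is routine once Theorem~\ref{thm:static-Energy-loc} is available; the hard part is the control of $\delta_W$ --- showing that with probability one over the continuous $h_j^z$ the detuning separates the effective cluster Hamiltonians by an inverse-polynomial gap \emph{uniformly over exponentially many clusters}, and that a Schur-complement treatment of the intra-cluster and cluster-to-high-energy parts of $V_0$ does not erode this gap, which is what forces ``$\lambda$ (and $q_\star$) sufficiently small.''
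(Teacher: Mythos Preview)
Your overall strategy matches the paper's: apply Theorem~\ref{thm:static-Energy-loc} to confine $\ket{\psi}$ to the clustered low-energy region of $H_C$ (up to exponentially small $\epsilon_>$), observe that $V_0$ does not couple distinct clusters (exactly, or up to $e^{-\nu_2/q_\star}$ in the quasi-local case) so that the truncated Hamiltonian $H_W\equiv P_W H P_W$ is block-diagonal over clusters, and then use a resolvent bound together with the detuning-induced gap $\delta_W$ to collapse onto a single cluster. The ``Schur-complement treatment'' you anticipate is not needed; the paper simply uses $\|(H_W-E')P_W\ket{\psi}\|\le \|V_0\|\,\|P_>\ket{\psi}\|$ and a standard approximate-eigenvector lemma (Lemma~\ref{lem:1-P_V<e/d-e-app}), which also spares you the sum over $2^{k_C}$ clusters.

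The genuine gap is your claim that $\delta_W\gtrsim \mathrm{poly}(1/n)$. For a good cLDPC code $k_C=\Theta(n)$, so there are $2^{\Theta(n)}$ clusters whose $H_w$-spectra all lie in an $O(n)$ energy interval; by pigeonhole the minimum inter-cluster gap is at most $O(n\,2^{-\Theta(n)})$, and an inverse-polynomial $\delta_W$ is impossible. What the paper actually proves (Lemma~\ref{lem:LLAbetween-app}) is that, over the random $h_j^z$, one has $\delta_W\ge e^{-\xi_1 n}$ with probability $1-e^{-\xi_2 n}$ for any $\xi_2<\xi_1-2\ln 2$; they take $\xi_1=2$. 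The mechanism is that $\partial(E_{w,m}-E_{w',m'})/\partial h^w_1\ge 2\nu_2/n^{5/2}$ uniformly, so each near-collision occupies an exponentially short interval of $h^w_1$, and a union bound over the $\le 4^n$ level pairs finishes. The argument then closes because, for $\lambda$ (and $q_\star$) sufficiently small, the energy-localization exponent in $\epsilon_>$ can be pushed \emph{above} $2$, so that $\epsilon_>/\delta_W$ is still $e^{-\Omega(n)}$. This is precisely why the threshold on $\lambda$ carries the extra factor $e^{-2\Delta_q/(b-\varepsilon')}$ (Eq.~\eqref{eq:lambdaBoundsEigenLocQ-app}); your $\mathrm{poly}(1/n)$ picture would miss it and give the wrong condition on $\lambda$.
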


We show explicit bounds with the proof in Appendix~\ref{app:pEigenLoc}. Here $\epsilon_{>}$ is roughly the leakage to higher-energy ($>E_B$) eigenstates up to an energy scale, which can be exponentially bounded by Theorem~\ref{thm:static-Energy-loc}. The $\delta_W$ is roughly the minimum detuning from $H_{\textnormal{d}}$ between different clusters. Thus, the key step of the proof is to ensure that the detuning $\delta_W$ is much larger than the leakage $\epsilon_{>}$. It turns out that for cLDPC codes, $\delta_W$ can be lower bounded by $e^{-2n}$ with high probability, while for qLDPC codes, it is unlikely to lower bound $\delta_W$. This is because the ground state degeneracy of good qLDPC codes is stable~\cite{de2025low,yin2025low} against any local perturbations.

The eigenstate localization can also affect the dynamical localization time. Although the exponentially long localization time in Proposition~\ref{pro:DyLocalt} still works here, it can be improved to infinitely long for the static Hamiltonian~\eqref{eq:HdefEigenLoc} due to eigenstate localization.

\begin{pro}[Infinitely long dynamical localization with static (quasi-)$q$-local perturbation]
\label{pro:DyLocalInf}
Let $\ket{\psi(t)}$ be initialized in one cluster $w_0$ and evolve under the static Hamiltonian: $H=H_C+V_0+H_d$. If $\lambda$ (and $q_\star$ in quasi-$q_{\star}$-local perturbations) is sufficiently small, the state resides almost within $w_0$ to infinitely long time, with a leakage bounded by $e^{-\Omega(n)}$.
\end{pro}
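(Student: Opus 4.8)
The plan is to combine the eigenstate localization of Proposition~\ref{pro:Eigen-local} with a simple spectral-decomposition argument. Since the Hamiltonian $H=H_C+V_0+H_d$ is now static, the evolution is diagonal in the eigenbasis $\{\ket{\psi_m}\}$ of $H$: writing $\ket{\psi(0)}=\sum_m c_m\ket{\psi_m}$, we have $\ket{\psi(t)}=\sum_m c_m e^{-iE_m t}\ket{\psi_m}$, so the cluster populations $\braket{\psi(t)|P_{w}|\psi(t)}$ are \emph{time-dependent only through interference cross-terms} $c_m\bar c_{m'}e^{-i(E_m-E_{m'})t}\braket{\psi_{m'}|P_w|\psi_m}$. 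The first step is therefore to control the "diagonal" contribution $\sum_m |c_m|^2 \braket{\psi_m|P_{w_0}|\psi_m}$ and then bound the off-diagonal interference uniformly in $t$.

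Concretely, I would proceed as follows. \textbf{Step 1:} Show that $\ket{\psi(0)}$, being an eigenstate of $H_C$ deep inside the cluster $w_0$ (energy density $\varepsilon_0<b$), has overlap only with perturbed eigenstates $\ket{\psi_m}$ whose energy lies in a window of width $O(\lambda n)$ around $E_0$; this is exactly Theorem~\ref{thm:static-Energy-loc} applied with the roles reversed (expanding an eigenstate of $H_C$ in the eigenbasis of $H$, which follows from the same $\epsilon^{(1,2)}$ bounds since the construction in Eq.~\eqref{eq:tTostatic} is symmetric in $H_0\leftrightarrow H_0+V_0$). In particular $\sum_{m:\,E_m>E_B-O(\lambda n)}|c_m|^2\le e^{-\Omega(n)}$. \textbf{Step 2:} For each low-energy $\ket{\psi_m}$ entering the sum, apply Proposition~\ref{pro:Eigen-local}: $\ket{\psi_m}$ is exponentially localized near a \emph{single} codeword $\ket{\mathrm w_m}$, i.e.\ $\braket{\psi_m|P_{w_m}|\psi_m}\ge 1-e^{-\Omega(n)}$. \textbf{Step 3:} Argue that all the relevant $\ket{\psi_m}$ belong to the \emph{same} cluster $w_0$: since $c_m\ne 0$ only when $\braket{\psi(0)|\psi_m}\ne 0$, and $\ket{\psi(0)}$ has distance $\le\nu_1$ from $\ket{\mathrm w_0}$ while $V_0,H_d$ are (quasi-)$q$-local with $\nu_2\sim\Theta(n)$, a perturbative/locality argument (resolvent expansion, or the fact that $H-E$ cannot connect states at distance $\ge\nu_2$ with nonexponential weight) forces $\mathrm{w}_m=\mathrm{w}_0$ for all $m$ with non-negligible $c_m$. \textbf{Step 4:} Assemble the bound, splitting the sum over $m$ into the high-energy tail (total weight $e^{-\Omega(n)}$ by Step 1) and the low-energy part (each term localized in $w_0$ by Steps 2--3):
\begin{equation}
\braket{\psi(t)|(\mathbb{1}-P_{w_0})|\psi(t)} \le \sum_{m}|c_m|^2\braket{\psi_m|(\mathbb{1}-P_{w_0})|\psi_m} + \Big|\sum_{m\ne m'} c_m\bar c_{m'}e^{-i(E_m-E_{m'})t}\braket{\psi_{m'}|(\mathbb{1}-P_{w_0})|\psi_m}\Big|,
\end{equation}
where the diagonal part is $\le e^{-\Omega(n)}$ and the off-diagonal part is bounded by Cauchy--Schwarz using $\|(\mathbb{1}-P_{w_0})\ket{\psi_m}\|\le e^{-\Omega(n)}$ whenever $\ket{\psi_m}$ is one of the localized eigenstates, and by the small total weight of the tail otherwise. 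Crucially the resulting bound is \emph{independent of $t$}, which gives the "infinitely long" localization at finite $n$.

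The main obstacle I anticipate is \textbf{Step 3}: making rigorous that the perturbed eigenstates overlapping $\ket{\psi(0)}$ all localize to the \emph{same} codeword rather than spreading across clusters. Energy-space localization alone only confines $\ket{\psi_m}$ to the energy window; it does not by itself say $\ket{\psi_m}$ is near a unique codeword, and Proposition~\ref{pro:Eigen-local} identifies \emph{some} $\ket{\mathrm w_m}$ but a priori different $m$ could pick different ones. Pinning them all to $\mathrm w_0$ requires the nondegeneracy/detuning mechanism of $H_d$ (the $\delta_W\gtrsim e^{-2n}$ gap between clusters, valid for cLDPC with high probability) together with first-order perturbation theory: the unperturbed cluster eigenspace $w_0$ of $H_C$, once split by $H_d$ and dressed by $V_0$, yields eigenvectors that remain within $w_0$ up to $O(\lambda n/\delta_W)$ corrections only if $\lambda n\ll\delta_W$ — which is false here ($\lambda n$ is extensive, $\delta_W$ exponentially small). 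So the argument cannot be naive perturbation theory; instead one must use the \emph{energy barrier} structure: $V_0$ cannot connect $w_0$ to another cluster $w'$ without passing through states of energy $\ge E_B$, and Theorem~\ref{thm:static-Energy-loc} shows those have weight $e^{-\Omega(n)}$ in $\ket{\psi_m}$, hence $\braket{\psi_m|P_{w'}|\psi_m}=e^{-\Omega(n)}$ for every $w'\ne w_0$ simultaneously — this is the same "tunneling suppression through the barrier" idea sketched after Proposition~\ref{pro:DyLocalt}, now applied eigenstate-by-eigenstate. I would carry out this part by the extended-evolution trick of Eq.~\eqref{eq:tTostatic} combined with the observation that $\sum_{w'\ne w_0}P_{w'}$ commutes with $H_C$, so its weight in $\ket{\psi_m}$ is controlled by the high-energy leakage plus the (zero) direct matrix elements of strictly-$q$-local $V_0$ between clusters, with the quasi-$q$-local case handled by the exponential tail bound as in Proposition~\ref{pro:Eigen-local}.
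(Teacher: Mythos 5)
Your plan correctly identifies the two main ingredients---energy-space localization (Theorem~\ref{thm:static-Energy-loc}) applied to $\ket{\psi(0)}$, and eigenstate localization (Proposition~\ref{pro:Eigen-local}) applied to each low-energy $\ket{\psi_m}$---and you correctly identify the key difficulty in Step~3. However, your proposed resolution of Step~3 is where the argument breaks, and the paper actually sidesteps Step~3 entirely with a cleaner trick.

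You try to show that every low-energy $\ket{\psi_m}$ with $c_m\neq 0$ localizes specifically to $w_0$, and you correctly observe that naive perturbation theory cannot do this (since $\lambda n\gg\delta_W$), while the "energy barrier" sketch you give only reproduces what Proposition~\ref{pro:Eigen-local} already says---that each $\ket{\psi_m}$ localizes to \emph{some} cluster $w_m$, not that $w_m=w_0$. As a result, your Step~4 estimate $\|(\mathbb{1}-P_{w_0})\ket{\psi_m}\|\le e^{-\Omega(n)}$ is unjustified precisely for the eigenstates localized in $w_m\neq w_0$, and the whole assembly collapses.

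The paper's proof never pins individual eigenstates to $w_0$. Instead it writes $\ket{\psi(0)}=\ket{\psi_{w_0}}+\ket{\psi_{\neq w_0}}+\ket{\psi_R}$ (localized-in-$w_0$, localized-elsewhere, and high-energy remainder, respectively---three \emph{mutually orthogonal} pieces since they live on disjoint sets of eigenstates of $H$), bounds $\|\ket{\psi_R}\|$ by Theorem~\ref{thm:static-Energy-loc}, and bounds $\|P_{w_0}\ket{\psi_{\neq w_0}}\|\le 2^{n/2}e^{-\xi n}$ by Proposition~\ref{pro:Eigen-local} (each $\ket{\phi_{w,j}}$ with $w\neq w_0$ has exponentially small overlap with $w_0$; the $2^{n/2}$ absorbs the sum over terms). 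The decisive step is then to use the \emph{initial condition} $\|P_{w_0}\ket{\psi(0)}\|=1$: since $P_{w_0}\ket{\psi(0)}=P_{w_0}\ket{\psi_{w_0}}+P_{w_0}\ket{\psi_{\neq w_0}}+P_{w_0}\ket{\psi_R}$ and the last two terms are exponentially small, $\|\ket{\psi_{w_0}}\|\ge \|P_{w_0}\ket{\psi_{w_0}}\|\ge 1-e^{-\Omega(n)}$. By orthogonality this immediately forces $\|\ket{\psi_{\neq w_0}}\|^2\le 1-\|\ket{\psi_{w_0}}\|^2\le 2\epsilon_{\neq w_0}$, so the weight on eigenstates localized in \emph{other} clusters is small \emph{without} ever showing those eigenstates are individually close to $w_0$. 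The time-independent bound then follows because all three pieces have conserved norm under $e^{-iHt}$ and each piece's projection onto $1-P_{w_0}$ is bounded uniformly in $t$. In short: the missing idea is to flip the logic from "the relevant eigenstates all lie in $w_0$" to "the initial state has negligible weight on eigenstates that lie elsewhere," which is a direct consequence of $P_{w_0}\ket{\psi(0)}=\ket{\psi(0)}$ and requires no perturbative or resolvent argument at all.
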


We show the proof in Appendix~\ref{app:DyLocInfinite}. Intuitively, if the perturbation is sufficiently weak, we can prove that the majority of $\ket{\psi(0)}$ comes from the localized eigenstates inside $w_0$. Since weights on eigenstates are conserved at any time of the evolution, the leakage outside $w_0$ can be bounded for all time.

\subsection{Symmetry-protected localization from constraint optimization problems }
\label{methods:sym}

To demonstrate that all the stabilities discussed in Sec.~\ref {sec:LDPC} solely arise from the clustered energy landscape, we give an example that a trivial non-interacting Hamiltonian can have those stabilities if the perturbations respect some symmetries so that the evolution is restricted to some subspaces defined from hard constraint optimization problems.

Specifically, we can consider the maximum independent set (MIS) of a random $p$-regular graph $G$: finding the largest subset of vertices that does not contain any edges between them. The quantum Hamiltonian for the cost function $\mathcal{L}$ is a simple non-interacting Hamiltonian counting the vertex number
\begin{equation}
    H_{\textnormal{V}}\equiv\sum_j Z_j.
\end{equation}
We can also define a $H_{\textnormal{E}}(G)$ for the graph $G$ to count the edge number of a $Z$-basis states
\begin{equation}
    H_{\textnormal{E}}(G)\equiv\sum_{\braket{ij}\in G} (1-Z_i)(1-Z_j).
\end{equation}
Then the MIS corresponds to the ground state of $H_{\textnormal{V}}$ in the zero-subspace of $H_{\textnormal{E}}(G)$. The model also has the clustering property~\cite{coja2015independent,gamarnik2017limits} in a linear energy window $[E_g,E_g+B]$.

Therefore, the non-interacting $H_{\textnormal{V}}$ should enjoy all the stability properties discussed in Sec.~\ref{sec:LDPC}, if the perturbation $V$ leaves the ground state subspace of $H_{\textnormal{E}}$ invariant (this requirement is weaker than $[V,H_{\textnormal{E}}]=0$)~\footnote{One can also choose choose $H_{\mathcal{L}}=H_{\textnormal{E}}(G)$ and set $V$ to respect the symmetry that conserves the total vertex number (i.e., $[V,H_{\textnormal{V}}]=0$). Such a symmetry preserving particle number is more common, and one can verify that its energy landscape exhibits clustering property in a linear energy window if choosing a vertex number near MIS size, thus it is also stable against all $q$-local symmetry-respecting $V$.  }. One choice of such a perturbation $V$ can be
\begin{equation}
    V(t)=\lambda(t)\sum_{i}\prod_{j,\braket{ij}\in G} P_j^zX_i,
\end{equation}
where $P_j^z=\frac{1+Z_j}{2}$ is the projector to state $\ket{0}$ for qubit $j$. Similar interactions also appear in the PXP model~\cite{bernien2017probing}, which exhibits many-body quantum scars~\cite{bernien2017probing,serbyn2021quantum} when $G$ is some bipartite graph different from our cases. In such a type of $V$, if the initial state is in the ground subspace of $H_{\textnormal{E}}$ and has sufficiently low energy in $H_{\textnormal{V}}$, it is robust against all kinds of perturbations discussed in Sec.~\ref{sec:LDPC}.

\clearpage

%\newpage

\onecolumngrid

\newpage

\appendix

\begin{center}
    {\large \textbf{Supplementary Information}}
\end{center}

\makeatother
%\tableofcontents
\appendixcontents

\renewcommand{\thefigure}{S\arabic{figure}}
\renewcommand{\thetable}{S\arabic{table}}
\setcounter{equation}{0}
\setcounter{figure}{0}
\setcounter{table}{0}

% Reset page numbering
\setcounter{page}{1}
\renewcommand{\thepage}{S\arabic{page}}

\begin{center}
    {\textbf{Overview of the Supplementary Information}}
\end{center}

In this supplemental material, we will show formal statements of theorems, propositions, and notations mentioned or used in the main text, as well as their detailed proofs. Appendix~\ref{app:normsTVs} gives all the definitions and properties of the norms and total variations used in this work. 

In Appendix~\ref{app:pEnergyLoc}, we give a formal statement of energy-space localization (Theorem~\ref{thm:Energy-local}), and show the detailed proofs for all cases defined in the main text. The main technique is to construct a recurrence relation for the arbitrary $k$-th central moment of energy, so that their bounds can be obtained recursively. In bounding the growth of the $k$-th central moment, we relate it to the growth of nested commutators of a $q$-local Hamiltonian, whose bounds are obtained with more detail in Appendix~\ref{app:proofCommute}.

In Appendix~\ref{app:pDynamicLoc}, we prove that for systems with linear energy barriers, initial states inside one cluster are localized for an exponentially long time under generic time-dependent perturbations. Such an exponentially long-time dynamical localization only requires the linear energy barrier in the clustered solution space, regardless of whether the perturbed eigenstates are localized or not. If the perturbed eigenstates are indeed localized, we also discuss the possibility of infinite-time dynamical localization, which can only appear under static perturbations. This infinite-time dynamical localization only works when the system exhibits the eigenstate localization under static perturbations, which was originally introduced in~\cite{yin2024eigenstate} for static, strictly $q$-lcoal perturbations, and will be discussed and extended to quasi-$q$-local in Appendix~\ref{app:pEigenLoc}.

In Appendix~\ref{app:pEigenLoc}, we prove that for systems with linear energy barriers under static (quasi)-$q$-local perturbations, the perturbed eigenstates are localized inside the original clusters. Such eigenstate localization under static strictly $q$-local perturbation has already been obtained in~\cite{yin2024eigenstate}. A key technique in~\cite{yin2024eigenstate} is rewriting the perturbed Hamiltonian in a tri-diagonal form to bound the high-energy contribution in perturbed eigenstates. However, this method is hard to extend to quasi-$q$-local cases, as quasi-$q$-local terms cannot be written in tri-diagonal forms in the unperturbed basis. To resolve this, we replace the method with our Theorem~\ref{thm:static-Energy-loc} to bound the high-energy leakage, which can easily extend to quasi-$q$-local cases, and adapt the rest of procedures in~\cite{yin2024eigenstate} to generalize the eigenstate localization to quasi-$q$-local perturbations.

In Appendix~\ref{app:pGibbs}, we prove that for systems with linear energy barriers under static (quasi)-$q$-local perturbations, their corresponding Gibbs samplers require an exponentially long time to reach equilibrium. Such slow-mixing was also obtained in~\cite{rakovszky2024bottlenecks,gamarnik2024slow} when the perturbation is strictly $q$-local. The similar tri-diagonal technique to~\cite{yin2024eigenstate} is used in~\cite{rakovszky2024bottlenecks} to bound the high-energy and low-energy contributions in perturbed eigenstates. As previously mentioned, this method is hard to generalize to quasi-$q$-local cases. To resolve this, we replace their method with our Theorem~\ref{thm:static-Energy-loc} and adapt the quantum bottleneck theorem developed in~\cite{rakovszky2024bottlenecks} to generalize the results to quasi-$q$-local perturbations. 

In Appendix~\ref{app:proofCommute}, we provide some simple proofs to bound the growth of nested commutators in different cases, whose growth is closely related to the tail bound in energy-space localization proved in Appendix~\ref{app:pEnergyLoc}. The growth of nested commutators is well-known in studies on operator growth~\cite{PhysRevX-UnivOpGrow,PRR-OpGrowDisorder}, Floquet-Magnus theory~\cite{kuwahara2016floquet}, and ETH~\cite{gong2022bounds}. We briefly derive them in Appendix~\ref{app:proofCommute} for the sake of technical completeness and to maintain consistency with our definitions and notations.

\appsection{Definition of different norms and total variations}
\label{app:normsTVs}

In this work, we define various norms and total variations for mathematical convenience. Here, we summarize them with their definitions and properties.

We denote $\|H\|$ as the operator norm of a Hamiltonian $H$, i.e., 
\begin{equation}
    \|H\|\equiv\sup_{\ket{\psi}\neq 0} \frac{\|H \ket{\psi}\|}{\|\ket{\psi}\|}.
\end{equation}
Upon this, we define the termwise total norm $\|\cdot\|_{\text{X}}$ as
\begin{equation}
    \|H\|_{\textnormal{X}}\equiv\sum_{A}\|h_A\|, \quad \mbox{for} \, H=\sum_A h_A,
\end{equation} 
where we decompose $H$ into local terms and each $h_A$ is supported on a region $A$. Note that one may have different ways of decomposing the total $H$, e.g., coarse-graining, so the termwise norm is not generally unique. Any reasonable decomposition should suffice. Moreover, for qubit Hamiltonians, we can always choose decomposition into Pauli operators to ensure uniqueness. With the decomposition fixed, we can define the termwise total variation $\mathbf{V}_{\textnormal{X}}$ for a time-dependent $H(t)$ from $t=0$ to $T$ as
\begin{equation}
    \mathbf{V}_{\textnormal{X}}(H(t);[0,T])\equiv\int_{0}^T\|H'(t)\|_{\textnormal{X}}\mathrm{d}t\equiv \int_{0}^{T} \|\mathrm{d}H\|_{\textnormal{X}},
\end{equation}
where $T>0$. We can rewrite $H(t)$ as $H(t)=H(0)+V(t)$. An inequality we will frequently use in this work is
\begin{equation}
    \|V(t)\|\leq \|V(t)\|_{\textnormal{X}}=\left\|\int_{0}^t\mathrm{d}H\right\|_{\textnormal{X}}\leq \int_{0}^{t} \|\mathrm{d}H\|_{\textnormal{X}}.
\end{equation}
These are norms for global properties of a Hamiltonian. We also need a local norm to depict the local strength of a Hamiltonian. For a Hamiltonian $H$ with decomposition $H=\sum_{A}h_A$ on sites $i$'s, we define the local norm $\|\cdot\|_{\text{loc}}$ as
\begin{equation}
    \|H\|_{\text{loc}}\equiv\sup_{i}\sum_{A,A\ni i}\|h_A\|,
\end{equation} 
where the supremum is over all sites $i$. It is obvious that for a fixed decomposition $H=\sum_{A}h_A$, we have
\begin{equation}
    \|H\|\leq\|H\|_{\textnormal{X}}\leq n \|H\|_{\text{loc}},
\end{equation}
where $n$ is the total number of sites.

The above definition is sufficient for strictly $q$-local Hamiltonians. However, for quasi-$q$-local Hamiltonians, we need an additional parameter $q$ to describe its decaying speed with the number of sites involved in the interactions. We define a Hamiltonian $H$ to be quasi-$q$-local if it has a decomposition $H=\sum_Ah_A$, where the norm of each $k$-local term roughly decays exponentially $e^{-k/q}$ with the locality. To give a precise definition, we define the termwise $q$-norm as
\begin{equation}
    {\left\|H\right\|}_{q\textnormal{-X}}\equiv \sum_{A}(q+1)e^{|A|/q}\|h_A\|,
\end{equation}
where $q>0$ and $|A|$ counts the number of sites of the support $A$. Similar definitions also appear in the study on prethermalization  \cite{abanin2017effective,abanin2017rigorous,else2017prethermal}. We add an additional $q+1$ factor for mathematical convenience. We define $H$ as quasi-$q_{\star}$-local if one can find a $q_{\star}$ (usually the minimum of all possible $q$) such that $\frac{1}{n}\|H\|_{q_{\star}\textnormal{-X}}$ can be bounded by an $O(1)$ constant. It is easy to verify that given a fixed decomposition, we have
\begin{equation}
    \|H\|\leq\|H\|_{\textnormal{X}}\leq \|H\|_{q\textnormal{-X}}.
\end{equation}
In addition, given the decomposition $H=\sum_Ah_A$, if $\|H\|_{q\textnormal{-X}}\leq \lambda n$, where $\lambda$ is $O(1)$, we have
\begin{equation}
    \sum_{A,|A|=k}\|h_A\|\leq \frac{\lambda n}{q+1} e^{-k/q}.
\end{equation}
This indicates that the termwise norm of $k$-local components of a quasi-$q$-local $H$ decays exponentially with $k$. Similarly, we define the total variation $\mathbf{V}_{q\textnormal{-X}}$ of quasi-$q$-local $H(t)$ as
\begin{equation}
    \mathbf{V}_{q\textnormal{-X}}\left(H(t);[0,T]\right)\equiv \int_{0}^T \left\|\frac{\partial H(t)}{\partial t}\right\|_{q\textnormal{-X}}\mathrm{d}t\equiv \int_0^T \|\mathrm{d}H\|_{q\textnormal{-X}}
\end{equation}
Similarly, if we define $V(t)=H(t)-H(0)$, we have the following inequality
\begin{equation}
    \|V(t)\|\leq \|V(t)\|_{\textnormal{X}} \leq\| V(t)\|_{q\textnormal{-X}} \leq \int_{0}^{t} \|\mathrm{d}H\|_{q\textnormal{-X}}.
\end{equation}

\appsection{Proof of the energy-space localization}
\label{app:pEnergyLoc}

In this section, we are going to rigorously prove the Theorem~\ref{thm:Energy-local} for Cases~\ref{case:1}-\ref{case:4}. The procedures for all four cases are similar, and a key step is to bound the nested commutator $\mathrm{ad}^m_{H(t)}(H'(t))$. We use a factorially growing series to bound the commutators For Cases~\ref{case:1}, \ref{case:2} and \ref{case:4}, and an exponential bound for Cases~\ref{case:3}. The proofs for same-scaling commutators are similar, so we put the proofs for the two different scalings in the Appendix.~\ref{app:EnergyLocP} and \ref{app:EnergyLocPcom}, respectively.

For clarification, we rewrite Theorem~\ref{thm:Energy-local} (for Case~\ref{case:1} as an example) with more details. Other cases follow a similar formulation.

\begin{thmapp}[Energy-space localization, for Case~\ref{case:1} as an example]
For a time-dependent $q$-local (no geometrical constraints here) system $H(t)$ defined on a $n$-site lattice, we assume that its total variation of termwise norm in $t\ \in[0,T]$ is bounded by 
\begin{equation}
    \int_0^T\left\|\frac{\partial H(t)}{\partial t}\right\|_{\textnormal{X}}\mathrm{d}t \leq \lambda n.
\end{equation} 
We assume that $H(t)$ has a finite {\it local} norm $\|H(t)\|_{\textnormal{loc}}\leq M$ for any $t$, i.e., for a partition $H(t)=\sum_A h_A(t)$, and for any $t$ and site $i$
\begin{equation}
    \sum_{A,A\ni i}\|h_A(t)\|\leq M.
\end{equation} 
This means that all terms that have (partial) support at site $i$ have a finite combined norm.
Let a state $|\psi(t)\rangle$, initialized in any of the eigenstates of $H(0)$ with energy $E_0$
\begin{equation}
    H(0)\ket{\psi(0)}=E_0\ket{\psi(0)},
\end{equation}
evolve according to $H(t)$, i.e.,
\begin{equation}
    i \frac{\partial}{\partial t} \ket{\psi(t)}=H(t)\ket{\psi(t)}.
\end{equation}
Then the evolving state $\ket{\psi(t)}$ at any $t$ can be expanded in the eigenbasis of $H(t)$:
\begin{equation}
    \ket{\psi(t)}=\sum_j a_j (t) \ket{\phi_j(t)},
\end{equation}
where $\phi_j(t)$'s are the instantaneous eigenstates of $H(t)$ with energy $E_j(t)$. 

We claim that $\ket{\psi(T)}$ is exponentially localized in the energy window $\mathcal{E}_0(d)\equiv[E_0-dn,E_0+dn]$ in the eigenbasis of $H(T)$, with leakage $\epsilon_T(d)$ bounded by
\begin{equation}
    \epsilon_T(d)\equiv\sqrt{\sum_{j,E_j(T)\notin \mathcal{E}_0(d)} |a_j(T)|^2} \leq e^{-Cn},
\end{equation}
where $d>\lambda$ is a parameter that sets the leakage width, and $C$ is defined by
\begin{equation}
    C\equiv\frac{\lambda}{\Delta_{q}}\left(\frac{d}{\lambda}-1-\ln\frac{d}{\lambda}\right)+o(1)
\end{equation} 
with $d/\lambda>1$, $\Delta_q\equiv2qM$ and $o(1)$ being a small positive number that will vanish if $n\rightarrow\infty$. Using the definition in Eq.~\eqref{eq:epsilond1f}, the bound for $\epsilon_T(d)$ can be written as
\begin{equation}
    \epsilon_T(d)\leq \epsilon^{(1)}_{\lambda,\Delta_q}(d).
\end{equation}
\end{thmapp}

The formulations for other cases can easily be obtained by replacing the constraints on $H(t)$ accordingly, where the leakage can be bounded by $\epsilon_T(d)\leq \epsilon^{(1)}_{\lambda,\Delta_q}(d)$ for Cases \ref{case:1}, \ref{case:2} and \ref{case:4}, and $\epsilon_T(d)\leq \epsilon^{(2)}_{\lambda,\Delta_q}(d)$ for Case~\ref{case:3}. The functions $\epsilon^{(1)}_{\lambda,\Delta_q}(d)$ and $\epsilon^{(1)}_{\lambda,\Delta_q}(d)$ are explicitly stated in the main text (in Theorem~\ref{thm:Energy-local}). We show the proofs and derive these two functions below.

\appsubsection{Main proof for Cases \ref{case:1}, \ref{case:2}, \ref{case:4}} 
\label{app:EnergyLocP}

For Cases.~\ref{case:1} and \ref{case:2}, we define the  total variation {\it per site} at time $t$ with $0\leq t\leq T$ as 
\begin{equation}
\label{eq:lambda_t-app}
    \lambda(t)\equiv\frac{1}{n}\int_{0}^{t}\left\|\frac{\partial H(\tau)}{\partial \tau}\right\|_{\textnormal{X}} \mathrm{d}\tau.
\end{equation}
Then it is obvious that $\lambda(t)\geq 0$ and $\lambda'(t)\geq 0$. For brevity, we define $\lambda_t\equiv\lambda(t)\leq\lambda(T)\equiv\lambda_T=\lambda$. Additionally, according to the definition, we have the following.
\begin{equation}
\label{eq:lambda'_t-app}
    \lambda'(t)n=\left\|\frac{\partial H(t)}{\partial t}\right\|_{\textnormal{X}}.
\end{equation}
Similarly, for Case~\ref{case:4}, we can define 
\begin{equation}
\label{eq:lambda_t-qx-app}
    \lambda(t)\equiv\frac{1}{n}\int_{0}^{t}\left\|\frac{\partial H(\tau)}{\partial \tau}\right\|_{q\textnormal{-X}} \mathrm{d}\tau
\end{equation} 
and obtain 
\begin{equation}
\label{eq:lambda'_t-qx-app}
    \lambda'(t)n=\left\|\frac{\partial H(t)}{\partial t}\right\|_{q\textnormal{-X}}.
\end{equation}

Starting with $H(0)\ket{\psi(0)}=E_0\ket{\psi(0)}$, we consider the growth of the expectation of $\big(H(t)-E_0\big)^{2k}$ on $\ket{\psi(t)}$. We define the shifted Hamiltonian $\tilde{H}(t)\equiv H(t)-E_0$ and the shifted eigen-energies $\tilde{E}_j(t)\equiv E_{j}(t)-E_0$ for convenience. Denote the following
\begin{equation}
    g_k(t)\equiv\left\|\tilde{H}(t)^k\ket{\psi(t)}\right\|.
\end{equation}
It is obvious that $g_k(0)=0$ and $g_k(t)^2$ gives exactly the expectation $\braket{\tilde{H}(t)^{2k}}$, i.e.,
\begin{equation}
   G_{2k}(t)\equiv \braket{\psi(t)|\tilde{H}(t)^{2k}|\psi(t)}=g_k(t)^2.
\end{equation}
With instantaneous eigenstates $\ket{\phi_j(t)}$ of $\tilde{H}(t)$ with energy $\tilde{E}_j(t)$, we can expand $\ket{\psi(t)}$ as follows:
\begin{equation}
    \ket{\psi(t)}=\sum_j a_j(t)\ket{\phi_j(t)}.
\end{equation}
We can also interpret $\ket{\psi(t)}$ as giving rise to a classical probability distribution of energies
\begin{equation}
    \mathbf{P}(\hat{E}_t=E)=\sum_{j,\tilde{E}_j(t)=E}|a_j(t)|^2,
\end{equation}
where we use the $\hat{}$ to emphasize that $\hat{E}_t$ is a classical random variable. By definition, the expectation $G_{2k} (t)$ in the instantaneous eigenbasis is the same as the $2k$-moment of $\hat{E}_t$,
\begin{equation}
    G_{2k}(t)=\mathbf{E}(\hat{E}_t^{2k})=\sum_{j}|a_j(t)|^2 \tilde{E}_j(t)^{2k},
\end{equation}
Thus, the far-energy contribution from eigenstates with $|\tilde{E}_j(t)|>dn$ can be upper bounded by Markov's inequality:
\begin{equation}
 \label{eqn:Pbound}
    \sum_{j,|\tilde{E}_j(t)|\geq dn} |a_j(t)|^2 =\mathbf{P}(|\hat{E}_t|\geq dn) \leq \min_k\frac{G_{2k}(t)}{(dn)^{2k}}.
\end{equation}
Thus, by definitions we have
\begin{equation}
\label{eq:epsilon<gk-appA1}
    \epsilon_t(d)\leq \min_{k} \frac{g_k(t)}{(dn)^k}.
\end{equation}

\smallskip
Next, we are going to bound $g_k(t)$ for arbitrary $k$. Denoting $\braket{\cdot}_t\equiv\braket{\psi(t)|\cdot|\psi(t)}$, using the obvious fact that $\tilde{H}(t)^k$ commutes with $H(t)$, we have 
\begin{equation}
    \frac{d}{dt}\Braket{\tilde{H}(t)^{2k}}_t=\Braket{\frac{\partial}{\partial t}\left( \tilde{H}(t)^{2k}\right)}_t.
\end{equation}
For $k=1$, we have
\begin{equation}
\begin{aligned}
    \frac{\partial}{\partial t}G_2(t)&= \braket{\psi(t)|\left(\tilde{H}(t)\frac{\partial H(t)}{\partial t}+\frac{\partial H(t)}{\partial t}\tilde{H}(t)\right)|\psi(t)}\\
    &\leq 2 g_1(t)\left\|\frac{\partial H(t)}{\partial t}\right\|\leq 2 g_1(t)\lambda'(t)n.
\end{aligned}
\end{equation}
Using $\frac{\partial}{\partial t}G_2(t)=2g_1(t)g_1'(t)$, we have
\begin{equation}
    g_1'(t)\leq \lambda'(t)n.
\end{equation}
Since $g_1(0)=0$ and $\lambda(0)=0$, we arrive at 
\begin{equation}
    0\leq g_1(t)\leq\lambda(t)n.
\end{equation}
In the above analysis, we implicitly assume $g_1(t)>0$ for $t>0$ so that we can safely eliminate one $g_1(t)$ on both sides. This can be justified as follows: By definition we have $g_k(t)\geq 0$ and $g_k(0)=0$. If there is one additional zero point $g_1(t_1)=0$ at some $t_1$, it indicates that the evolving state $\ket{\psi(t_1)}$ is exactly the instantaneous eigenstate of $H(t_1)$ with energy $E_0$. Thus, we can bound $g_1(t)$ by integrating $\lambda'(t)n$ from $0$ to $t$ for $0<t<t_1$ and from $t_1$ to $t$ for $t>t_1$, and both of them are bounded by the integral from $0$ to $t$ (i.e., $\lambda(t)n$). Thus, the above bound holds regardless of any additional zero points of $g_1(t)$ in $[0,T]$. Therefore, we can neglect the potential zero points and eliminate the $g_1(t)$ on both sides without affecting the final bounds. This argument holds for any $g_k(t)$: As $g_k(t_1)=0$ indicates that $\ket{\psi(t_1)}$ is an instantaneous eigenstate of $H(t_1)$ for any $k$, we can set $t_1$ as our new initial point and the bound obtained can only be smaller than that from $0$ to $T$.

Next, we consider $k=2$. Similarly, we have
%\begin{widetext}
\begin{equation}
\begin{aligned}
    2g_2(t)g_2'(t)&=\Braket{\tilde{H}(t)^3\frac{\partial H(t)}{\partial t}+\tilde{H}(t)^2\frac{\partial H(t)}{\partial t}\tilde{H}(t)+\tilde{H}(t)\frac{\partial H(t)}{\partial t}\tilde{H}(t)^2+\frac{\partial H(t)}{\partial t}\tilde{H}(t)^3}_t\\
    &=\Braket{2\tilde{H}(t)^2\frac{\partial H(t)}{\partial t}\tilde{H}(t)+2\tilde{H}(t)\frac{\partial H(t)}{\partial t}\tilde{H}(t)^2+ \tilde{H}(t)^2\left[\tilde{H}(t),\frac{\partial H(t)}{\partial t}\right]+\left[\frac{\partial H(t)}{\partial t},\tilde{H}(t)\right]\tilde{H}(t)^2}_t\\
    &\leq 4 \left\|\frac{\partial H(t)}{\partial t}\right\|g_1(t)g_2(t)+2g_2(t) \left\|\left[\tilde{H}(t),\frac{\partial H(t)}{\partial t}\right]\right\|.
\end{aligned}
\label{eq:g2}
\end{equation}
%\end{widetext}
Here, we need to bound the commutator $\mathrm{ad}_{H(t)}(H'(t))$. We recall ${\rm ad}_A(B)\equiv [A,B]$. In fact, we can show  that (see Appendix~\ref{app:proofCommute}) for Cases~\ref{case:1} and~\ref{case:2}:
\begin{equation}
    \left\|\mathrm{ad}_{\tilde{H}(t)}^m \left(\frac{\partial H(t)}{\partial t}\right)\right\|\leq m!\Delta_q^m \left\|\frac{\partial H(t)}{\partial t}\right\|_{\textnormal{X}},
\end{equation}
and for Case~\ref{case:4}:
\begin{equation}
    \left\|\mathrm{ad}_{\tilde{H}(t)}^m \left(\frac{\partial H(t)}{\partial t}\right)\right\|\leq m!\Delta_q^m \left\|\frac{\partial H(t)}{\partial t}\right\|_{q\textnormal{-X}}.
\end{equation}
In all cases, the RHS is $m!\Delta_q^m \lambda'(t)n$ by definition. Taking $m=1$ and applying the bound to Eq.~(\ref{eq:g2}), we have
\begin{equation}
    g_2'(t)\leq 2 \lambda(t)\lambda'(t)n^2+\Delta_q\lambda'(t)n.
\end{equation}
Thus, by integration and $g_2(0)=0$, we obtain the upper bound
\begin{equation}
    g_2(t)\leq \lambda(t)^2n^2+\Delta_q\lambda(t)n.
\end{equation}
The procedure can continue with induction. At the $k$-th step, we have $\frac{\partial}{\partial t}G_{2k}(t)=\Braket{\tilde{H}(t)^k\frac{\partial \tilde{H}(t)^k}{\partial t}+\frac{\partial\tilde{H}(t)^k}{\partial t}\tilde{H}(t)^k}_t$ and
\begin{equation}
    \frac{\partial}{\partial t}\tilde{H}(t)^k=\sum_{m=0}^{k-1} \tilde{H}(t)^{k-m-1} \frac{\partial H(t)}{\partial t} \tilde{H}(t)^{m}.
\end{equation}
Using $VH=(H-\mathrm{ad}_H)\circ V$, we get $V H^m=(H-\mathrm{ad}_H)^m\circ V$, so that
\begin{equation}
\begin{aligned}
    \frac{\partial}{\partial t}\tilde{H}(t)^k&=\sum_{m=0}^{k-1} \tilde{H}(t)^{k-m-1} (\tilde{H}(t)-\mathrm{ad}_{\tilde{H}(t)})^m\circ \frac{\partial H(t)}{\partial t} \\
    &=\frac{\tilde{H}(t)^k-(\tilde{H}(t)-\mathrm{ad}_{\tilde{H}(t)})^k}{\mathrm{ad}_{\tilde{H}(t)}}\circ \frac{\partial H(t)}{\partial t}\\
    &=\sum_{m=1}^k(-1)^{m-1} {k \choose m}\tilde{H}(t)^{k-m}\mathrm{ad}_{\tilde{H}(t)}^{m-1}\left(\frac{\partial H(t)}{\partial t}\right),
\end{aligned}
\end{equation}
where we have used the fact that $\tilde{H}(t)$ commutes with $\mathrm{ad}_{\tilde{H}(t)}$ when treated as linear operators, so that we can manipulate them in the above equations just like classical variables.
Therefore, we have obtained
\begin{equation}
\begin{aligned}
    2 g_k(t)g_k'(t)&=\frac{\partial}{\partial t}G_{2k}(t)=\Braket{\tilde{H}(t)^k\frac{\partial \tilde{H}(t)^k}{\partial t}+\frac{\partial\tilde{H}(t)^k}{\partial t}\tilde{H}(t)^k}_t\\
    &\leq 2 g_{k}(t)\sum_{m=1}^k {k \choose m}g_{k-m}(t)(m-1)!\Delta_q^{m-1}\lambda'(t) n,
\end{aligned}
\end{equation}
which gives
\begin{equation}
    g_k'(t)\leq \sum_{m=1}^k {k \choose m}g_{k-m}(t)(m-1)!\Delta_q^{m-1}\lambda'(t) n.
\end{equation}
Dividing both sides by $\Delta_q^k$, we have\begin{equation}
    \frac{g_k'(t)}{\Delta_q^k}\leq \sum_{m=1}^k {k \choose m}\left(\frac{g_{k-m}(t)}{\Delta_q^{k-m}}\right)(m-1)!\frac{\lambda'(t) n}{\Delta_q}.
\end{equation}
For convenience of dealing with the right-hand side, we define a degree-$k$ polynomial $f_k^{(1)}(x)$ for comparison with $g_k$, with $f_0^{(1)}(x)\equiv 1$, $f_1^{(1)}(x)\equiv x$ and the recurrence relation
\begin{equation}
\label{eq:f1recur-app}
    f_k^{(1)}(x)\equiv\int_{0}^x\mathrm{d}s\sum_{m=1}^k {k \choose m}(m-1)!f_{k-m}^{(1)}(s).
\end{equation}
One can prove by induction to show that 
\begin{equation}
    g_k(t)\leq \Delta^k_{q}f_k^{(1)}\left(\frac{\lambda_t n}{\Delta_q}\right)
\end{equation}
One can further verify that (see Appendix~\ref{app:proofF1k}) 
\begin{equation}
    f_k^{(1)}(x)=\prod_{j=0}^{k-1}(x+j).
\end{equation}
By setting $x=\frac{\lambda_t n}{\Delta_q}$, we have 
\begin{equation}
    g_k(t)\leq \prod_{j=0}^{k-1}(\lambda_t n+j\Delta_q).
\end{equation}
Using the Markov inequality at $k=k_d\equiv\lceil \frac{d-\lambda_t}{\Delta_q}n\rceil$ (i.e., setting $k=k_d$ in Eq.~\eqref{eq:epsilon<gk-appA1}), we have  
\begin{equation}
    \epsilon_t(d) \leq \left(\frac{\Delta_q}{dn}\right)^{k_d}\frac{\Gamma\left(\frac{\lambda_t}{\Delta_q}n+k_d\right)}{\Gamma\left(\frac{\lambda_t}{\Delta_q}n\right)}, \label{eq:epsilon_t}
\end{equation}
where $\Gamma(x)$ is the gamma function with $\Gamma(x+1)=x\Gamma(x)$.
At the large $n$ limit, we can use Stirling's formula and obtain the bound
\begin{equation}
    \epsilon_t(d)\leq \exp\left(-n\frac{\lambda_t}{\Delta_q}\left(\frac{d}{\lambda_t}-1-\ln\frac{d}{\lambda_t}\right)-o(n)\right),
\end{equation}
where the leading order of the $-o(n)$ correction is $\frac{1}{2}\ln \frac{\lambda_t}{d}<0$. The above bound holds for any $t$ and decays exponentially for any $d>\lambda_t$.

\appsubsection{Improved bounds for Case~\ref{case:3}}
\label{app:EnergyLocPcom}

In Case~\ref{case:3}, we assume that the $H(t)$ admits a partition into a mutually commuting core $H_{C}(t)$ and the remainder $V(t)$ 
\begin{equation}
    H(t)=H_{C}(t)+V(t),
\end{equation}
such that $[V(t),H'(t)]=0$ for all $t$ and all terms in the $H_{C}(t)=\sum_{X} h_X(t)$ are mutually commuting with each other. In this case, we can use the same definition for $\lambda(t)$ and $\lambda'(t)$ as \eqref{eq:lambda_t-app} and \eqref{eq:lambda'_t-app}. We do not require that the $k$-locality of $H_C (t)$ is finite, but only require that $H'(t)$ is $q$-local with finite $q$ and $\|H_C(t)\|_{\text{loc}}\leq M$. An example of this case is $H_C(t)=\sum_{j}\sigma_j^z \sigma^z_{j+1}$ (simply time independent) and $V(t)=\frac{\lambda t}{T}\sum_j \sigma^x_j$. 

Following similar procedures to those in Appendix~\ref{app:EnergyLocP}, the key different step is to bound the scaling of the nested commutators $\mathrm{ad}_{\tilde{H}(t)}^m (H'(t))$. One can verify that the nested commutators in this case can be bounded (see Appendix~\ref{app:proofCommute}) by
\begin{equation}
\left\|\mathrm{ad}_{\tilde{H}(t)}^m \left(\frac{\partial H(t)}{\partial t}\right)\right\|\leq \Delta_q^m\left\|\frac{\partial H(t)}{\partial t}\right\|_{\textnormal{X}},
\end{equation}
where $\Delta_q \equiv 2qM$. Note that compared to the previous three cases, the upper bound here does not contain the factorial $m!$, and hence we expect a tighter bound on the leakage. Following the same steps, we can obtain the recursive bounds for $g_k(t)$
\begin{equation}
    g_k'(t)\leq \sum_{m=1}^k {k \choose m}g_{k-m}(t)\Delta_q^{m-1}\lambda'(t) n.
\end{equation}
Similarly, we can define another polynomial $f_k^{(2)}(x)$ for comparison with $g_k(t)$. In this case, the recurrence relation for $f_k^{(2)}(x)$ becomes (c.f. Eq.~\eqref{eq:f1recur-app})
\begin{equation}
\label{eq:f2recur-app}
f_k^{(2)}(x)\equiv\int_{0}^x\mathrm{d}s\sum_{m=1}^k {k \choose m}f_{k-m}^{(2)}(s),
\end{equation}
where we set $f_0^{(2)}(x)=1$ and $f_1^{(2)}(x)=x$ and we can verify that
\begin{equation}
    g_k(t)\leq \Delta_q^kf^{(2)}_k\left(\frac{\lambda_t n}{\Delta_q}\right).
\end{equation} 
One can use the generating function approach (or alternatively use the approach in Appendix.~\ref{app:proofF2k}) to  obtain the solution of the \eqref{eq:f2recur-app}
\begin{equation}
    f_{k}^{(2)}(x)=\sum_{j=0}^k \left\{k \atop j\right\}x^j,
\end{equation}
where $\left\{k \atop j\right\}$ is the Stirling number of the second kind. We can set $x=\frac{\lambda_t n}{\Delta_q}$ and use Markov's inequality to obtain
\begin{equation}
    \epsilon_t(d)^2\leq\min_k \frac{g_{k}(t)^{2}}{(dn)^{2k}} \leq\min_k \frac{f_{k}^{(2)}(x)^2}{(dn/\Delta_q)^{2k}}.
\end{equation}
Note that if $\hat{Y}$ is a random variable from the mean-$x$ Poisson distribution, its $k$th moment is known to be exactly $f^{(2)}_k(x)$, i.e., $\mathbf{E}(\hat{Y}^k)=f_k^{(2)}(x)$~\cite{riordan1937moment}. In addition, its moment-generating function
\begin{equation}
    \mathbf{E}(e^{\tau \hat{Y}})=\sum_{k=0}^\infty \frac{\tau^k}{k!}\mathbf{E}(\hat{Y}^k)=e^{x(e^\tau-1)}
\end{equation} 
also exists for any $\tau$. A known result~\cite{philips1995moment} on bounding the best $k$-moment bound by the Chernoff bound gives the following relation:
\begin{equation}
    \min_k \frac{f_{k}^{(2)}(x)}{(dn/\Delta_q)^k} \leq \inf_{\tau\geq 0} \mathbf{E}(e^{\tau \hat{Y}})e^{-\frac{dn}{\Delta_q}\tau}.
\end{equation}
Then we can obtain an explicit upper bound for $d>\lambda_t$:
\begin{equation}
\begin{aligned}
\epsilon_t(d)&\leq \inf_{\tau\geq 0} \exp\left(x(e^{\tau}-1)-\frac{dn}{\Delta_q}\tau\right)\\
&=\exp\left(-n\frac{\lambda_t}{\Delta_q}\left(\frac{d}{\lambda_t}\ln \frac{d}{\lambda_t}-(\frac{d}{\lambda_t}-1)\right)\right)  .
\end{aligned} 
\label{eq:epsilon_t2}
\end{equation}

\appsubsection{Bounds for perturbations of general scalings}
\label{app:generalScalingP}

In previous sections we discussed cases with linear-scaling perturbations and energy windows, with bounded local norms. Here, we roughly estimate the leakage bounds in general settings where all parameters can scale differently with $n$.

In general, we can still define
\begin{equation}
    \|H(t)\|_{\textnormal{loc}}\leq M,
\end{equation}
where $M$ is a constant that may scale with $n$ instead of an $O(1)$ constant previously, but still independent of $t$.

For a strictly $q$-local $H(t)$, we can define
\begin{equation}
    \Lambda_t\equiv \int_0^t \left\|\frac{\partial H(\tau)}{\partial \tau}\right\|_{\textnormal{X}}\mathrm{d}\tau,
\end{equation}
where $\Lambda_t$ can have generic scaling of $n$ apart from the previous linear scaling. (Note we use the symbol $\Lambda$ here, and in the previous cases, $\Lambda_t= \lambda(t)n$.) Since $\|H(t)\|\leq nM$, we require that 
\begin{equation}
    \Lambda_t \lesssim O(nM),
\end{equation}
otherwise the change of the $H(t)$ will be too large to be called perturbations. One can also define $\Lambda_t \equiv\int_0^t \|\mathrm{d}H\|_{q\textnormal{-X}}$ for quasi-$q$-local perturbations.

Similarly, we set the initial state $\ket{\psi(0)}$ as one eigenstate of $H(0)$ with energy $E_0$, and let the state $\ket{\psi(t)}$ evolve under $H(t)$. At time $t$, we define the eigenstates of $H(t)$ as $\ket{\phi_j(t)}$ with energy $E_j(t)$, and expand the evolving state in the instantaneous eigenbasis
\begin{equation}
    \ket{\psi(t)}=\sum_j a_j(t)\ket{\phi_j(t)}.
\end{equation}
We set the energy window $D>\Lambda_t$ (usually the same scaling as $\Lambda_t$) accordingly and define the leakage $\Xi_t(D)$ as 
\begin{equation}
    \Xi _t(D)\equiv\sqrt{\sum_{j,|E_j(t)-E_0|\geq D}|a_j(t)|^2}. 
\end{equation}
Following similar procedures, we can bound the cases whose nested commutators $\mathrm{ad}_{H(t)}^m(H'(t))$ increase factorially $O(m!\Delta_q^m)$ by $\Xi^{(1)}_t(D)$
\begin{equation}
\Xi^{(1)}_{\Lambda_t,\Delta_q}(D)\equiv\left(\frac{\Delta_q}{D}\right)^{k_D}\frac{\Gamma\left(\frac{\Lambda_t}{\Delta_q}+k_D\right)}{\Gamma\left(\frac{\Lambda_t}{\Delta_q}\right)},
\end{equation}
where $\Delta_q\equiv2qM$ and $k_D\equiv\left\lceil\frac{D-\Lambda_t}{\Delta_q}\right\rceil$ is well defined only for $D>\Lambda_t$. This can be compared to Eq.~\eqref{eq:epsilon_t} in the previous case.

Similarly, for $H(t)$ whose $\mathrm{ad}_{H(t)}^m(H'(t))$ scales exponentially $O(\Delta_q^m)$, the leakage can be bounded by 
\begin{equation}
    \Xi^{(2)}_{\Lambda_t,\Delta_q}(D)\equiv \exp\left(-\frac{\Lambda_t}{\Delta_q}\left(\frac{D}{\Lambda_t}\ln\frac{D}{\Lambda_t}-\big(\frac{D}{\Lambda_t}-1\big)\right)\right),
\end{equation}
which can be compared to Eq.~\eqref{eq:epsilon_t2}.

If we set $D/\Lambda_t>1$ to be a $\Theta(1)$ constant independent of $t$ and $\Lambda_t$ is large, one can verify that both bounds scale the same as
\begin{equation}
    \Xi_{t}(D)\lesssim \exp\left(-\Omega\big(\frac{\Lambda_t}{2qM}\big)\right).
\end{equation}
Specifically, for $H(t)$ with a $\Theta(1)$ local norm $M$ and a general scaling $\Lambda_t \sim \Theta(f(n))$, the bound is roughly $e^{-\Omega(f(n))}$.

\appsubsection{Properties of $f_k^{(1)}$}
\label{app:proofF1k}
In this section, we prove that the $f^{(1)}_k(x)$'s we found 
\begin{equation}\label{eq:f1nx-pro}
    f_n^{(1)}(x)=\prod_{m=0}^{n-1}(x+m)=\sum_{j=0}^{n}\left[{n\atop j}\right]x^j
\end{equation}
indeed satisfies the recurrence relation \eqref{eq:f1recur-app}, where $\left[{n\atop j}\right]$ is the unsigned Stirling number of the first kind by definition. The  $\left[{n\atop j}\right]$ also has a combinatorial interpretation: $\left[{n\atop j}\right]$ counts the number of permutations when dividing the $n$ elements into $j$ disjoint cycles. It is easy to verify that $f_0^{(1)}(x)$ and $f_1^{(1)}(x)$ satisfy the relation \eqref{eq:f1recur-app}. Suppose that Eq.~\eqref{eq:f1nx-pro} holds for all $n<k$. At the $k$-th step, we can obtain $f_k^{(1)}(x)$ from the recurrence relation
\begin{equation}
\begin{aligned}
    f_{k}^{(1)}&(x)=\sum_{m=1}^k{k\choose m} (m-1)!\sum_{j=0}^{k-m}\left[{k-m\atop j}\right] \frac{x^{j+1}}{j+1}\\
    &=\sum_{j=0}^{k-1}\frac{x^{j+1}}{j+1}\sum_{m=1}^{k-j}{k\choose m} (m-1)! \left[{k-m\atop j}\right].
\end{aligned}
\end{equation}
We are going to prove the following identity combinatorially:
\begin{equation}
    \frac{1}{j+1}\sum_{m=1}^{k}{k\choose m} (m-1)!\left[{k-m\atop j}\right] = \left[{k\atop j+1}\right].
\end{equation}
The RHS can be easily interpreted as the number of permutations in dividing $k$ elements into $j+1$ disjoint cycles. The LHS can be interpreted as follows: 1. Choose a cycle $C$ of $m$ elements from the total $k$ elements, which gives ${k \choose m}$ choices. The cycle $C$ has $(m-1)!$ permutations. 2. Dividing the rest of elements into $j$ cycles, which gives $\left[{k-m\atop j}\right]$ permutations. 3. Sum over all possible $m$, we get the number of permutations in dividing $k$ elements into $j+1$ disjoint cycles. As any of the $j+1$ cycles can be labeled as $C$, we need an additional factor $1/(j+1)$ for multiple counting. Therefore, the identity is proved. Using $\left[{k\atop 0}\right]=0$ and $\left[{i\atop j}\right]=0$ for any $i<j$, we obtain
\begin{equation}
    f_k^{(1)}(x)=\sum_{j=0}^{k-1}\left[{k\atop j+1}\right]x^{j+1}=\sum_{j=0}^{k}\left[{k\atop j}\right]x^{j},
\end{equation}
which is exactly Eq.~\eqref{eq:f1nx-pro}.

\appsubsection{Properties of $f_k^{(2)}$}
\label{app:proofF2k}
In this section, we want to confirm that
\begin{equation}\label{eq:f2nx-pro}
    f_n^{(2)}(x)=\sum_{j=0}^{n}\left\{{n\atop j}\right\}x^j
\end{equation}
satisfies the recurrence relation \eqref{eq:f2recur-app}, where $\left\{{n\atop j}\right\}$ is the Stirling number of the second kind. The $\left\{{n\atop j}\right\}$ has a combinatorial interpretation: $\left\{{n\atop j}\right\}$ counts the number of ways to partition $n$ elements into $j$ unlabeled non-empty sets. It is easy to verify that $f_0^{(2)}(x)=1$ and $f_1^{(2)}(x)=x$ satisfy the relation \eqref{eq:f2recur-app}. Suppose that Eq.~\eqref{eq:f2nx-pro} holds for all $n<k$. In the $k$-th step, we can obtain $f_k$ from the recurrence relation.
\begin{equation}
\begin{aligned}
    f_{k}^{(2)}&(x)=\sum_{m=1}^k{k\choose m} \sum_{j=0}^{k-m}\left\{{k-m\atop j}\right\} \frac{x^{j+1}}{j+1}\\
    &=\sum_{j=0}^{k-1}\frac{x^{j+1}}{j+1}\sum_{m=1}^{k-j}{k\choose m}  \left\{{k-m\atop j}\right\}.
\end{aligned}
\end{equation}
We are going to prove the following identity combinatorially:
\begin{equation}
    \frac{1}{j+1}\sum_{m=1}^k{k\choose m}\left\{{k-m\atop j}\right\} = \left\{{k\atop j+1}\right\}.
\end{equation}
The RHS can be easily interpreted as the number of partitions in dividing $k$ elements into $j+1$ non-empty sets. The LHS can be interpreted as follows: 1. Choose a set $C$ containing $m$ elements from the total $k$ elements, which gives ${k \choose m}$ choices. 2. Divide the rest of the elements into $j$ non-empty sets, which gives $\left\{{k-m\atop j}\right\}$ ways. 3. Sum over all possible $m$, we get the number of partitions in dividing $k$ elements into $j+1$ nonempty sets. As any of the $j+1$ sets can be labeled as $C$, we need an additional factor $1/(j+1)$ for multiple counting. Therefore, the identity is proved. Using $\left\{{k\atop 0}\right\}=0$ and $\left\{{i\atop j}\right\}=0$ for any $i<j$, we obtain
\begin{equation}
    f_k^{(2)}(x)=\sum_{j=0}^{k-1}\left\{{k\atop j+1}\right\}x^{j+1}=\sum_{j=0}^{k}\left\{{k\atop j}\right\}x^{j},
\end{equation}
which is exactly Eq.~\eqref{eq:f2nx-pro}.

\begin{table*}[t]
\begin{tabular}{|c|c|c|c|c|}
        \hline
        & \multicolumn{2}{c|}{Dynamical localization}  & Eigenstate & Slow mixing of  \\ 
        \cline{2-3}
        Perturbation type  & Exponential time & Infinite time &  localization & Gibbs sampler \\
        \hline 
        Static strictly $q$-local  & $\lambda$ \eqref{eq:lambdaExp-q}, $\xi$ \eqref{eq:xi-DyLoc-q-app} &$\lambda$ \eqref{eq:lambdaForInf-app}, $\xi$ \eqref{eq:xi-InfLoc-q-app} &$\lambda$ \eqref{eq:lambdaForEigen-app}, $\xi$  \eqref{eq:xi-eigenLoc-q-app}  & $\lambda$ \eqref{eq:lambda-Gibbs-q-app}, $\xi$ \eqref{eq:xi-Gibbs-q-app}\\
        Static quasi-$q_{\star}$-local  &  $\lambda$ \eqref{eq:lambdaForExp-app}, $\xi$ \eqref{eq:xi-DyLoc-Quasiq-app} & $\lambda,q_\star$ \eqref{eq:Lambda-InfLoc-Quasiq-app}, $\xi$ \eqref{eq:xi-InfLoc-QuasiQ-app}  & $\lambda,q_\star$ \eqref{eq:Lambda-eigenLoc-Quasiq-app}, $\xi$ \eqref{eq:xi-eigenLoc-Quasiq-app} &  $\lambda,q_\star$ \eqref{eq:lambda-Gibbs-QuasiQ-app}, $\xi$ \eqref{eq:xi-Gibbs-QuasiQ-app} \\
        Time-dependent (quasi-)$q$-local  & \eqref{eq:lambdaExp-q}, \eqref{eq:xi-DyLoc-q-app}; \eqref{eq:lambdaForExp-app}, \eqref{eq:xi-DyLoc-Quasiq-app}&- &- & -\\
        \hline 
\end{tabular}
\caption{Summary of requirements for perturbation strength parameters $\lambda$ \& $q_\star$ and leakage bounds $e^{-\xi n}$ for different stabilities. For strictly $q$-local perturbations, the strength is depicted by $\lambda$. For quasi-$q_{\star}$-local perturbations, the strength is depicted by both $\lambda$ and $q_{\star}$. In closed systems, these parameters are mainly determined by the energy barrier height $bn$ and the initial energy density $\varepsilon_0$. In open systems, they are also determined by the inverse temperature $\beta$.}
\label{tab:summary-results2}
\end{table*}

\appsection{Proof of the dynamical localization}
\label{app:pDynamicLoc}

We consider a state $\ket{\psi(t)}$ initialized at eigenstate of $H_C$ inside $w_0$ with energy $E_0=\varepsilon_0 n+E_g$, evolving under $H(t)=H_C+V(t)$. We assume $H_C$ is a good c/qLDPC code with linear soundness and satisfies the property discussed in Sec.~\ref{sec:LDPC}. An essential property from the low-density requirement is that $H_C$ is at most $q_C$-local and 
\begin{equation}
    \|H_C\|_{\textnormal{loc}}\leq p_C,
\end{equation}
where $p_C$ is the maximum number of checks involving one single qubit and $q_C$ is the maximum qubit number in one check.

We are going to prove that $\ket{\psi(t)}$ remains inside $w_0$ up to an exponentially/infinitely long time under different conditions, with the leakage being exponentially small.
The dynamical localization is defined in the eigenbasis of $H_C$. For eigenstates $\ket{\phi_j}$ of $H_C$, we denote $P_w\equiv\sum_{\phi_j\in w} \ket{\phi_j}\bra{\phi_j}$ as the projector to the cluster $w$ and $P_W=\sum_{w}P_w$ as the projector to \textit{all} clusters. But there are also high-energy states not in any cluster, and we denote the projector to these high-energy states as $P_{>}\equiv1-P_{W}$, i.e.,
\begin{equation}
    P_{>}=1-\sum_{w}P_w.
\end{equation} 
Since all eigenstates of $H_C$ below $E_B\equiv E_g+bn$ are divided into clusters $w$'s, $P_{>}$ projects a state to its components with energies higher than $E_B$.

As different clusters are globally separated, a state cannot transit to other clusters directly without first overcoming the energy barriers. Thus, an essential step in proving dynamical localization is to bound the total weight of high-energy eigenstates of $H_C$ at arbitrary time $t$. To bound the high-energy weights at time $t_1$, we can invoke Theorem~\ref{thm:Energy-local} for an extended evolution under the following segments of the Hamiltonian
\begin{equation}
\label{eq:DyExt-app}
    H_{\textnormal{ext}}(t)=\begin{cases}
        H_C &~\textnormal{for} ~t=0^{-},\\
        H_C+V(t) &~\textnormal{for} ~0\leq t\leq t_1,\\
        H_C &~\textnormal{for} ~t=t_1^{+},
    \end{cases}
\end{equation}
where $H_{\textnormal{ext}}(t)$ at $t=0^-,0$ and $t_1,t_1^+$ are linearly connected (similar to Eq.~\eqref{eq:tTostatic}) and we let the system evolve from $0^{-}$ to $t_1^{+}$. From Schr\"odinger's equation, we know that $\ket{\psi(t_1^+)}=\ket{\psi(t_1)}$, and thus we can bound the high-energy leakage of $\ket{\psi(t_1)}$ in the eigenbasis of $H_C$ at any time $t_1$. This results in the following lemma.

\begin{lem}
\label{lem:P>-app}
We consider the $\ket{\psi(t)}$ evolving under $H(t)=H_C+V(t)$ defined above with strictly $q$-local $V(t)$, where we assume $q\geq q_C$ for simplicity. For Case~\ref{case:1} and Case~\ref{case:2}, we can bound the high-energy contribution in the eigenbasis of $H_C$ by
\begin{equation}
    \|P_{>}\ket{\psi(t)}\|\leq \epsilon^{(1)}_{2\lambda_t,\Delta_q}(b-\varepsilon_0),
\end{equation}
For Case~\ref{case:3}, we have
\begin{equation}
    \|P_{>}\ket{\psi(t)}\|\leq \epsilon^{(2)}_{2\lambda_t,\Delta_q}(b-\varepsilon_0),
\end{equation}
where $\lambda_t\equiv\frac{1}{n}\int_{0^-}^{t}\left\|H_{\text{ext}}'(\tau)\right\|_{\textnormal{X}} \mathrm{d}\tau$ and $\Delta_q=2qM$. If $V(t)$ is quasi-$q_{\star}$-local and $H(t)$ satisfies Case~\ref{case:4}, we have
\begin{equation}
    \|P_{>}\ket{\psi(t)}\|\leq \epsilon^{(1)}_{2\lambda_t,\Delta_{q_{\star}}}(b-\varepsilon_0),
\end{equation}
where $\lambda_t\equiv\frac{1}{n}\int_{0^-}^{t}\left\| H_{\text{ext}}'(\tau)\right\|_{q_{\star}\textnormal{-X}} \mathrm{d}\tau$ and $\Delta_{q_{\star}}=2q_{\star}M$. In all cases we only require $\lambda_t<(b-\varepsilon_0)/2$, or equivalently, $\varepsilon_0<b-2\lambda_t$.
\end{lem}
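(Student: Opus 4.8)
The plan is to reduce the bound on $\|P_{>}\ket{\psi(t)}\|$ to a single application of Theorem~\ref{thm:Energy-local} via the extended evolution $H_{\textnormal{ext}}(t)$ of Eq.~\eqref{eq:DyExt-app}. First I would make the extended protocol precise: between $t=0^{-}$ and $t=0$ one linearly ramps from $H_C$ to $H_C+V(0)$, over $[0,t_1]$ one evolves under the physical $H_C+V(t)$, and between $t=t_1$ and $t=t_1^{+}$ one linearly ramps back from $H_C+V(t_1)$ to $H_C$, both ramps taken over vanishingly short time intervals. Because a finite (sudden) quench of the Hamiltonian leaves the wavefunction unchanged, taking the ramp durations to zero gives $\ket{\tilde\psi(0)}=\ket{\psi(0)}$ and $\ket{\tilde\psi(t_1^{+})}=\ket{\tilde\psi(t_1)}=\ket{\psi(t_1)}$ for the state $\ket{\tilde\psi}$ evolving under $H_{\textnormal{ext}}$. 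The key point making this legitimate is that the leakage bounds of Theorem~\ref{thm:Energy-local} depend on $H_{\textnormal{ext}}$ only through its total variation and local norm, both of which stay finite in this singular limit (the bounds are invariant under monotonic time reparametrization), so the theorem still applies with $H_{\textnormal{ext}}(0^{-})=H_{\textnormal{ext}}(t_1^{+})=H_C$ and initial state an eigenstate of $H_C$ with energy $E_0$.

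Next I would check that $H_{\textnormal{ext}}$ inherits the structural hypotheses of the relevant Case. It is $q$-local (using $q\ge q_C$ so that $H_C$ is $q$-local too), its local norm on the ramps is bounded since $H_C+sV$ for $s\in[0,1]$ has local norm at most $\|H_C\|_{\textnormal{loc}}+\|V\|_{\textnormal{loc}}\le M$, and in the commuting scenarios (Case~\ref{case:3}, Case~\ref{case:4}) the splitting $H_{\textnormal{ext}}=H_C+V_{\textnormal{ext}}(t)$ keeps $H_C$ mutually commuting while $[V_{\textnormal{ext}}(t),V_{\textnormal{ext}}'(t)]=0$ still holds on the ramps because there $V_{\textnormal{ext}}(t)$ is proportional to a fixed operator. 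Then I would track the total variation: the ramp-up together with the physical evolution over $[0^{-},t]$ contribute exactly $\lambda_t n$ by the definition of $\lambda_t$ in the lemma, while the ramp-down adds $\|V(t)\|_{\textnormal{X}}\le\lambda_t n$ (and similarly in the $q_{\star}$-X norm), so the total variation of $H_{\textnormal{ext}}$ over $[0^{-},t^{+}]$ is at most $2\lambda_t n$. This is precisely the origin of the factor $2$ in the effective perturbation strength, and by monotonicity of $\epsilon^{(1)}$ and $\epsilon^{(2)}$ in the strength parameter, bounding the variation from above is enough.

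Finally I would set $d=b-\varepsilon_0$. Since every eigenstate of $H_C$ with energy above $E_B=E_g+bn=E_0+(b-\varepsilon_0)n=E_0+dn$ lies outside the window $[E_0-dn,E_0+dn]$, the vector $P_{>}\ket{\psi(t)}$ is supported entirely on the ``outside'' eigenstates, hence $\|P_{>}\ket{\psi(t)}\|\le\epsilon_t(b-\varepsilon_0)$. Applying Theorem~\ref{thm:Energy-local} to $H_{\textnormal{ext}}$ with strength $2\lambda_t$ then yields $\epsilon^{(1)}_{2\lambda_t,\Delta_q}(b-\varepsilon_0)$ for Cases~\ref{case:1} and~\ref{case:2}, the improved $\epsilon^{(2)}_{2\lambda_t,\Delta_q}(b-\varepsilon_0)$ for Case~\ref{case:3}, and $\epsilon^{(1)}_{2\lambda_t,\Delta_{q_{\star}}}(b-\varepsilon_0)$ with $\Delta_{q_{\star}}=2q_{\star}M$ in the quasi-$q_{\star}$-local Case~\ref{case:4}; the hypothesis $\lambda_t<(b-\varepsilon_0)/2$ is exactly what makes $d=b-\varepsilon_0>2\lambda_t$, so the bounds are nontrivial. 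The main obstacle I anticipate is not a single estimate but the careful bookkeeping around the singular ramp limit—keeping the sudden-quench argument and the reparametrization-invariance of the total-variation bound consistent—together with verifying that the commuting-splitting hypotheses genuinely survive the extension in Cases~\ref{case:3} and~\ref{case:4}.
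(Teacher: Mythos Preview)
Your proposal is correct and follows essentially the same route as the paper: extend the evolution by infinitesimal linear ramps at both ends so that $H_{\textnormal{ext}}(0^{-})=H_{\textnormal{ext}}(t_1^{+})=H_C$, use the sudden-quench argument to identify $\ket{\tilde\psi(t_1^{+})}=\ket{\psi(t_1)}$, bound the total variation of $H_{\textnormal{ext}}$ by $2\lambda_t n$ (the ramp-down contributes $\|V(t_1)\|_{\textnormal{X}}\le\lambda_t n$), and then invoke Theorem~\ref{thm:Energy-local} with $d=b-\varepsilon_0$ and strength $2\lambda_t$. Your additional explicit checks that the Case hypotheses survive the extension and that $P_{>}$ is supported outside the energy window are more detailed than the paper's write-up but do not change the argument.
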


\begin{proof}
At a given time $t=t_1$, we consider $\ket{\psi(t_1^+)}$ from the extended evolution $H_{\text{ext}}(t)$ in \eqref{eq:DyExt-app}, which produces exactly the same state as $\ket{\psi(t_1)}$ under $H(t)=H_C+V(t)$ from $t=0$ to $t_1$. For the evolution from $t_1$ to $t_1^+$,  the Hamiltonian is effectively $H_{\textnormal{ext}}(t)=H_C+(1-\frac{t-t_1}{\tau})V(t_1)$ for $t\in[t_1,t_1+\tau]$ with $\tau\rightarrow 0^+$. The evolution of the state and the Hamiltonian is then
\begin{equation}
\begin{aligned}
    H_C\xrightarrow{t\in[0^-,0]} &H(0)\xrightarrow{~t\in [0,t_1]~} H(t_1)\xrightarrow{t\in [t_1,t_1^+]}H_C,\\
    \ket{\psi(0)}\xrightarrow{t\in[0^-,0]} &\ket{\psi(0)}\xrightarrow{t\in [0,t_1]} \ket{\psi(t_1)}\xrightarrow{t\in [t_1,t_1^+]}\ket{\psi(t_1)}.
\end{aligned}
\end{equation} 
We then use Theorem~\ref{thm:Energy-local} to bound the high-energy components in the instantaneous eigenbasis of $H_{\textnormal{ext}}(t_1^{+})$ from from $t=0^-$ to $t_1^{+}$. By construction, now the instantaneous Hamiltonian $H_{\textnormal{ext}}(t_1^{+})$ is exactly the original Hamiltonian
\begin{equation}
    H_{\textnormal{ext}}(t_1^{+})=H_C.
\end{equation}
Therefore, the high-energy component $\|P_{>}\ket{\psi(t_1)}\|$ in the eigenbasis of $H_C$ at $t_1$ can be directly bounded by Theorem~\ref{thm:Energy-local} after the extension.

With such extension of evolution, the total variation here increases by the additional virtual evolution from $H(t_1)=H_C+V(t_1)$ back to $H_C$ from $t_1$ to $t_1^{+}$. The contribution to the total variation is exactly $\|V(t_1)\|_{\textnormal{X}}$. We can use
\begin{equation}
   \int_{t_1}^{t_1^+} \left\|\mathrm{d}H_{\textnormal{ext}}\right\|_{\mathrm{X}}=\|V(t_1)\|_{\textnormal{X}}=\left\|\int^{t_1}_{0^-} \mathrm{d}H_{\textnormal{ext}}\right\|_{\mathrm{X}}\leq \int^{t_1}_{0^-} \|\mathrm{d}H_{\textnormal{ext}}\|_{\mathrm{X}}\equiv\lambda_{t_1} n
\end{equation}
to bound the variation by $\lambda_{t_1} n$. Note that we have re-defined the total variation as $\int^{t_1}_{0^-} \|\mathrm{d}H_{\textnormal{ext}}\|_{\mathrm{X}}$ instead of $\int^{t_1}_{0} \|\mathrm{d}H\|_{\mathrm{X}}$. Thus, we can bound the total variation for the extended evolution by $2\lambda_{t_1} n$, at a cost of a factor of two, 
\begin{equation}
    \int_{0^-}^{t_1^+} \|H_{\textnormal{ext}}'(t)\|_{\textnormal{X}} \mathrm{d}t\leq 2\lambda_{t_1} n.
\end{equation}
Since the procedure works for arbitrary $t_1$, by setting 
\begin{equation}
d=b-\varepsilon_0,~\lambda=2\lambda_t,~\Delta=\Delta_q\equiv 2qM,
\end{equation}
in Theorem~\ref{thm:Energy-local} for Case~\ref{case:1} - \ref{case:4}, we obtain the all the bounds respectively.
\end{proof}

We remark that for Cases~\ref{case:2} - \ref{case:4}, the condition $[H'(t),V(t)]=0$ is reduced to $[V'(t),V(t)]=0$, which can be satisfied for perturbations such as $V(t)=f(t)V_0$. In the following, we show the detailed proof for the localization time by using the above lemma that we just proved.

\appsubsection{Exponentially long time dynamical localization}
\label{app:DyLocExpProof}

In this section, we consider the evolution under the following time-dependent Hamiltonian:
\begin{equation}
    H(t)=H_C+V(t),
\end{equation}
where $V(t)$ is a time-dependent perturbation. We require that $V(t)$ satisfies $\int_{0^-}^T \|H_{\textnormal{ext}}'(t)\|_{\textnormal{X}}\leq \lambda n$ for strictly $q$-local cases. This is equivalent to
\begin{equation}
    \|V(0)\|_{\textnormal{X}}+\int_0^T  \|V'(t)\|_{\textnormal{X}}\leq \lambda n.
\end{equation}
Similarly, for quasi-$q_{\star}$-local $V(t)$ in Case~\ref{case:4}, we require that 
\begin{equation}
    \|V(0)\|_{q\textnormal{-X}}+\int_0^T  \|V'(t)\|_{q\textnormal{-X}}\leq \lambda n.
\end{equation}

\subsubsection{Strictly $q$-local perturbation}
Here, we are going to prove the dynamical localization for Case~\ref{case:3}, and the results for Case~\ref{case:1} and \ref{case:2} can be obtained by replacing $\epsilon^{(2)}$ with $\epsilon^{(1)}$.

In condition $q<\nu_2$, we have $P_wV(t)P_{w'}=0$ for $w\neq w'$, i.e., there are no transition elements between two different clusters. Consider the expectation of $P_{w_0}$ for an initial eigenstate with energy $\varepsilon_0 n+E_g$ inside the cluster $w_0$. Its derivative is given by
\begin{equation}
    \frac{\mathrm{d}}{\mathrm{d}t}\braket{\psi(t)|P_{w_0}|\psi(t)}=i\Braket{[H(t),P_{w_0}]}_t.
\end{equation}
Note that
\begin{equation}
\begin{aligned}
    [H(t),P_{w_0}]&=[V(t),P_{w_0}]\\
    &=(1-P_{w_0})V(t)P_{w_0}-P_{w_0}V(t)(1-P_{w_0})\\
    &=P_{>}V(t)P_{w_0}-P_{w_0}V(t)P_{>},
\end{aligned}
\end{equation}
where we have used $P_{w'}V(t)P_w=0$ for $w\neq w'$ and $P_{>}=1-\sum_{w}P_w$ in the second and third lines above. We have already proved in Lemma.~\ref{lem:P>-app} that $\|P_{>}\ket{\psi(t)}\|$ is exponentially small
\begin{equation}
    \|P_{>}\ket{\psi(t)}\|\leq \epsilon^{(2)}_{2\lambda_t,\Delta_q}(b-\varepsilon_0).
\end{equation}
Thus, we can bound the derivative
\begin{equation}
    \left|\frac{\mathrm{d}}{\mathrm{d}t}\braket{P_{w_0}}_t\right|\leq 2\|P_{>}\ket{\psi(t)}\|\cdot\|V(t)P_{w_0}|\psi(t)\rangle\|\leq 2\lambda_t n \epsilon^{(2)}_{2\lambda_t,\Delta_q}(b-\varepsilon_0).
\end{equation}
Using $\braket{P_{w_0}}_{t=0}=1$ and $\lambda_t\leq \lambda_T\equiv\lambda$, we can obtain
\begin{equation}
    \braket{1-P_{w_0}}_T\leq 2\lambda nT \epsilon^{(2)}_{2\lambda,\Delta_q}(b-\varepsilon_0).
\end{equation}
Since $\epsilon^{(2)}_{2\lambda,\Delta_q}(b-\varepsilon_0)\leq e^{-\xi n}$, where
\begin{equation}
\label{eq:xi-DyLoc-q-app}
    \xi=\frac{2\lambda}{\Delta_q}\left(\frac{b-\varepsilon_0}{2\lambda}\ln\frac{b-\varepsilon_0}{2\lambda}-\big(\frac{b-\varepsilon_0}{2\lambda}-1\big)\right),
\end{equation}
the leakage outside $w_0$ is exponentially small as long as $T\leq \frac{1}{2\lambda n}  e^{\xi_1 n}$ with any $\xi_1<\xi$. 

Therefore, any initial eigenstate of $H_C$ localized in $w_0$, with energy $E_g+\varepsilon_0 n$ with $\varepsilon_0<b$, remains localized inside $w_0$ under $q$-local time-dependent perturbation $V(t)$ up to exponentially long time $T<\frac{1}{2\lambda n}e^{\xi_1 n}$, as long as the perturbation strength satisfies
\begin{equation}
\label{eq:lambdaExp-q}
    \lambda <\frac{b-\varepsilon_0}{2}\equiv\lambda_{\textnormal{Exp}},
\end{equation}
with leakage outside $w_0$ bounded by $e^{-(\xi-\xi_1)n}$. Here, one can choose any $\xi_1$ such that $0<\xi_1<\xi$ to fulfill the exponentially long dynamical localization.

\subsubsection{Quasi-$q_{\star}$-local perturbation}
Here, we are going to prove the dynamical localization for Case~\ref{case:4}. As a quasi-$q$-local perturbation contains terms that can be very non-local, thus the trick of zero matrix elements between two different clusters does not work. We have to proceed with a slight modification. For $H(t)=H_C+V(t)$ with quasi-$q_\star$-local $V$, we define 
\begin{equation}
    H(t)=H_C+V_L(t)+V_R(t),
\end{equation}
where $V(t)$ is divided into the local part $V_L(t)$ and the remainder $V_R(t)$. $V_L(t)$ includes all terms in $V(t)$ with the $k$-locality satisfying $k<\nu_2$ with $\nu_2\sim O(n)$, so that they do not introduce direct matrix elements between different clusters, and the remainder $V_R(t)\equiv V(t)-V_L(t)$ contains all the non-local terms in $V(t)$. Then the local part $V_L(t)$ satisfies
\begin{equation}
    P_{w'}V_L(t)P_w=0, ~\text{for }~w\neq w'.
\end{equation}
By $\int_{0^{-}}^t\|H'(\tau)\|_{q_\star\textnormal{-X}} \mathrm{d} \tau\leq \lambda_t n$, we have
\begin{equation}
    \|V_R(t)\|\leq \lambda_t n e^{-\nu_2/q_{\star}}.
\end{equation}
Similarly, we have
\begin{equation}
    \frac{\mathrm{d}}{\mathrm{d}t}\braket{\psi(t)|P_{w_0}|\psi(t)}=i\Braket{[H(t),P_{w_0}]}_t.
\end{equation}
We can rewrite the r.h.s. by noting that
\begin{equation}
    [H(t),P_{w_0}]=[V_L(t)+V_R(t),P_{w_0}]    =P_{>}V_L(t)P_{w_0}-P_{w_0}V_L(t)P_{>}+[V_R(t),P_{w_0}].
\end{equation}
Using Lemma~\ref{lem:P>-app}, we can bound the derivative by
\begin{equation}
    \left|\frac{\mathrm{d}}{\mathrm{d}t}\braket{P_{w_0}}_t\right|\leq 2\lambda_t n \left(\epsilon^{(1)}_{2\lambda_t,\Delta_{q_{\star}}}(b-\varepsilon_0)+ e^{-\nu_2/q_{\star}}\right).
\end{equation}
Using $\braket{P_{w_0}}_{t=0}=1$ and $\lambda_t\leq \lambda_T\equiv\lambda$, we obtain
\begin{equation}
    \braket{1-P_{w_0}}_T\leq 2\lambda nT \left(\epsilon^{(1)}_{2\lambda,\Delta_{q_{\star}}}(b-\varepsilon_0)+ e^{-\nu_2/q_{\star}}\right).
\end{equation}
Here $\nu_2=d_C-2\gamma n\sim O(n)$ by definition. Since both $\epsilon^{(1)}_{2\lambda,\Delta_{q_{\star}}}(b-\varepsilon_0) $ and $e^{-\nu_2/q_{\star}}$ decay exponentially with $n$, we can define
\begin{equation}
\label{eq:xi-DyLoc-Quasiq-app}
    \xi=\min\left(\frac{2\lambda}{\Delta_{q_{\star}}}\Big(\frac{b-\varepsilon_0}{2\lambda}-1-\ln\frac{b-\varepsilon_0}{2\lambda}\Big),\frac{\nu_2}{q_{\star}n}\right).
\end{equation}
Then the leakage outside $w_0$ is exponentially small as long as $T\leq \frac{1}{2\lambda n} e^{\xi_1 n}$ with any $\xi_1<\xi$.

Therefore, any initial eigenstate of $H_C$ localized in $w_0$, with energy $E_g+\varepsilon_0 n$ with $\varepsilon_0<b$, remains localized inside $w_0$ under quasi-$q_{\star}$-local time-dependent perturbation $V(t)$ up to exponentially long time $T<e^{\xi_1 n}$, as long as the perturbation strength satisfies
\begin{equation}
\label{eq:lambdaForExp-app}
    \lambda <\frac{b-\varepsilon_0}{2}\equiv\lambda_{\textnormal{Exp}},
\end{equation}
with leakage outside $w_0$ bounded by $e^{-(\xi-\xi_1)n}$. Here, one can choose any $\xi_1$ such that $0<\xi_1<\xi$. We note that the requirement for $\lambda$ is the same as that for $q$-local perturbation, but the requirement for $\xi_1$ is different.

\appsubsection{Infinitely long time dynamical localization in cLDPC codes}
\label{app:DyLocInfinite}

In this section, we are going to show that the dynamical localization can be infinitely long in some classical models with linear energy barriers. 

We consider the following static Hamiltonian (similarly defined in~\cite{yin2024eigenstate})
\begin{equation}
    H=H_C+V_0+H_{\textnormal{d}},
\end{equation}
where $H_C$ is a good cLDPC code with linear soundness and $H_{\textnormal{d}}$ is a small diagonal detuning defined by
\begin{equation}
    H_{\textnormal{d}}\equiv \frac{1}{n^2}\sum_{j}h_j^z Z_j,
\end{equation}
where $h_j^z$ can be independently randomly chosen from the Gaussian distribution with mean 0 and variance 1. We note that $H_{\textnormal{d}}$ only contributes $O(\frac{1}{n})$ to the total variation of $H$ from $H_C$, so it is negligible when considering leakage bounds in the energy space.

While the setup of $H$ here is only a special case in the exponentially-long-time dynamical localization discussed in Appendix~\ref{app:DyLocExpProof}, it remains nontrivial to prove that the localization time can be truly infinite at finite $n$.

In Appendix~\ref{app:pEigenLoc}, we prove that if $\lambda\sim O(1)$ is small enough (also $q_\star\sim O(1)$ is small enough for quasi-$q_\star$-local perturbations), the Hamiltonian $H$ exhibits eigenstate localization for $E'<E_g+bn$ (see explicit bounds for $\lambda$ in Eq.~\eqref{eq:lambdaBoundsEigenLocQ-app} and \eqref{eq:lambdaBoundsEigenLocQuasiQ-app}). Here, the bound for $\lambda$ is a function of $E_B-E'$ and will vanish if $E_B-E'\rightarrow 0$ 

Suppose we have an initial state $\ket{\psi(t)}$ initialized at eigenstate of $H_C$ inside $w_0$ with energy $E_0=\varepsilon_0 n+E_g$, evolving under $H=H_C+V_0+H_{\textnormal{d}}$. From Theorem~\ref{thm:static-Energy-loc} we know that $\ket{\psi(0)}$ roughly is exponentially localized in the energy window $[E_0-\lambda n,E_0+\lambda n]$ in the spectrum of $H$. From Appendix~\ref{app:pEigenLoc}, we know that any eigenstate of $H$ below $E_B\equiv E_g+bn$ is localized in one single cluster if $\lambda$ is sufficiently small. Thus, as long as the majority of $\ket{\psi(0)}$ comes from the localized eigenstates inside $w_0$, we can bound the leakage for all time. We show the detailed proof for this argument below.

\subsubsection{Strictly $q$-local perturbation}
In this subsection, we assume $V_0+H_{\mathrm{d}}$ is strictly $q$-local and 
\begin{equation}
    \|V_0+H_{\mathrm{d}}\|_{\textnormal{X}}\leq\lambda n.
\end{equation}
To rigorously demonstrate our arguments at the beginning of the section, we can expand the $\ket{\psi(0)}$, an eigenstate of $H_C$ with energy $E_g+\varepsilon_0 n$, in the eigenbasis of $H$
\begin{equation}
\label{eq:psi0=3termsInfiniteLoc-app}
\begin{aligned}
    \ket{\psi(0)}&=\sum_{j} a_{\omega_0,j}\ket{\phi_{\omega_0,j}}+\sum_{\omega\neq\omega_0 ,j}a_{\omega,j}\ket{\phi_{\omega,j}}+\ket{\psi_R}\\
    &\equiv\ket{\psi_{w_0}}+\ket{\psi_{\neq w_0}}+\ket{\psi_R},
\end{aligned}
\end{equation}
where the first two terms contain all localized eigenstates (see Appendix~\ref{app:pEigenLoc} for details) with energy below $E_g+(\varepsilon_0+d)n$ (with $d>\lambda$) of $H$ and $\ket{\psi_R}$ is the remainder. The notion $\ket{\phi_{\omega_0,j}}$ denotes that it is localized inside $w_0$ with exponentially small leakage, which is proved in Appendix~\ref{app:pEigenLoc}. The eigenstate localization condition requires that $\varepsilon_0+d<b$. From Theorem~\ref{thm:static-Energy-loc} we know that the leakage is 
\begin{equation}
    \|\ket{\psi_R}\|\leq \epsilon^{(2)}_{\lambda,\Delta_q}(d).
\end{equation}
From Eq.~\eqref{eq:EigenLocQLeak-app} we know that for energy below $E_g+(\varepsilon_0+d)n$, all eigenstates of $H$ is localized when $\delta_W$ is sufficiently large, which is satisfied with high probability $1-e^{-2(1-\ln 2)n}$ and is shown in Appendix~\ref{app:pEigenLoc}. The leakage outside $w_0$ can be bounded by
\begin{equation}
    \|(1-P_w)\ket{\phi_{w,j}}\|\leq e^{-\xi(\varepsilon_0+d,\lambda) n},
\end{equation}
where we can set $\varepsilon'=\varepsilon_0+d$ and 
\begin{equation}
    \xi(\varepsilon',\lambda)\equiv\frac{b-\varepsilon'}{\Delta_q}\ln\frac{b-\varepsilon'}{\lambda}-\frac{b-\varepsilon'-\lambda}{\Delta_q}-2.
\end{equation}
Thus, the contribution to $w_0$ in $\ket{\psi_{\neq w_0}}$ of Eq.~\eqref{eq:psi0=3termsInfiniteLoc-app} can be bounded by 
\begin{equation}
    \left\|P_{w_0}\sum_{\omega\neq\omega_0 ,j}a_{\omega,j}\ket{\phi_{\omega,j}}\right\| \leq 2^{n/2}e^{-\xi(\varepsilon',\lambda) n}.
\end{equation}
Thus, contribution of the $\ket{\psi_{w_0}}$ of Eq.~\eqref{eq:psi0=3termsInfiniteLoc-app} cannot be too small
\begin{equation}
\begin{aligned}
   & \|\ket{\psi_{\omega_0}}\|\geq \|P_{\omega_0}\ket{\psi_{\omega_0}}\|\\
    &\geq \|P_{\omega_0}\ket{\psi(0)}\|-\|P_{\omega_0}\ket{\psi_{\neq w_0}}\|-\|P_{\omega_0}\ket{\psi_{R}}\|\\
    &\geq 1-e^{-(\xi-\frac{1}{2}\ln 2)n}-\epsilon^{(2)}_{\lambda,\Delta_q}(d)\equiv1-\epsilon_{\neq\omega_0},
\end{aligned}
\end{equation}
where we have defined
\begin{equation}
    \epsilon_{\neq\omega_0}\equiv e^{-(\xi-\frac{1}{2}\ln 2)n}+\epsilon^{(2)}_{\lambda,\Delta_q}(d).
\end{equation}
We can also bound $\|\ket{\psi_{\neq \omega_0}}\|$ by
\begin{equation}
\begin{aligned}
    \|\ket{\psi_{\neq \omega_0}}\|\leq\sqrt{1-\|\ket{\psi_{\omega_0}}\|^2}\leq\sqrt{2\epsilon_{\neq\omega_0}}.    
\end{aligned}
\end{equation}
Since the evolution under $H$ will not change the weights on its eigenstates, $\|\ket{\psi_{\omega_0}}\|$ and $\|\ket{\psi_{\neq \omega_0}}\|$ are conserved at all times. Then we can bound the leakage of $\ket{\psi(t)}$ outside $w_0$ at any $t$ by 
\begin{equation}
\begin{aligned}
   &\| (1-P_{w_0})\ket{\psi(t)}\|=\| (1-P_{w_0})e^{-iHt}\ket{\psi(0)}\|\\
   &\leq \|(1-P_{w_0})\ket{\psi_{\omega_0}}\| + \|\ket{\psi_{\neq\omega_0}}\| +\|\ket{\psi_R}\|\\
    &\leq 2\sqrt{2\epsilon_{\neq\omega_0}} + \epsilon^{(2)}_{\lambda,\Delta_q}(d).
\end{aligned}
\end{equation}
For any $d>\lambda$, the $\epsilon^{(2)}_{\lambda,\Delta_q}(d)$ is exponentially small. Thus, we only need to ensure that $e^{-(\xi-\frac{1}{2}\ln 2)n}$ is exponentially small, i.e., $\xi>\frac{1}{2}\ln2$. For convenience, we define $\xi_{\infty}\equiv(\xi-\frac{1}{2}\ln 2)/2$ and set $d=(b-\varepsilon_0)/2>\lambda$, and obtain
\begin{equation}
\label{eq:xi-InfLoc-q-app}
\xi_\infty=\frac{b-\varepsilon_0}{4\Delta_q}\ln\frac{b-\varepsilon_0}{2\lambda}-\frac{b-\varepsilon_0-2\lambda}{4\Delta_q}-1-\frac{1}{4}\ln 2.
\end{equation}
One can verify that if
\begin{equation}
\label{eq:lambdaForInf-app}
    \lambda<\frac{b-\varepsilon_0}{2e}e^{-\frac{2e\Delta_q}{b-\varepsilon_0}}\equiv\lambda_{\infty},
\end{equation}
we can ensure that $\xi_{\infty}>0$. 

Thus, if $\lambda$ satisfies the above condition, any initial eigenstate of $H_C$ localized in $w_0$, with energy $E_g+\varepsilon_0 n$ with $\varepsilon_0<b$, remains localized inside $w_0$ up to infinite time. The leakage outside $w_0$ can be bounded by
\begin{equation}
    \| (1-P_{w_0})\ket{\psi(t)}\| \leq e^{-\xi_{\infty}n}.
\end{equation}

If we can compare the requirement for $\lambda$ in Eq.~\eqref{eq:lambdaForInf-app} with exponentially long dynamical localization \eqref{eq:lambdaForExp-app} and eigenstate localization \eqref{eq:lambdaForEigen-app}, we can find that
\begin{equation}
    \lambda_{\infty}<\lambda_{\textnormal{Eigen}}<\lambda_{\textnormal{Exp}}.
\end{equation}
Thus, the requirement for infinite-time dynamical localization is the most stringent.

\subsubsection{Quasi-$q_{\star}$-local perturbation}
In this subsection, we assume $V_0+H_{\mathrm{d}}$ is quasi-$q_{\star}$-local and 
\begin{equation}
    \|V_0+H_{\mathrm{d}}\|_{q_{\star}\textnormal{-X}}\leq\lambda n.
\end{equation}
Similar to Eq.~\eqref{eq:psi0=3termsInfiniteLoc-app}, we can divide the initial state into three parts
\begin{equation}
\label{eq:psi0=3termsInfiniteLocQausi-app}
    \ket{\psi(0)}\equiv\ket{\psi_{w_0}}+\ket{\psi_{\neq w_0}}+\ket{\psi_R},
\end{equation}
where the first two terms contain all localized eigenstates with energy below $E_g+(\varepsilon_0+d)n$ (with $d>\lambda$) of $H$ and $\ket{\psi_R}$ is the remainder. The eigenstate localization condition requires that $\varepsilon_0+d<b$. From Theorem~\ref{thm:static-Energy-loc} we know that the leakage is 
\begin{equation}
    \|\ket{\psi_R}\|\leq \epsilon^{(1)}_{\lambda,\Delta_{q_\star}}(d).
\end{equation}
From Eq.~\eqref{eq:EigenLocQuasiQLeak-app} we know that for energy below $E_g+(\varepsilon_0+d)n$, all eigenstates of $H$ is localized with high probability $1-e^{-2(1-\ln 2)n}$, with leakage
\begin{equation}
    \|(1-P_w)\ket{\phi_{w,j}}\|\leq e^{-\xi n},
\end{equation}
where we can set $\varepsilon'=\varepsilon_0+d$ and
\begin{equation}
    \xi\equiv\min\left(\frac{b-\varepsilon'-\lambda}{\Delta_{q_\star}}-\frac{\lambda}{\Delta_{q_\star}}\ln\frac{b-\varepsilon'}{\lambda},\frac{\nu_2}{n q_{\star}}\right)-2.
\end{equation}
Similarly, we can bound the first two terms in Eq.~\eqref{eq:psi0=3termsInfiniteLocQausi-app} by
\begin{equation}
    \left\|P_{w_0}\sum_{\omega\neq\omega_0 ,j}a_{\omega,j}\ket{\phi_{\omega,j}}\right\| \leq 2^{n/2}e^{-\xi n},
\end{equation}
and
\begin{equation}
    \|\ket{\psi_{\omega_0}}\|\geq \|P_{\omega_0}\ket{\psi_{\omega_0}}\|
    \geq 1-e^{-(\xi-\frac{1}{2}\ln 2)n}-\epsilon^{(1)}_{\lambda,\Delta_{q_\star}}(d)\equiv1-\epsilon_{\neq\omega_0}.
\end{equation}
Then we can bound the leakage of $\ket{\psi(t)}$ outside $w_0$ at any $t$ by 
\begin{equation}
   \| (1-P_{w_0})\ket{\psi(t)}\|\leq 2\sqrt{2\epsilon_{\neq\omega_0}} + \epsilon^{(1)}_{\lambda,\Delta_{q_\star}}(d).
\end{equation}
Similarly, we define $\xi_{\infty}\equiv(\xi-\frac{1}{2}\ln 2)/2$ and set $d=(b-\varepsilon_0)/2>\lambda$, and only need to ensure that $\xi_{\infty}>0$, where
\begin{equation}
\label{eq:xi-InfLoc-QuasiQ-app}
    \xi_\infty\equiv\min\left(\frac{b-\varepsilon_0-2\lambda}{4\Delta_{q_\star}}-\frac{\lambda}{2\Delta_{q_\star}}\ln\frac{b-\varepsilon_0}{2\lambda},\frac{\nu_2}{2n q_{\star}}\right)\\
    -1-\frac{1}{4} \ln 2.    
\end{equation}

One can verify that if
\begin{equation}
\label{eq:Lambda-InfLoc-Quasiq-app}
    \lambda<\frac{b-\varepsilon_0}{2e},~q_{\star}\leq\min\left(\frac{(b-\varepsilon_0)(1-2/e)}{4eM},\frac{\nu_2}{e n}\right),
\end{equation}
then we can ensure that $\xi_{\infty}>0$. 

Thus, if $\lambda$ and $q_{\star}$ satisfy the above condition, any initial eigenstate of $H_C$ localized in $w_0$, with energy $E_g+\varepsilon_0 n$ with $\varepsilon_0<b$, remains localized inside $w_0$ to infinite time. The leakage outside $w_0$ can be bounded by
\begin{equation}
    \| (1-P_{w_0})\ket{\psi(t)}\| \leq e^{-\xi_{\infty}n}.
\end{equation}

\appsection{Proof of the eigenstate localization}
\label{app:pEigenLoc}

In this section, we consider the following Hamiltonian (similarly defined in~\cite{yin2024eigenstate})
\begin{equation}
    H=H_C+V_0+H_{\textnormal{d}},
\end{equation}
where $H_C$ is a good cLDPC code and $\|V_0\|\sim \lambda n$ is an arbitrary (quasi-)$q$-local perturbation. $H_{\textnormal{d}}$ is a small diagonal detuning defined by
\begin{equation}
    H_{\textnormal{d}}\equiv \frac{1}{n^2}\sum_{j}h_j^z Z_j,
\end{equation}
where $h_j^z$ can be independently randomly chosen from the Gaussian distribution with mean 0 and variance 1. We note that $H_{\textnormal{d}}$ only contributes $O(\frac{1}{n})$ to the total variation of $H$ from $H_C$, so it is negligible when considering leakage bounds in the energy space. In addition, $H_{\textnormal{d}}$ will be infinitesimal in the thermodynamic limit. However, the polynomially small $H_{\textnormal{d}}$ is sufficient to lift the degeneracies in $H_C$ to gaps above an exponentially small threshold. We remark that the essential ingredient here is not the specific form of $H_{\textnormal{d}}$, but the fact that the degeneracies in cLDPC codes can be easily lifted by local detuning. In contrast, the ground state degeneracies of good qLDPC codes cannot be lifted (with maximum separating $\sim\lambda^{O(n)}$) by any local perturbations~\cite{de2025low,yin2025low}, similar to the property in topological order. We will see shortly that lifting the degeneracies is essential for eigenstate localization.

We are going to prove that the perturbed eigenstate $\ket{\psi}$ of $H$ with energy $E'\equiv\varepsilon' n+E_g$ (with $\varepsilon'<b$) is exponentially localized in one cluster $w_{\star}$. For strictly $q$-local $V_0$, we require that $\|V_0\|_{\textnormal{X}}\leq\lambda n$ and
\begin{equation}
\label{eq:lambdaBoundsEigenLocQ-app}
    \lambda<\frac{b-\varepsilon'}{e}\exp\left(-\frac{2\Delta_q}{b-\varepsilon'}\right).
\end{equation}
For quasi-$q_{\star}$-local $V_0$, we require that $\|V_0\|_{q_{\star}\textnormal{-X}}\leq\lambda n$ and
\begin{equation}
\label{eq:lambdaBoundsEigenLocQuasiQ-app}
    \lambda<\frac{b-\varepsilon'}{e},~q_{\star}\leq\min\left(\frac{(b-\varepsilon')(1-2/e)}{4M},\frac{\nu_2}{2 n}\right).
\end{equation}
The leakages under these conditions are determined by Eq.~\eqref{eq:EigenLocQLeak-app} and \eqref{eq:EigenLocQuasiQLeak-app}.

The eigenstate localization for strictly $q$-local perturbation in good cLDPC codes was already proved in~\cite{yin2024eigenstate}. Here, we follow a similar procedure to prove it, but use a slightly tighter bound obtained from our Theorem.~\ref{thm:static-Energy-loc}. We remark that the method used in~\cite{yin2024eigenstate} is hard to extend to quasi-$q$-local perturbations, whereas our Theorem~\ref{thm:static-Energy-loc} can allow us to bound the leakage for quasi-$q$-local perturbations as well.

\appsubsection{Eigenstate localization in cLDPC codes with strictly $q$-local perturbations}

For the perturbed eigenstate $\ket{\psi}$ with energy $E'\equiv\varepsilon' n+E_g$, its energy difference from the top of $W$ is $ (b-\varepsilon' )n$. Thus, we can choose the energy window to be $[E'-dn,E'+dn]$, with $d=b-\varepsilon'$. Note that here the $b$ and $\varepsilon'$ are defined in different eigenbases, i.e., the eigenbases of $H_C$ and $H$, respectively. From Theorem~\ref{thm:static-Energy-loc}, we know that the leakage outside all the clusters is bounded by (noting $P_>=1-P_W$)
\begin{equation}
\|P_{>}\ket{\psi}\|\leq\epsilon^{(2)}_{\lambda,\Delta_q}(b-\varepsilon'),
\end{equation}
which is exponentially small. Then we divide the perturbed eigenstate $\ket{\psi}$ in two parts
\begin{equation}
    \ket{\psi}=\ket{\eta}+\ket{\epsilon},
\end{equation}
where $\ket{\eta}\equiv P_W \ket{\psi}$ is in some clusters and $\ket{\epsilon}\equiv P_{>}\ket{\psi}$ is outside all clusters. Applying the projector $P_W$ to $H\ket{\psi}=E'\ket{\psi}$ we have 
\begin{equation}
    P_WHP_W\ket{\psi}+ P_WH(1-P_W)\ket{\psi}=E'\ket{\eta}. 
\end{equation}
Defining the truncated Hamiltonian as $H_W\equiv P_WHP_W$, we obtain 
\begin{equation}
 (H_W-E')\ket{\eta}=P_W V_0 \ket{\epsilon},
\end{equation} 
where we have used $P_WH_C(1-P_W)=0$ and $P_WH_{\mathrm{d}}(1-P_W)=0$. Thus, $\ket{\eta}$ and $E'$ are an approximate eigensolution of $H_W$, as $||\epsilon\rangle||$ is exponentially small. For simplicity, we denote the RHS norm after normalizing $\ket{\eta}$ as $\epsilon_{>}\equiv\frac{\|P_WV_0\ket{\epsilon}\|}{\sqrt{1-\|\ket{\epsilon}\|^2}}$. Using $\|V_0\|\leq \lambda n$, we can bound $\epsilon_{>}$ by
\begin{equation}
    \epsilon_{>}\leq 2\lambda n \epsilon^{(2)}_{\lambda,\Delta_q}(b-\varepsilon'),
\end{equation}
where we assume $\epsilon^{(2)}_{\lambda,\Delta_q}(b-\varepsilon')<3/4$ so that $\frac{1}{\sqrt{1-\|\ket{\epsilon}\|}}< 2$. 

Note that $H$ cannot directly connect different clusters $w$'s due to the large distances between them, so $P_w H P_{w'}=0$ for $w\neq w'$. Thus, $H_W$ is automatically block diagonalized, with each block corresponding to each cluster $w$. Therefore, the eigenstates of $H_W$ all reside within a single cluster $w$, and all $P_w$'s commute with $H_W$. We can denote $H_W$'s $m$-th eigenstate localized in cluster $w$ as $\ket{\phi_{w,m}}$ with energy $E_{w,m}$. To show that the eigenstates of the whole Hamiltonian $H$ also reside within a single cluster, we need an additional lemma.

\begin{lem}
\label{lem:1-P_V<e/d-e-app}
For any Hamiltonian $H$, if one can find an approximate eigensolution $(H-E')\ket{\Psi}=\ket{\epsilon}$, where $\ket{\Psi}$ is normalized and $\|\ket{\epsilon}\|=\epsilon<1$, then there is at least one exact energy $E_{\star}$ of $H$ satisfying 
\begin{equation}
    |E_{\star}-E'|\leq \epsilon.
\end{equation}
Let $\mathbb{V}$ be some subspace constructed from eigenstates of $H$ that contains $\ket{E_{\star}}$. For any $\ket{E_i}\in \mathbb{V}$ and $\ket{E_j}\notin \mathbb{V}$, we define $\delta_\mathbb{V}\equiv\min_{i,j}|E_j-E_i|$. If $\delta_\mathbb{V}>\epsilon$, then $\ket{\Psi}$ resides almost in $\mathbb{V}$, with leakage
\begin{equation}
    \|(1-P_\mathbb{V})\ket{\Psi}\|\leq \frac{\epsilon}{\delta_\mathbb{V}-\epsilon}.
\end{equation}
    
\end{lem}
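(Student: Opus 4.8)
The plan is to diagonalize $H$ and treat the lemma as a standard resolvent estimate. First I would expand the normalized $\ket{\Psi}=\sum_k c_k\ket{E_k}$ in an orthonormal eigenbasis $\{\ket{E_k}\}$ of $H$ with $H\ket{E_k}=E_k\ket{E_k}$. The hypothesis $(H-E')\ket{\Psi}=\ket{\epsilon}$ immediately gives $\sum_k|c_k|^2(E_k-E')^2=\|\ket{\epsilon}\|^2=\epsilon^2$, and since the left side is a convex combination of the quantities $(E_k-E')^2$ over $\{k:\,c_k\neq0\}$ with weights summing to $1$, it is bounded below by $\min_{k:\,c_k\neq0}(E_k-E')^2$. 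Hence there is an eigenvalue $E_\star$ of $H$ (in fact one in the support of $\ket{\Psi}$) with $|E_\star-E'|\le\epsilon$, which is the first assertion.

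For the leakage bound I would use that $\mathbb{V}$ is a span of eigenstates of $H$, so $P_\mathbb{V}$ commutes with $H$; applying $1-P_\mathbb{V}$ to the approximate eigenvalue equation yields $(H-E')(1-P_\mathbb{V})\ket{\Psi}=(1-P_\mathbb{V})\ket{\epsilon}$. The key step is to control $(H-E')^{-1}$ on the range of $1-P_\mathbb{V}$: for every $\ket{E_j}\notin\mathbb{V}$, combining $|E_\star-E'|\le\epsilon$ with $E_\star\in\mathbb{V}$ and the gap hypothesis $|E_j-E_\star|\ge\delta_\mathbb{V}$ gives $|E_j-E'|\ge\delta_\mathbb{V}-\epsilon>0$ by the triangle inequality. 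Therefore the restriction of $(H-E')^{-1}$ to $\mathrm{ran}(1-P_\mathbb{V})$ has operator norm at most $(\delta_\mathbb{V}-\epsilon)^{-1}$, and
\[
\|(1-P_\mathbb{V})\ket{\Psi}\|\;\le\;\frac{1}{\delta_\mathbb{V}-\epsilon}\,\big\|(1-P_\mathbb{V})\ket{\epsilon}\big\|\;\le\;\frac{\epsilon}{\delta_\mathbb{V}-\epsilon},
\]
which is exactly the claimed estimate.

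The argument is entirely spectral, so I do not anticipate a real obstacle; the only subtleties worth flagging are that the averaging step in the first part must be restricted to the support of $\{c_k\}$ so the minimum is attained at an actual eigenvalue, and that the hypothesis $\delta_\mathbb{V}>\epsilon$ tacitly forces $\mathbb{V}$ to be a union of entire eigenspaces of $H$ (otherwise a degenerate level shared between $\mathbb{V}$ and its complement would make $\delta_\mathbb{V}=0$), which is precisely what makes $P_\mathbb{V}$ commute with $H$ and the resolvent bound legitimate. As a sanity check I would also verify the two-level example $H=\mathrm{diag}(0,\delta_\mathbb{V})$ with $E'=0$, which saturates the bound and shows it cannot be improved in general.
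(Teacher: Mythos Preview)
Your proposal is correct and follows essentially the same approach as the paper: both expand in the eigenbasis, use the convex-combination argument to find $E_\star$ with $|E_\star-E'|\le\epsilon$, apply the triangle inequality to get $|E_j-E'|\ge\delta_\mathbb{V}-\epsilon$ for $\ket{E_j}\notin\mathbb{V}$, and then bound the out-of-$\mathbb{V}$ weight. The only cosmetic difference is that you phrase the last step as a resolvent bound on $\mathrm{ran}(1-P_\mathbb{V})$, whereas the paper works component-wise and drops the nonnegative ``in'' terms from $\epsilon^2=\sum_j|a_j|^2(E_j-E')^2$.
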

We show the proof in Appendix.~\ref{app:proofB1}. Thus, there is at least one $\ket{\phi_{w_\star,j_\star}}$ whose energy is close to $E'$. By choosing $\mathbb{V}=w_\star$, we can obtain 
\begin{equation}
    \|(1-P_{w_{\star}})\ket{\eta}\|\leq \frac{\epsilon_{>}}{\delta_{w_\star}-\epsilon_{>}}\|\ket{\eta}\|,
\end{equation}
where $\delta_{w_\star}$ is the minimum spectral gap between states in cluster $w_{\star}$ and other clusters. In general, we can define
\begin{equation}
    \delta_W\equiv \min_{w_1\neq w_2} |E_{w_1,m_1}-E_{w_2,m_2}|.
\end{equation}
Since $\epsilon_{>}$ is exponentially small, as long as $\delta_W$ decays slower than $\epsilon_{>}$, the leakage of $\ket{\eta}$ outside the $w_{\star}$ can still be exponentially small. Therefore, we need the decaying speed of the gap between different clusters to be lower bounded, which can be stated as the following lemma:

\begin{lem}[Limited level attraction between the clusters~\cite{yin2024eigenstate}]
\label{lem:LLAbetween-app}
If the small diagonal detuning $H_{\textnormal{d}}$ is added, for any $0<\xi_2<\xi_1-2\ln 2$, we have
\begin{equation}
    \mathbf{P}\left(\delta_W < e^{-\xi_1 n}\right)<e^{- \xi_2 n}.
\end{equation}
\end{lem}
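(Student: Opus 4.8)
The plan is to bound, for each \emph{single} ordered pair of cross-cluster eigenvalues of the block Hamiltonian $H_W=P_WHP_W$, the probability over the i.i.d.\ Gaussian couplings $h^z_j$ in $H_{\textnormal{d}}=\frac{1}{n^2}\sum_j h^z_j Z_j$ that the two eigenvalues lie within $e^{-\xi_1 n}$ of one another, and then take a union bound over all such pairs. For a cLDPC code the distinct clusters are disjoint sets of $Z$-basis states (two codewords are at Hamming distance $\ge d_C>2\nu_1$, so no basis state is within $\nu_1$ of two codewords), hence $\sum_w\dim w\le 2^n$ and there are at most $\big(\sum_w\dim w\big)^2\le 2^{2n}=e^{(2\ln 2)n}$ ordered cross-cluster tuples $(w_1,m_1),(w_2,m_2)$. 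This counting is exactly the origin of the constant $2\ln 2$ in the hypothesis $\xi_2<\xi_1-2\ln 2$.

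For a fixed such pair the idea is to use the Gaussian randomness along one carefully chosen direction. Let $\hat u$ be the unit vector in coupling space proportional to the difference of the $\pm1$ sign strings of the two codewords $\mathrm{w}_1,\mathrm{w}_2$: it is supported exactly on the $D\equiv\mathbf{D}(\ket{\mathrm{w}_1},\ket{\mathrm{w}_2})\ge d_C$ coordinates where the codewords disagree, with entries $\pm 1/\sqrt D$, and it depends only on the pair, not on $\vec h=(h^z_j)_j$. Conditioning on the component of $\vec h$ orthogonal to $\hat u$, the scalar $s\equiv\langle\hat u,\vec h\rangle$ is distributed as $\mathcal{N}(0,1)$. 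The structural input is that the block decomposition $P_wHP_{w'}=0$ for $w\neq w'$ follows from the code distance and is insensitive to $H_{\textnormal{d}}$, so for every $s$ each eigenvector of the block $P_{w_i}HP_{w_i}$ remains a superposition of $Z$-basis states within Hamming distance $\nu_1$ of $\mathrm{w}_i$; writing $\langle Z_j\rangle_{w_i,m_i}=(-1)^{(\mathrm{w}_i)_j}(1-2p_{i,j})$ with $p_{i,j}\ge0$ one gets $\sum_j p_{i,j}=\sum_{\mathbf z}|\alpha_{\mathbf z}|^2\,\mathbf{D}(\ket{\mathbf z},\ket{\mathrm{w}_i})\le\nu_1$. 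By Hellmann--Feynman the derivative of the eigenvalue difference along $s$ equals $\frac{1}{n^2}\sum_j\hat u_j(\langle Z_j\rangle_{w_1,m_1}-\langle Z_j\rangle_{w_2,m_2})$; the choice $\hat u\propto(\text{sign string of }\mathrm{w}_1)-(\text{sign string of }\mathrm{w}_2)$ makes every summand on $\mathrm{supp}(\hat u)$ equal to $\frac{1}{\sqrt D}(2-2p_{1,j}-2p_{2,j})$, so summing and using $\sum_j p_{i,j}\le\nu_1$ gives the uniform lower bound $\rho\equiv\frac{2(d_C-2\nu_1)}{n^2\sqrt{d_C}}=\Theta(n^{-3/2})>0$ (using $d_C>2\nu_1$, $D\ge d_C$, and monotonicity of $x\mapsto(x-2\nu_1)/\sqrt x$). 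Hence along the line the eigenvalue difference is continuous with slope $\ge\rho$ almost everywhere, so it lies in $(-e^{-\xi_1 n},e^{-\xi_1 n})$ only for $s$ in an interval of length $\le 2e^{-\xi_1 n}/\rho$; since $s\sim\mathcal{N}(0,1)$ is independent of this (randomly located) interval and has density $\le 1/\sqrt{2\pi}$, we get $\mathbf{P}(|E_{w_1,m_1}-E_{w_2,m_2}|<e^{-\xi_1 n})\le 2e^{-\xi_1 n}/(\rho\sqrt{2\pi})=\mathrm{poly}(n)\,e^{-\xi_1 n}$.

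Combining the two pieces, $\mathbf{P}(\delta_W<e^{-\xi_1 n})\le 2^{2n}\cdot\mathrm{poly}(n)\,e^{-\xi_1 n}=\mathrm{poly}(n)\,e^{-(\xi_1-2\ln 2)n}$, which is below $e^{-\xi_2 n}$ for all sufficiently large $n$ whenever $\xi_2<\xi_1-2\ln 2$ (the polynomial factor is absorbed by choosing an intermediate rate in $(\xi_2,\xi_1-2\ln 2)$). For a quasi-$q_\star$-local $V_0$ the same scheme applies, treating the exponentially small off-block remainder $V_R$ (with $\|V_R\|\le\lambda n\,e^{-\nu_2/q_\star}$) as a perturbation that moves every eigenvalue by $O(e^{-\nu_2/q_\star})$ and only shifts the effective $\xi_1$ by a constant.

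The hard part will be making the derivative estimate rigorous when eigenvalues collide inside a single block: there the index $m_i$ no longer labels a smooth branch, so Hellmann--Feynman must be invoked for the continuous, piecewise-analytic ordered eigenvalue functions and the slope bound argued at almost every $s$. This should go through because the bound on $\partial_s$ only used that the relevant eigenvector lives in cluster $w_i$, a property preserved under relabeling, but it needs to be argued rather than asserted. A secondary bookkeeping point is verifying $d_C>2\nu_1$ in the parameter window of interest, i.e.\ taking the soundness radius $\gamma$ (so $\nu_1=2\gamma n$) small enough relative to $d_C/n$, which is consistent with the clustering regime $\nu_2>2\nu_1$ used elsewhere in the paper.
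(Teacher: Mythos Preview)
Your proposal is correct and follows essentially the same approach as the paper: use Hellmann--Feynman along a well-chosen direction in the Gaussian coupling space to lower-bound the slope of any cross-cluster eigenvalue difference by $\Omega(n^{-5/2})$, convert this into an interval bound for the standard Gaussian along that direction, and then union-bound over the at most $4^n$ pairs. The only cosmetic difference is your choice of direction $\hat u$ along the \emph{difference} of the two codeword sign strings, whereas the paper rotates to variables $h^w_j$ indexed by a single reference codeword and differentiates in $h^w_1$; both choices yield the same slope scale and the same final estimate (and the paper, like you, does not dwell on the within-block eigenvalue-crossing technicality).
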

We show the proof in Appendix.~\ref{app:proofB2}. We can choose $\xi_1=2$ here for convenience. Thus, $\delta_W$ is at least $e^{-2 n}$ with high probability $1-e^{-2(1-\ln 2)n}$. 

Thus, with high probability $1-e^{-2(1-\ln 2)n}$, the leakage outside $w_\star$ is bounded by 
\begin{equation}
\label{eq:EigenLocQLeak-app}
    \|(1-P_{w_{\star}})\ket{\psi}\|\leq \exp\left(-\left(\xi-o(1)\right) n\right),
\end{equation}
where 
\begin{equation}
\label{eq:xi-eigenLoc-q-app}
    \xi\equiv\frac{b-\varepsilon'}{\Delta_q}\ln\frac{b-\varepsilon'}{\lambda}-\frac{b-\varepsilon'-\lambda}{\Delta_q}-2.
\end{equation}
We can verify that when 
\begin{equation}
\label{eq:lambdaForEigen-app}
    \lambda<\frac{b-\varepsilon'}{e}e^{-\frac{2\Delta_q}{b-\varepsilon'}}\equiv\lambda_{\textnormal{Eigen}},
\end{equation}
we can ensure that $\xi>0$.

\appsubsection{Eigenstate localization in cLDPC codes with quasi-$q$-local perturbations}

For a quasi-$q_{\star}$-local perturbation $V_0$, we can follow similar procedures. We consider an eigenstate $\ket{\psi}$ of 
\begin{equation}
    H=H_C+V_0+H_{\textnormal{d}},
\end{equation}
with energy $E'=\varepsilon'n+E_g$. From Theorem~\ref{thm:static-Energy-loc}, we know that the leakage outside all the clusters is bounded by
\begin{equation}
\|P_{>}\ket{\psi}\|\leq\epsilon^{(1)}_{\lambda,\Delta_{q_{\star}}}(b-\varepsilon'),
\end{equation}
which is exponentially small. Then we divide the quasi-$q_{\star}$-local $V_0$ to the local part $V_L$ and the remainder $V_R$, where $V_L$ includes all terms in $V_0$ with $k$-locality satisfying $k<\nu_2$, and $V_R\equiv V_0-V_L$. The local part $V_L$ satisfies
\begin{equation}
    P_{w'}V_L(t)P_w=0, ~\text{for }~w\neq w',
\end{equation}
and the remainder $V_R$ satisfies
\begin{equation}
    \|V_R\|\leq \lambda n e^{-\nu_2/q_{\star}}.
\end{equation}

Similarly, we can define the truncated Hamiltonian $H_W$ as 
\begin{equation}
    H_W\equiv P_W (H_C +V_L+ H_{\textnormal{d}})P_W,
\end{equation}
and obtain
\begin{equation}
    (H_W-E')\ket{\eta}=P_W V_0 \ket{\epsilon}-P_W V_RP_W\ket{\psi},
\end{equation}
where $\ket{\eta}\equiv P_W \ket{\psi}$ and $\ket{\epsilon}\equiv P_{>}\ket{\psi}$. We define 
\begin{equation}
    \epsilon_{>}\equiv \frac{\|P_W V_0 \ket{\epsilon}-P_W V_RP_W\ket{\psi}\|}{\|\ket{\eta}\|},
\end{equation} 
As the norm of $\ket{\epsilon}$ is exponentially small, we assume it is less than $3/4$, so that $\|\ket{\eta}\|\geq 1/2$. Then we can bound $\epsilon_{>}$ by
\begin{equation}
    \epsilon_{>}\leq 2\lambda n \left(\epsilon^{(1)}_{\lambda,\Delta_{q_{\star}}}(b-\varepsilon')+e^{-\nu_2/q_{\star}}\right).
\end{equation}

Next, we want to use Lemma~\ref{lem:1-P_V<e/d-e-app} to bound the leakage outside a cluster $w$ of $\ket{\psi}$. It is crucial to check whether the eigenstates of the truncated Hamiltonian $H_W$ remain localized in each cluster. Although $V_L$ now includes larger-$k$-local terms, $k$ in $V_L$ is strictly less than $\nu_2$ by definition. Therefore, $H_W$ still contains no interaction between different clusters. Each eigenstate of $H_W$ remains within one single cluster $w$, and all $P_w$'s commute with $H_W$. Thus, the Lemma~\ref{lem:1-P_V<e/d-e-app} still works here. 

The next task is to bound $\delta_W$, which can be defined similarly
\begin{equation}
    \delta_W\equiv \min_{w_1\neq w_2} |E_{w_1,m_1}-E_{w_2,m_2}|.
\end{equation}
One can verify that the Lemma~\ref{lem:LLAbetween-app} also works as long as the eigenstates of $H_W$ remain clustered in each $w$. Thus, We can choose $\xi_1=2$ here for convenience. Thus, $\delta_W$ is at least $e^{-2 n}$ with high probability $1-e^{-2(1-\ln 2)n}$. 
Therefore, with high probability $1-e^{-2(1-\ln 2)n}$, the leakage outside $w_\star$ is bounded by
\begin{equation}
\label{eq:EigenLocQuasiQLeak-app}
    \|(1-P_{w_{\star}})\ket{\psi}\|\leq \exp\left(-\xi n\right),
\end{equation}
where
\begin{equation}
\label{eq:xi-eigenLoc-Quasiq-app}
    \xi\equiv\min\left(\frac{b-\varepsilon'-\lambda}{\Delta_{q_\star}}-\frac{\lambda}{\Delta_{q_\star}}\ln\frac{b-\varepsilon'}{\lambda},\frac{\nu_2}{n q_{\star}}\right)-2.
\end{equation}
We can choose $\lambda<\frac{b-\varepsilon'}{e}$ and adjust $q_{\star}$ to ensure that both terms are positive. This gives an upper bound for $q_{\star}$
\begin{equation}
\label{eq:Lambda-eigenLoc-Quasiq-app}
   \lambda<\frac{b-\varepsilon'}{e}, \quad q_{\star}\leq\min\left(\frac{(b-\varepsilon')(1-2/e)}{4M},\frac{\nu_2}{2 n}\right).
\end{equation}

\appsubsection{Discussions on Eigenstate localization in qLDPC codes}
To investigate the eigenstate localization in qLDPC codes, we can still use Theorem~\ref{thm:static-Energy-loc} to bound the $\|P_{>}\ket{\psi}\|$ exponentially. 

However, an issue arises when we want to apply Lemma~\ref{lem:1-P_V<e/d-e-app} to bound the $\|(1-P_w)\ket{\psi}\|$, where we need to lower bound $\delta_W$ and ensure that it is much larger than the $\|P_{>}\ket{\psi}\|$. For strictly $q$-local $V_0$, we can upper bound $\|P_{>}\ket{\psi}\|$ by $\epsilon^{(2)}_{\lambda,\Delta_q}(b-\varepsilon')$. On the other hand, it is recently proved~\cite{yin2025low,de2025low} that the ground state degeneracy of good qLDPC codes can only be lifted with maximum energy difference bounded by $\lambda e^{-\Theta(n)}$~\cite{yin2025low}, which has similar scaling with $\epsilon^{(2)}_{\lambda,\Delta_q}$. In addition, this only gives an upper bound of $\delta_W$ but what we need is its lower bound. Since the upper bound is already exponentially small, it is unlikely to rigorously obtain a lower bound that is much larger than $\epsilon^{(2)}_{\lambda,\Delta_q}$.

While the ground state degeneracy cannot be lifted by local operators, we can consider detuning $H_{\mathrm{d}}$ including terms like
\begin{equation}
    H_{\mathrm{d}}=\frac{1}{n^2}\sum_{j}h_j\bar{Z}_j,
\end{equation}
where $\bar{Z}_j$ is the logical $Z$ operator for $j$-th logical qubit. Although such detuning term requires $O(n)$-body interaction and is unrealistic, we can consider it as a spontaneous symmetry breaking: $H_{\textnormal{d}}$ will be infinitesimal in the thermodynamic limit but is sufficient to lift all the degeneracies. Thus, although the eigenstate localization for qLDPC may not exist in finite systems, its dynamical stability can be understood from the SSB perspective.

\appsubsection{Proof of Lemma \thesection.1}
\label{app:proofB1}

\smallskip\noindent{\bf Lemma \thesection.1} For any Hamiltonian $H$, if one can find an approximate eigensolution $(H-E')\ket{\Psi}=\ket{\epsilon}$, where $\ket{\Psi}$ is normalized and $\|\ket{\epsilon}\|=\epsilon<1$, then there is at least one exact energy $E_{\star}$ of $H$ satisfying 
\begin{equation}
    |E_{\star}-E'|\leq \epsilon.
\end{equation}
Let $\mathbb{V}$ be some subspace constructed from eigenstates of $H$ that contains $\ket{E_{\star}}$. Suppose that for any $\ket{E_i}\in \mathbb{V}$ and $\ket{E_j}\notin \mathbb{V}$, we define $\delta_\mathbb{V}\equiv\min_{i,j}|E_j-E_i|$. If $\delta_\mathbb{V}>\epsilon$, then $\ket{\Psi}$ resides almost in $\mathbb{V}$, with leakage
\begin{equation}
    \|(1-P_\mathbb{V})\ket{\Psi}\|\leq \frac{\epsilon}{\delta_\mathbb{V}-\epsilon}.
\end{equation}

\begin{proof}
We can expand $\ket{\Psi}$ in the eigenbasis of $H$,
\begin{equation}
    \ket{\Psi}=\sum_ja_j\ket{E_j}.
\end{equation}
From $(H-E')\ket{\Psi}=\ket{\epsilon}$, we can obtain
\begin{equation}
    \ket{\epsilon}=\sum_ja_j(E_j-E')\ket{E_j}
\end{equation}
Taking the norm square of both sides, we obtain
\begin{equation}
\label{eq:epsilon=sum_jaj-app}
    \epsilon^2=\sum_j |a_j|^2(E_j-E')^2
\end{equation}
Since $\sum_j|a_j|^2=1$, there is at least one $E_\star$ in all $E_j$'s satisfying 
\begin{equation}
    |E_\star-E'|\leq \epsilon,
\end{equation}
otherwise the RHS of Eq.~\eqref{eq:epsilon=sum_jaj-app} is always greater than $\epsilon^2$. 

Then we can choose a subspace $\mathbb{V}\ni \ket{E_\star}  $ and rewrite the expansion for $\ket{\Psi}$
\begin{equation}
    \ket{\Psi}=\sum_{j} a_j^{\text{in}}\ket{E^{\text{in}}_j}+\sum_ja_j^{\text{out}}\ket{E^{\text{out}}_j}
\end{equation}
where $\ket{E^{\text{in}}_j}\in \mathbb{V}$ and $\ket{E^{\text{out}}_j}\notin \mathbb{V}$. Using $(H-E')\ket{\Psi}=\ket{\epsilon}$, we obtain
\begin{equation}
    \ket{\epsilon}=\sum_{j} a_j^{\text{in}}(E^{\text{in}}_j-E')\ket{E^{\text{in}}_j}
    +\sum_ja_j^{\text{out}}(E^{\text{out}}_j-E')\ket{E^{\text{out}}_j}.    
\end{equation}
By definition, we have $|E^{\text{out}}_j-E_\star|\geq \delta_\mathbb{V}$. With $|E_\star-E'|\leq\epsilon$, we can get $|E^{\text{out}}_j-E'|\geq \delta_\mathbb{V}-\epsilon$.
Thus, we obtain
\begin{equation}
\epsilon^2\geq\sum_{j}|a_j^{\text{out}}|^2(E^{\text{out}}_j-E')^2\geq(\delta_{\mathbb{V}}-\epsilon)^2 \sum_{j}|a_j^{\text{out}}|^2.
\end{equation}
Therefore, the leakage can be bounded by 
\begin{equation}
    \|(1-P_\mathbb{V})\ket{\Psi}\|=\sqrt{\sum_{j}|a_j^{\text{out}}|^2}\leq \frac{\epsilon}{\delta_\mathbb{V}-\epsilon}.
\end{equation}
\end{proof}

\appsubsection{Proof of Lemma \thesection.2}
\label{app:proofB2}
\smallskip\noindent{\bf Lemma \thesection.2}. If the small diagonal detuning $H_\mathrm{d}$ is added
\begin{equation}
    H_{\textnormal{d}}\equiv \frac{1}{n^2}\sum_{j}h_j^z Z_j,
\end{equation}
for any $0<\xi_2<\xi_1-\ln 4$, we have
\begin{equation}
    \mathbf{P}\left(\delta_W < e^{-\xi_1 n}\right)<e^{- \xi_2 n}.
\end{equation}
Here, $\delta_W$ is defined in the eigenbasis of 
\begin{equation}
    H_W\equiv P_W(H_C+V_0+H_{\textnormal{d}})P_W,
\end{equation}
whose eigenstates are all exactly localized in a specific cluster $w$ and can be labeled as $\ket{\phi_{w,m}}$ with energy $E_{w,m}$. Then we can define
\begin{equation}
    \delta_W \equiv \min_{w_1\neq w_2} |E_{w_1,m_1}-E_{w_2,m_2}|.
\end{equation}

\begin{proof}
For a specific cluster $w$ with codeword $\ket{\textnormal{w}}$, we can write it as a bitstring $\textnormal{w}\equiv \{z^w_1, z^w_2...z^w_n\}$, where each $z^w_j$ is in $\{0,1\}$. Consider the following variable
\begin{equation}
    h^w_j\equiv\frac{1}{\sqrt{n}}\sum_{k=1}^n(-1)^{z^w_k}e^{-i\frac{2\pi}{n}jk}h^z_k.
\end{equation}
One can find an orthogonal transformation to connect the random variables $h^z_i$ to $h^w_i$, so that all $h^w_i$'s also follow independent Gaussian distributions with zero mean and unit variance. The $H_{\textnormal{d}}$ become
\begin{equation}
    H_{\textnormal{d}}=\frac{1}{n^{5/2}}\sum_{j=1}^{n} h^w_j \sum_{k=1}^n e^{i\frac{2\pi}{n}jk} (-1)^{z^w_k} Z_k.
\end{equation}
We can regard $H_W=H_C+V_0+H_{\textnormal{d}}$ as a function of all $\{h^w_j\}$. Using the Hellmann-Feynman theorem, for arbitrary energy $E_{w,m}$ of $H_W$, we have 
\begin{equation}
    \frac{\partial E_{w,m}}{\partial h^w_j}=\Braket{\phi_{w,m}|\frac{\partial H_W}{\partial h^w_j}|\phi_{w,m}}.
\end{equation}
Take $j=1$ as an example, we obtain
\begin{equation}
    \frac{\partial E_{w,m}}{\partial h^w_1}=\frac{1}{n^{5/2}}\Braket{\phi_{w,m}|\sum_{k=1}^n (-1)^{z^w_k}Z_k|\phi_{w,m}}.
\end{equation}
Note that for a $Z$-basis state $\ket{z}$,
\begin{equation}
    \Braket{z|\sum_{k=1}^n (-1)^{z^w_k}Z_k|z}=n-2\cdot\mathbf{D} (\textnormal{w},z),
\end{equation}
where $\mathbf{D}$ here is reduced to the Hamming distance. Since all $Z$-basis states inside $w$ is at most $\gamma n$ far from $\textnormal{w}$, we obtain
\begin{equation}
    \frac{\partial E_{w,m}}{\partial h^w_1}\geq\frac{n-2\gamma n}{n^{5/2}}.
\end{equation}
Similarly, for an arbitrary $w'\neq w$, all $Z$-basis states inside $w'$ is at least $d_C-\gamma n$ far from $\textnormal{w}$, we have
\begin{equation}
\begin{aligned}
    \frac{\partial E_{w',m'}}{\partial h^w_1}&=\frac{1}{n^{5/2}}\braket{\phi_{w',m'}|\sum_{k=1}^n (-1)^{z^w_k}Z_k|\phi_{w',m'}}\\
    &\leq\frac{n-2(d_C-\gamma n)}{n^{5/2}}.
\end{aligned}
\end{equation}
Therefore, we obtain
\begin{equation}
    \frac{\partial}{\partial h^w_1}(E_{w,m}-E_{w',m'})\geq \frac{2(d_C-2\gamma n)}{n^{5/2}}=\frac{2\nu_2}{n^{5/2}}.
\end{equation}
Now we can fix all the other $h^w_j$ with $j\neq 1$ and regard $E_{w,m}-E_{w',m'}$ as a function of $h^w_1$, which increases monotonically. Thus, the interval length of $h^w_1$ for $|E_{w,j}-E_{w',j'}|<e^{-\xi_1 n}$ is at most $\frac{n^{5/2}}{\nu_2}e^{-\xi_1 n}$. Taking into account all possible pairs for $E_{w,m}$ and $E_{w',m'}$, the total interval length is at most $4^n\frac{n^{5/2}}{\nu_2}e^{-\xi_1 n}$. Since $h^w_1$ is chosen from a Gaussian with zero mean and unit variance, we have
\begin{equation}
    \mathbf{P}(h^w_1)=\frac{1}{\sqrt{2\pi}}\exp\left(-\frac{(h^w_1)^2}{2}\right).
\end{equation}
Then the probability that $h_1^w$ is within those intervals is bounded by 
\begin{equation}
    \mathbf{P}\left(\delta_W < e^{-\xi_1 n}\right)\leq \frac{n^{5/2}}{\nu_2\sqrt{2\pi}}e^{-(\xi_1-\ln 4) n}<e^{- \xi_2 n},
\end{equation}
where we can choose $\xi_2$ in the range $0<\xi_2<\xi_1-\ln 4$.

\end{proof}

\appsection{Proof of the slow mixing of Gibbs samplers}
\label{app:pGibbs}

In this section, we investigate the efficiency of reaching thermal equilibrium of the c/qLDPC codes via a local Gibbs sampler, especially in low-temperature regions. A general Gibbs sampler can be any quantum channel $\mathcal{M}_H$ for a Hamiltonian $H$ whose steady state is $\rho_0=\frac{1}{\mathcal{Z}}e^{-\beta H}$, where $\mathcal{Z}\equiv\operatorname{tr} e^{-\beta H}$ is the partition function. In principle, by applying $\mathcal{M}$ on an arbitrary initial state $t$ times, and if $t$ is sufficiently large, the resultant density matrix can be arbitrarily close to the steady state of $\mathcal{M}$, which is the Gibbs state of $H$ by definition. The minimal $t$ required to reach sufficiently close to the steady state is called the mixing time.

We can always write $\mathcal{M}$ in Kraus operators
\begin{equation}
    \mathcal{M}[\rho]=\sum_iK_i \rho K_i^\dagger.
\end{equation}
The locality constraint on $\mathcal{M}$ requires that all $ K_i$'s are strictly $k$-local where $k$ is an $O(1)$ constant. Although we mainly focus on strictly $k$-local $\mathcal{M}$ here, the result can easily be extended to quasi-$k$-local channels~\cite{rakovszky2024bottlenecks}.

We are going to prove that the Gibbs samplers for c/qLDPC codes are generally slow, even if (quasi-)$q$-local perturbations are added to $H_C$. In particular, we set $\mathcal{M}$ as a local Gibbs sampler for
\begin{equation}
    H=H_C+V_0.
\end{equation}
We will show shortly that if an initial state is prepared in one cluster $w$, it takes an exponentially long time to converge to the steady state, which spreads in all clusters. The proof relies on the following quantum bottleneck theorem proved in~\cite{rakovszky2024bottlenecks}.

\begin{lem}[Quantum bottleneck theorem~\cite{rakovszky2024bottlenecks}]
For a quantum channel $\mathcal{M}[\rho]=\sum_iK_i \rho K_i^\dagger$ whose steady state is $\rho_0$, suppose we can divide the Hilbert space into $\mathbb{H}=\mathbb{A}\oplus\mathbb{B}_1 \oplus\mathbb{B}_2\oplus\mathbb{C}$, with corresponding projectors $P_{\mathbb{A},\mathbb{B}_1,\mathbb{B}_2,\mathbb{C}}$, and $\mathcal{M}$ satisfies
\begin{equation}
\left(P_{\mathbb{B}_2}+P_{\mathbb{C}}\right) K_i P_{\mathbb{A}}=\left(P_{\mathbb{A}}+P_{\mathbb{B}_1}\right) K_i P_{\mathbb{C}}=0 .
\end{equation}
Then we can find an approximate steady state $\rho_{\mathbb{A}}\equiv\frac{P_{\mathbb{A}}\rho_0 P_{\mathbb{A}}}{\operatorname{tr}\left(P_{\mathbb{A}} \rho_0\right)  }$ with error bounded by
\begin{equation}
\left\|\mathcal{M}\left[\rho_{\mathbb{A}}\right]-\rho_{\mathbb{A}}\right\|_1 \leq 10 \frac{\sqrt{\operatorname{tr}(P_{\mathbb{B}} \rho_0)}}{\operatorname{tr}\left(P_{\mathbb{A}} \rho_0\right)}\equiv \epsilon_{\mathcal{M}}(\mathbb{A}),
\end{equation}
where $P_{\mathbb{B}}=P_{\mathbb{B}_1}+P_{\mathbb{B}_2}$.
\end{lem}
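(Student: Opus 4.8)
\emph{Proof plan.} The plan is to reduce the statement to two elementary facts together with one structural observation. The elementary facts are: (i) a quantum channel $\mathcal{M}$ and a projection $X\mapsto\Pi X\Pi$ are both contractive in trace norm, $\|\mathcal{M}[X]\|_1\le\|X\|_1$ and $\|\Pi X\Pi\|_1\le\|X\|_1$; and (ii) for mutually orthogonal projectors $P,P'$ and $\rho_0\ge0$, the off-diagonal block obeys $\|P\rho_0 P'\|_1\le\sqrt{\operatorname{tr}(P\rho_0)\,\operatorname{tr}(P'\rho_0)}$, which follows from Cauchy--Schwarz applied to $P\rho_0 P'=(P\rho_0^{1/2})(\rho_0^{1/2}P')$ in the Hilbert--Schmidt pairing. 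Throughout I write $\tilde\rho\equiv P_{\mathbb{A}}\rho_0 P_{\mathbb{A}}$, $p_{\mathbb{A}}\equiv\operatorname{tr}(P_{\mathbb{A}}\rho_0)$ so that $\rho_{\mathbb{A}}=\tilde\rho/p_{\mathbb{A}}$, and put $\Pi\equiv P_{\mathbb{A}}+P_{\mathbb{B}_1}$ and $Q\equiv P_{\mathbb{B}}$.

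The structural observation is that the bottleneck conditions translate into support constraints on the Kraus maps: since $\mathbb{1}=\Pi+(P_{\mathbb{B}_2}+P_{\mathbb{C}})$ and $(P_{\mathbb{B}_2}+P_{\mathbb{C}})K_iP_{\mathbb{A}}=0$, one has $K_iP_{\mathbb{A}}=\Pi K_iP_{\mathbb{A}}$, hence $\mathcal{M}[\tilde\rho]$ is supported entirely on $\operatorname{ran}\Pi=\mathbb{A}\oplus\mathbb{B}_1$; symmetrically $(P_{\mathbb{A}}+P_{\mathbb{B}_1})K_iP_{\mathbb{C}}=0$ makes $\mathcal{M}[P_{\mathbb{C}}\rho_0 P_{\mathbb{C}}]$ supported on $\mathbb{B}_2\oplus\mathbb{C}$ and $\mathcal{M}[P_{\mathbb{A}}\rho_0 P_{\mathbb{C}}]$ supported on rows in $\mathbb{A}\oplus\mathbb{B}_1$ and columns in $\mathbb{B}_2\oplus\mathbb{C}$. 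This is the heart of the argument: weight sitting purely in $\mathbb{A}$ cannot leak past $\mathbb{B}_1$ in one step of $\mathcal{M}$, which is exactly what lets the far-side weight $\operatorname{tr}(P_{\mathbb{C}}\rho_0)$---possibly of order one---drop out of the final bound.

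Next I would use stationarity backwards. From $\rho_0=\mathcal{M}[\rho_0]=\mathcal{M}[\tilde\rho]+\mathcal{M}[R]$ with $R\equiv\rho_0-\tilde\rho$, and using $\mathcal{M}[\tilde\rho]=\Pi\mathcal{M}[\tilde\rho]\Pi$, one gets $\mathcal{M}[\tilde\rho]=\Pi\rho_0\Pi-\Pi\mathcal{M}[R]\Pi$. I would then decompose $R=\sum_{(X,Y)\ne(\mathbb{A},\mathbb{A})}P_X\rho_0 P_Y$ over the four blocks and discard the pieces annihilated by $\Pi\mathcal{M}[\cdot]\Pi$---precisely $P_{\mathbb{C}}\rho_0 P_{\mathbb{C}}$, $P_{\mathbb{A}}\rho_0 P_{\mathbb{C}}$ and $P_{\mathbb{C}}\rho_0 P_{\mathbb{A}}$, by the support statements above---leaving only the twelve blocks carrying a factor $P_{\mathbb{B}_1}$ or $P_{\mathbb{B}_2}$, whose sum $R'$ can be written compactly as $R'=Q\rho_0 Q+Q\rho_0(P_{\mathbb{A}}+P_{\mathbb{C}})+(P_{\mathbb{A}}+P_{\mathbb{C}})\rho_0 Q$.

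The remaining step is bookkeeping with (i)--(ii): $\|R'\|_1\le\operatorname{tr}(Q\rho_0)+2\sqrt{\operatorname{tr}(Q\rho_0)}\le 3\sqrt{\operatorname{tr}(P_{\mathbb{B}}\rho_0)}$, hence $\|\Pi\mathcal{M}[R]\Pi\|_1=\|\Pi\mathcal{M}[R']\Pi\|_1\le\|R'\|_1\le 3\sqrt{\operatorname{tr}(P_{\mathbb{B}}\rho_0)}$, and $\|\Pi\rho_0\Pi-\tilde\rho\|_1=\|P_{\mathbb{B}_1}\rho_0 P_{\mathbb{B}_1}+P_{\mathbb{A}}\rho_0 P_{\mathbb{B}_1}+P_{\mathbb{B}_1}\rho_0 P_{\mathbb{A}}\|_1\le 3\sqrt{\operatorname{tr}(P_{\mathbb{B}}\rho_0)}$; adding gives $\|\mathcal{M}[\tilde\rho]-\tilde\rho\|_1\le 6\sqrt{\operatorname{tr}(P_{\mathbb{B}}\rho_0)}$, and dividing by $p_{\mathbb{A}}$ yields the stated bound with the constant $10$ to spare. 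I do not expect a genuine obstacle here; the only thing requiring care is the enumeration of which of the sixteen blocks of $R$ survive the $\Pi\mathcal{M}[\cdot]\Pi$ sandwich, and arranging the Cauchy--Schwarz estimates so that every leftover term is controlled by $\sqrt{\operatorname{tr}(P_{\mathbb{B}}\rho_0)}$ and never by $\sqrt{\operatorname{tr}(P_{\mathbb{C}}\rho_0)}$.
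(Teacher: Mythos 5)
The paper does not contain a proof of this lemma: it is stated as a result imported from Ref.~\cite{rakovszky2024bottlenecks} and used as a black box, so there is no ``paper's own proof'' to compare against. I therefore evaluate your argument on its own merits.

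Your proof is correct and self-contained. The stationarity step $\mathcal{M}[\tilde\rho]=\Pi\rho_0\Pi-\Pi\mathcal{M}[R]\Pi$ (using $\mathcal{M}[\tilde\rho]=\Pi\mathcal{M}[\tilde\rho]\Pi$, which indeed follows from $K_iP_{\mathbb{A}}=\Pi K_iP_{\mathbb{A}}$) is the right pivot, and both pieces are controlled as you say: $\|\Pi\rho_0\Pi-\tilde\rho\|_1$ consists of the three $(\mathbb{A},\mathbb{B}_1)$, $(\mathbb{B}_1,\mathbb{A})$, $(\mathbb{B}_1,\mathbb{B}_1)$ blocks and is bounded by $3\sqrt{\operatorname{tr}(P_{\mathbb{B}}\rho_0)}$; and $\Pi\mathcal{M}[R]\Pi=\Pi\mathcal{M}[R']\Pi$ because $R-R'=P_{\mathbb{C}}\rho_0P_{\mathbb{C}}+P_{\mathbb{A}}\rho_0P_{\mathbb{C}}+P_{\mathbb{C}}\rho_0P_{\mathbb{A}}$ is annihilated by the sandwich (via $\Pi K_iP_{\mathbb{C}}=0$ and its adjoint), after which CPTP contractivity, contractivity of $\Pi\cdot\Pi$, and the Hölder estimate $\|P\rho_0P'\|_1\le\|P\rho_0^{1/2}\|_2\|\rho_0^{1/2}P'\|_2$ give $\|R'\|_1\le\operatorname{tr}(Q\rho_0)+2\sqrt{\operatorname{tr}(Q\rho_0)}\le3\sqrt{\operatorname{tr}(P_{\mathbb{B}}\rho_0)}$. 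Note in passing that your $R'$ also contains blocks such as $P_{\mathbb{C}}\rho_0P_{\mathbb{B}_1}$ that are themselves annihilated by $\Pi\mathcal{M}[\cdot]\Pi$; including them costs nothing since the bound on $\|R'\|_1$ is still dominated by $\operatorname{tr}(P_{\mathbb{B}}\rho_0)$, which is the whole point. You obtain the constant $6$, sharper than the stated $10$.
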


From the Lemma, one can verify that if $\epsilon_{\mathcal{M}}(\mathbb{A})$ is exponentially small, an initial state localized in $\mathbb{A}$ will be stuck there for an exponentially long time. 

Thus, the next step is to show that $\epsilon_{\mathcal{M}}(\mathbb{A})$ is indeed exponentially small even if (quasi-)$q$-local perturbations are added. The procedures for strictly $q$-local perturbations are similar to those in~\cite{rakovszky2024bottlenecks}, except that we bound the high-energy contribution---the actual bottleneck of the Gibbs sampler $\mathcal{M}$---from Theorem.~\ref{thm:static-Energy-loc}. In addition, we can also bound the bottleneck ratio for quasi-$q$-local perturbations from Theorem~\ref{thm:static-Energy-loc}, which is beyond the methods in \cite{rakovszky2024bottlenecks}.

\appsubsection{Strictly $q$-local perturbations}

We can then choose $\mathbb{A}=w$, $\mathbb{C}$ being all the other clusters, and $\mathbb{B}$ being the high energy contribution with $\mathbb{B}_1$ contain all eigenstates of $H_C$ that can be connected by local channel $\mathcal{M}$ with eigenstates in $w$. This gives 
\begin{equation}
    P_{\mathbb{A}}=P_w,~P_{\mathbb{B}}=P_{>},~ P_{\mathbb{C}}=\sum_{w'\neq w}P_{w'},~\rho_0=\frac{1}{\mathcal{Z}}e^{-\beta H}.
\end{equation}
where $\mathcal{Z}\equiv\operatorname{tr} e^{-\beta H}$ is the partition function. Thus, we obtain
\begin{equation}
    \epsilon_{\mathcal{M}}(w)\leq 10 \sqrt{\mathcal{Z}}\frac{\sqrt{\operatorname{tr}\left( P_{>}e^{-\beta H}\right)}}{\operatorname{tr}\left( P_we^{-\beta H}\right)}.
\end{equation}
Note that we can always upper bound $\mathcal{Z}$ by 
\begin{equation}
   \mathcal{Z}\leq 2^{n}e^{-\beta(E_g-\lambda n)},
\end{equation}
where $E_g$ is the ground state energy of $H_C$ and $E_g-\lambda n$ is the lowest possible ground state energy of $H$ via Weyl's inequality. The lower bound of $\operatorname{tr}( P_w \rho_0)$ can be obtained by using $P_w=\sum_{\ket{\psi} \in w}\ket{\psi}\bra{\psi}$, so that
\begin{equation}
\begin{aligned}
\operatorname{tr}\left( P_we^{-\beta H}\right) &\geq \max_{|\psi\rangle \in w} {\operatorname{tr}( e^{-\beta H}\ket{\psi}\bra{\psi}}) \\
&\geq e^{-\beta(E_g+\lambda n)}.
\end{aligned}
\end{equation}
In the second line, we have used the fact that any state in the cluster $w$ has an energy $E\le E_g+\lambda n$. The upper bound of $\operatorname{tr}\left( P_{>}e^{-\beta H}\right)$ can be obtained by the following inequality, which is defined for an arbitrary projector $P$ to a subspace $S$
\begin{equation}
    \operatorname{tr}(PA)=\operatorname{tr}(PAP) \leq \mathrm{rank}(P) \cdot \lambda_{\max}(PAP)=\mathrm{rank}(P) \cdot \max_{\substack{u\in S \\ \|u\|=1}} \langle u|A|u\rangle\leq \mathrm{rank}(P) \cdot \max_{i,P\ket{\phi_i}\neq 0}\braket{\phi_i|A|\phi_i},
\end{equation}
where $A$ can be any Hermitian operators with all eigenstates denoted by $\ket{\phi_i}$, and we use $\lambda_{\max}(A)$ (different from the notation $\lambda$ for perturbation strength) to denote the maximum eigenvalue of a Hermitian operator $A$. The last step can be easily proved by expanding arbitrary subspace state $u$ in the eigenbasis of $A$. Similarly, if we denote $\ket{\phi_i}$ as eigenstates of perturbed Hamiltonian $H$
\begin{equation}
    \operatorname{tr}\left( P_{>}e^{-\beta H}\right) \leq 2^n \max_{i,P_{>}\ket{\phi_i}\neq 0}\Braket{\phi_i|e^{-\beta H}|\phi_i}.
\end{equation}
From Theorem~\ref{thm:static-Energy-loc}, we can bound the expectation by expanding $\ket{\phi_i}$ in the eigenbasis of $H$, where we swap the roles of $H$ and $H_C$. Suppose the energy of $\ket{\phi}$ in $H$ is $E_\phi\geq E_g+bn$, we know that $\ket{\phi}$ is exponentially localized in the energy window $\mathcal{E}_{\phi}\equiv [E_\phi-dn,E_\phi+dn]$ of $H$, where $d$ can be any constant larger than $\lambda$. By setting $\ket{\phi}=\sum_{j}a_j\ket{\psi_j}$ and $\ket{\psi_j}$ is an eigenstate of $H$ with energy $E_j$, we have the following for any $\ket{\phi}\in\{\ket{\phi_i}\}$.
\begin{equation}
\label{eq:phie^betaH=sum+sum-app}
    \braket{\phi|e^{-\beta H}|\phi}=\sum_{j,E_j\in\mathcal{E}_{\phi}} |a_j|^2 e^{-\beta E_j}+\sum_{j,E_j\notin\mathcal{E}_{\phi}} |a_j|^2 e^{-\beta E_j}    \leq e^{-\beta (E_\phi -dn)}+e^{-\beta E_g}\epsilon^{(2)}_{\lambda,\Delta_q}(d).
\end{equation}
Thus, we have
\begin{equation}
    \mathrm{tr}\left( P_{>}e^{-\beta H}\right) \leq e^{n \ln 2-\beta E_g}\left(e^{-\beta(b-d)n}+\epsilon^{(2)}_{\lambda,\Delta_q}(d)\right).
\end{equation}
Then we can bound $\epsilon_{\mathcal{M}}(w)$ by
\begin{equation}
    \frac{\epsilon_{\mathcal{M}}(w)^2}{100}\leq e^{n(2 \ln 2 +3\beta \lambda) }\left(e^{-\beta(b-d)n}+\epsilon^{(2)}_{\lambda,\Delta_q}(d)\right).
\end{equation}
For convenience, we choose $d=b/4$, we have
\begin{equation}
    \epsilon_{\mathcal{M}}(w)\leq e^{-\xi n},
\end{equation}
where
\begin{equation}
\label{eq:xi-Gibbs-q-app}
    \xi\equiv\min\left(\frac{3\beta(b-4\lambda)}{8},\frac{\frac{b}{4}\ln\frac{b}{4e\lambda}+\lambda}{2\Delta_q}-\frac{3}{2}\beta\lambda\right)-\ln2.
\end{equation}
To ensure that $\xi>0$, we require that
\begin{equation}
\label{eq:lambda-Gibbs-q-app}
    \beta>\frac{8\ln2}{3b},~\lambda<\min\left(\frac{b}{4e}e^{-\Delta_q(3\beta+\frac{8\ln2}{b})},\frac{3\beta b}{4}-2\ln 2\right).
\end{equation}

\appsubsection{Quasi-$q$-local perturbations}

One can verify that for Quasi-$q_{\star}$-local perturbation $V_0$, most of the steps are similar to strictly $q$-local cases, except upper bounding $\mathrm{tr}(P_{>}e^{-\beta H})$. This can be similarly bounded by
\begin{equation}
    \operatorname{tr}\left( P_{>}e^{-\beta H}\right) \leq 2^n \max_{P_{>}\ket{\phi}\neq 0}\Braket{\phi|e^{-\beta H}|\phi}.
\end{equation}
From Theorem~\ref{thm:static-Energy-loc} and Eq.~\eqref{eq:phie^betaH=sum+sum-app} we know that 
\begin{equation}
    \braket{\phi|e^{-\beta H}|\phi}
    \leq e^{-\beta (E_\phi -dn)}+e^{-\beta E_g}\epsilon^{(1)}_{\lambda,\Delta_{q_\star}}(d),
\end{equation}
where the only difference is replacing $\epsilon^{(2)}_{\lambda,\Delta_q}$ with $\epsilon^{(1)}_{\lambda,\Delta_{q_\star}}$. Similarly, we have
\begin{equation}
    \frac{\epsilon_{\mathcal{M}}(w)^2}{100}\leq e^{n(2 \ln 2 +3\beta \lambda) }\left(e^{-\beta(b-d)n}+\epsilon^{(1)}_{\lambda,\Delta_{q_{\star}}}(d)\right).
\end{equation}
For convenience, we choose $d=b/4$, we have
\begin{equation}
    \epsilon_{\mathcal{M}}(w)\leq e^{-\xi n},
\end{equation}
where
\begin{equation}
\label{eq:xi-Gibbs-QuasiQ-app}
    \xi\equiv\min\left(\frac{3\beta(b-4\lambda)}{8},\frac{\frac{b}{4}-\lambda\ln\frac{be}{4\lambda}}{2\Delta_{q_{\star}}}-\frac{3}{2}\beta\lambda\right)-\ln2.
\end{equation}
To ensure that $\xi>0$, we can choose the following requirements: 
\begin{equation}
\label{eq:lambda-Gibbs-QuasiQ-app}
    \beta>\frac{8\ln2}{3b},~\lambda<\min\left(\frac{b}{4e},\frac{3\beta b}{4}-2\ln 2\right),~q_{\star}<\frac{b}{2M}\cdot\frac{e-2}{2e\ln 2+3\beta b}.
\end{equation}

\appsection{Some bounds for nested commutators}
\label{app:proofCommute}
In this section, we provide some useful bounds for the nested commutator $\mathrm{ad}_{H}^m (V)$ along with their asymptotic behavior under different conditions.

\appsubsection{$H$ is $p$-local and $V$ is $q$-local}
We consider a $p$-local $H$ and a $q$-local $V$. They can be written in terms supported on different sites:
\begin{equation}
    H=\sum_{A}h_A, ~V=\sum_{B}v_B,
\end{equation}
where each support contains at most $p$ sites for $A$ and at most $q$ for $B$. In addition, we require $H$ to satisfy 
\begin{equation}
    \sum_{A\ni i}\|h_A\| \leq M
\end{equation}
for any site $i$. We first consider the case where $q<p$. Define $\Delta_{p-1}\equiv2(p-1)M$, the norm of the nested commutators are bounded by
\begin{equation}
    \left\|\mathrm{ad}_{H}^m (V)\right\|\leq m!\Delta_{p-1}^m \left\|V\right\|_{\textnormal{X}},
\end{equation}
where $\|V \|_{\textnormal{X}}\equiv\sum_{X}\|v_X\|$. We do this by induction. For $m=0$ the inequality is trivial. For $m=1$, we have
\begin{equation}
    \mathrm{ad}_{H} (V)=\sum_{B}\sum_{A,A\cap B\neq \emptyset} [h_A,v_B].
\end{equation}
For any single term $v_B$ on the support $B$, it contains $q$ sites and $q\leq p-1$, so
\begin{equation}
\begin{aligned}
    &\left\|\sum_{A,A\cap B\neq \emptyset} [h_A,v_B]\right\|\leq\sum_{i,i \in B}\sum_{A,A\ni i}\left\| [h_A,v_B]\right\|\\
    &\leq \sum_{i,i \in B}2M\|v_B\|\leq 2(p-1)M\|v_B\|= \Delta_{p-1}\|v_B\|,
\end{aligned}
\end{equation} 
Thus,
\begin{equation}
    \|\mathrm{ad}_{H} (V)\|\leq \sum_{B}\Delta_{p-1}\|v_B\|= \Delta_{p-1} \|V\|_{\textnormal{X}}.
\end{equation}
Now we consider $m=2$. Note that we can still write $\mathrm{ad}_H(V)$ as 
\begin{equation}
    \mathrm{ad}_H(V)=\sum_{B} v^{(1)}_B,
\end{equation}
where
\begin{equation}
    v^{(1)}_{B^{(1)}}\equiv v_{B}^{(1)}=\sum_{A,A\cap B\neq \emptyset} [h_A,v_B].
\end{equation}
We use $v^{(1)}_{B^{(1)}} $ to replace the notion of $ v^{(1)}_B$ because it is now supported on $B^{(1)}$, which is an extension of original $B$ and contains at most $2(p-1)$ sites (since $H$ is $p$-local). By the above analysis, we know that the norm of $v^{(1)}_{B^{(1)}} $ is bounded by $\Delta_{p-1} \|v_B\|$. Thus,
\begin{equation}
    \mathrm{ad}_{H}^2 (V)=\sum_{B^{(1)}}\sum_{A,A\cap B^{(1)}\neq \emptyset} [h_A,v^{(1)}_{B^{(1)}}].
\end{equation}
By similar procedure, we obtain
\begin{equation}
    \|\mathrm{ad}_{H}^2 (V)\| \leq 4(p-1)M \Delta_{p-1} \|V\|_{\textnormal{X}}=2 \Delta_{p-1}^2 \|V\|_{\textnormal{X}}.
\end{equation}
Since $B^{(k)}$ grows with at most $(p-1)$ sites from $B^{(k-1)}$, $B^{(k)}$ is supported on at most $k(p-1)$ sites. Then the factor multiplied at $k$-th step is $2k(p-1)M=k\Delta_{p-1}$. Thus, by induction we have
\begin{equation}
    \left\|\mathrm{ad}_{H}^m (V)\right\|\leq m!\Delta_{p-1}^m \left\|V\right\|_{\textnormal{X}}.
\end{equation}
For $q\geq p$, by the similar procedure as above we can obtain the bound
\begin{equation}
\begin{aligned}
    \left\|\mathrm{ad}_{H}^m (V)\right\|&\leq \frac{\Gamma(\frac{q}{p-1}+m)}{\Gamma(\frac{q}{p-1})}\Delta_{p-1}^m \left\|V\right\|_{\textnormal{X}}\\ &<m!\Delta_q^m \left\|V\right\|_{\textnormal{X}}.
\end{aligned}
\end{equation}
Thus, if $H+V$ is $k$-local, we can always have
\begin{equation}
\label{eq:klocalCommuteBound-app}
    \left\|\mathrm{ad}_{H}^m (V)\right\|< m!\Delta_k^m \left\|V\right\|_{\textnormal{X}}.
\end{equation}

\appsubsection{$H$ is mutually commuting and $V$ is $q$-local}

Here we assume that the Hamiltonian $H$ can be decomposed as
\begin{equation}
    H=\sum_{X} h_X,
\end{equation}
where each $h_X$ is supported on $X$ and commutes with each other but can couple arbitrarily many sites, as long as 
\begin{equation}
    \sum_{X,X\ni i}\|h_X\|\leq M.
\end{equation} 
For $m=1$, we have
\begin{equation}
    \mathrm{ad}_{H} (V)=\sum_{B}\sum_{A,A\cap B\neq \emptyset} [h_A,v_B].
\end{equation}
We can obtain the following
\begin{equation}
\begin{aligned}
    &\left\|\sum_{A,A\cap B\neq \emptyset} [h_A,v_B]\right\|\leq\sum_{i,i \in B}\sum_{A,A\ni i}\left\| [h_A,v_B]\right\|\\
    &\leq \sum_{i,i \in B}2M\|v_B\|= 2qM\|v_B\|= \Delta_q\|v_B\|,
\end{aligned}
\end{equation} 
where we define $\Delta_q\equiv 2qM$. For $m=2$, we can still write $\mathrm{ad}_H(V)$ as 
\begin{equation}
    \mathrm{ad}_H(V)=\sum_{B} v^{(1)}_B,
\end{equation}
where
\begin{equation}
    v_{B}^{(1)}=\sum_{A,A\cap B\neq \emptyset} [h_A,v_B].
\end{equation}
Now we consider when $[H_C, v_{B}^{(1)}]$ can be non-zero. If $C\cap B= \emptyset$, then $[H_C,v_B]=0$. By definition, we also have $[H_C,h_A]=0$ for arbitrary $A$. So $[H_C, v_{B}^{(1)}]=0$ if $C\cap B= \emptyset$. Thus, although the support of $v_{B}^{(1)}$ expands from $v_B$, $[H_C, v_{B}^{(1)}]$ is nonzero only if $C\cap B\neq \emptyset$. Thus, we can obtain
\begin{equation}
    \mathrm{ad}_{H}^2 (V)=\sum_{B}\sum_{C,C\cap B\neq \emptyset} [H_C,v_B^{(1)}],
\end{equation}
and thus bound the terms for each initial support $B$ by
\begin{equation}
\begin{aligned}
    &\left\|\sum_{C,C\cap B\neq \emptyset} [H_C,v_B^{(1)}]\right\|\leq\sum_{i,i \in B}\sum_{C,C\ni i}\left\| [H_C,v_B^{(1)}]\right\|\\
    &\leq \sum_{i,i \in B}2M\|v_B^{(1)}\|= 2qM\|v_B^{(1)}\|\leq \Delta_q^2\|v_B\|.
\end{aligned}
\end{equation} 
Thus, by induction we have
\begin{equation}
    \left\|\mathrm{ad}_{H}^m (V)\right\|\leq \Delta_q^m \left\|V\right\|_{\textnormal{X}}.
\end{equation}

\appsubsection{$H$ is mutually commuting and $V$ is quasi-$q$-local}

By the definition of quasi-$q$-local $V=\sum_{B}v_B$ in Appendix~\ref{app:normsTVs}, we have 
\begin{equation}
    \sum_{B,|B|=k}\|v_B\|\leq \frac{\|V\|_{q\textnormal{-X}}}{q+1} e^{-k/q}.
\end{equation}
where the additional factor $\frac{1}{q+1}$ ensures that the total termwise norm $\|V\|_{\mathrm{X}}$ will be bounded by $\|V\|_{q\textnormal{-X}}$ regardless of $q$. Using Eq.~(\ref{eq:klocalCommuteBound-app}), we have the following bounds
\begin{equation}
\begin{aligned}
    \left\|\mathrm{ad}_{H}^m (V)\right\|&\leq (2M)^m\sum_{k=1}^n k^m\sum_{B,|B|=k}\left\|v_{B}\right\|\\
    &\leq\|V\|_{q\textnormal{-X}} \frac{(2M)^m}{q+1}\sum_{k=1}^n k^m e^{-k/q}.    
\end{aligned}
\end{equation}
Here we give a rough but simple estimation to bound the summation at RHS by $m!$. Note that $x^m e^{-x/q}$ is a unimodal function, and for a unimodal function $f(x)$ we have $\sum_{x=a}^b f(x)\leq \int_{a}^bf(x)\mathrm{d}x+\max_{x\in[a,b]} f(x)$. Then we have
\begin{equation}
\begin{aligned}
 \sum_{k=1}^\infty k^m &e^{-k/q}\leq \int_{0}^\infty  x^m e^{-x/q}\mathrm{d}x +\max_k k^m e^{-k/q}\\
    &\leq q^{m+1}m!+(mq/e)^m\\
    &=(q+1)q^m m!(\frac{q}{q+1}+\frac{1}{q+1}\frac{m^m}{e^m m!}).
\end{aligned}
\end{equation}
One can verify that $\frac{m^m}{e^m m!}$ is strictly decreasing for integer $m\geq 1$, so $\frac{m^m}{e^m m!}\leq\frac{1}{e}$, Then we have
\begin{equation}
\begin{aligned}
 \frac{1}{q+1}\sum_{k=1}^\infty k^m e^{-k/q}&\leq q^m m!\frac{q+1/e}{q+1}\\
    &< q^m m!
\end{aligned}
\end{equation}
Thus, we have
\begin{equation}
    \left\|\mathrm{ad}_{H}^m (V)\right\|< \Delta_{q}^m m!\|V\|_{q\text{-X}}.
\end{equation}

\end{document}